\documentclass[twocolumn, aps, pra, showpacs, superscriptaddress, floatfix, nofootinbib, 10pt]{revtex4-2}
\usepackage[colorlinks=true, allcolors=blue]{hyperref}
\usepackage[utf8]{inputenc}
\usepackage{physics,parskip}
\usepackage{graphicx}
\usepackage{amsmath,amssymb,amsthm}
\usepackage{xcolor}
\usepackage{bbold}
\usepackage{enumerate}
\usepackage{enumitem,tasks}
\usepackage{microtype}
\usepackage{tikz}
\usetikzlibrary{decorations.markings,quantikz}
\usetikzlibrary{fit,shapes.arrows}

\settasks{ 
    label = \roman*.,
    label-format = {\itshape}
}

\newcommand{\GHZ}{\mathrm{GHZ}}

\newcommand{\GME}{\mathrm{GME}}
\newcommand{\GMNL}{\mathrm{GMNL}}

\usepackage[normalem]{ulem}

\begin{abstract}
Bell's theorem proves that some quantum state correlations can only be explained by bipartite nonclassical resources.
The notion of genuinely multipartite nonlocality (GMNL) was later introduced to conceptualize the fact that nonclassical resources involving more than two parties in a nontrivial way may be needed to account for some quantum correlations. In this letter, we first recall the contradictions inherent to the historical definition of $\GMNL$.
Second, we turn to one of its redefinitions, called Local-Operations-and-Shared-Randomness $\GMNL$ (LOSR-GMNL), proving that all caterpillar graph states (including cluster states) have this second property.
Finally, we conceptualize a third, alternative definition, which we call Local-Operations-and-Neighbour-Communication $\GMNL$ (LONC-GMNL), that is adapted to situations in which short-range communication between some parties might occur. We show that cluster states do not have this third property, while GHZ states do.

Beyond its technical content, our letter illustrates that rigorous conceptual work is needed before applying the concepts of genuinely multipartite nonlocality, genuine multipartite entanglement or entanglement depth to benchmark the nonclassicality of some experimentally-produced quantum system. 
We note that most experimental works still use witnesses based on the historical definitions of these notions, which fail to reject models based on bipartite resources.

\end{abstract}

\begin{document}
\title{The genuinely multipartite nonlocality of graph states is model-dependent}
\author{Xavier Coiteux-Roy}
\thanks{These two authors contributed equally.}
\affiliation{School of Computation, Information and Technology, Technical University of Munich \& Munich Center for Quantum Science and Technology (MCQST), Munich, Germany.}  

\author{Owidiusz Makuta}
\thanks{These two authors contributed equally.}
\affiliation{Center for Theoretical Physics, Polish Academy of Sciences, Aleja Lotnik\'{o}w 32/46, 02-668 Warsaw, Poland.}

\author{Fionnuala Curran}
\affiliation{ICFO-Institut de Ciencies Fotoniques, The Barcelona Institute of Science and Technology, Castelldefels (Barcelona), Spain.}

\author{Remigiusz Augusiak}
\affiliation{Center for Theoretical Physics, Polish Academy of Sciences, Aleja Lotnik\'{o}w 32/46, 02-668 Warsaw, Poland.}

\author{Marc-Olivier Renou}
\affiliation{Inria Saclay, Bâtiment Alan Turing, 1, rue Honoré d’Estienne d’Orves 91120 Palaiseau.}
    \affiliation{CPHT, Ecole polytechnique, Institut Polytechnique de Paris, Route de Saclay – 91128 Palaiseau.}

\date{\today}
\newtheorem{prop}{Proposition}
\newtheorem{lemma}{Lemma}
\newtheorem{theorem}{Theorem}
\newtheorem{corollary}{Corollary}
\newtheorem{definition}{Definition}
\maketitle

\begin{figure*}
    \centering
 \centerline{
 {\hspace{-30pt}
    \resizebox{1.1\textwidth/2}{!}{\input{Figure1a.tex}}
    }\hspace{-20pt}
 {
    \resizebox{11\textwidth/54}{!}{\definecolor{myred}{RGB}{204,51,17}
\definecolor{myblue}{RGB}{0,119,187}
\definecolor{mygrey}{RGB}{135,135,135}
\definecolor{myteal}{RGB}{80,175,148}
\definecolor{myorange}{RGB}{238,119,51}
\definecolor{mymagenta}{RGB}{238,51,110}

\tikzset{player/.style={circle,draw=black,inner sep=0, minimum size=20pt,fill=white,thick
 }}

\tikzset{sourcea/.style={circle, fill=myred,inner sep=10/3pt,outer sep=0pt}}
\tikzset{sourceb/.style={circle, fill=myblue,inner sep=10/3pt,outer sep=0pt}}
\tikzset{sourcec/.style={circle, fill=myorange,inner sep=10/3pt,outer sep=0pt}}
\tikzset{sourced/.style={circle, fill=myteal,inner sep=10/3pt,outer sep=0pt}}

\tikzset{ligne/.style={very thick
}}

\centering
\begin{tikzpicture}[scale=0.2]
\centering
\usetikzlibrary{decorations.pathreplacing}
\node[player] (A0) at (5,8.66) {$A$}; %
\node[player] (B0) at (0,0) {$B$};
\node[player] (C0) at (10,0) {$C$};

\node[sourcec] (SA0) at (2.5,4.33) { }; %
\node[sourceb] (SB0) at (7.5,4.33) { };%
\node[sourced] (SC0) at (5,0) { };%

\draw[ligne,to-] (A0) -- (SA0);
\draw[ligne,to-] (B0) -- (SA0);
\draw[ligne,to-]  (A0) -- (SB0);
\draw[ligne,to-] (C0) -- (SB0);
\draw[ligne,to-]  (B0) -- (SC0);
\draw[ligne,to-] (C0) -- (SC0);

\begin{scope}[on background layer]
\node[inner sep=1pt] (L) at (5,3.5) {\large $\textcolor{mygrey}{{\lambda}}$};
\draw[ligne,-to,mygrey] (L) -- (A0);
\draw[ligne,-to,mygrey] (L) -- (B0);
\draw[ligne,-to,mygrey] (L) -- (C0);
\end{scope}

\node[anchor=center,inner sep=2pt] (inputA) at (2.5,8.66+3) {$\textcolor{myred}{x}$};
\node[anchor=center,inner sep=2pt] (outputA) at (5+2.5,8.66+3) {$\textcolor{myred}{a}$};
\node[anchor=center,inner sep=1pt] (inputB) at (0-2.5,-3) {$\textcolor{myred}{y}$};
\node[anchor=center,inner sep=2pt] (outputB) at (0+2.5,-3) {$\textcolor{myred}{b}$};
\node[anchor=center,inner sep=2pt] (inputC) at (10-2.5,0-3) {$\textcolor{myred}{z}$};
\node[anchor=center,inner sep=2pt] (outputC) at (10+2.5,0-3) {$\textcolor{myred}{c}$};

\draw[thick,-to,myred] (inputA) to[bend right] (A0);
\draw[thick,to-,myred] (outputA) to[bend left] (A0);
\draw[thick,-to,myred] (inputB) to[bend left] (B0);
\draw[thick,to-,myred] (outputB) to[bend right] (B0);
\draw[thick,-to,myred] (inputC) to[bend left] (C0);
\draw[thick,to-,myred] (outputC) to[bend right] (C0);

\node[align=left,anchor=north] at (5,-4.5) {b. Causal model of the\\ LOSR redefinition};
\end{tikzpicture}}
    }\hspace{-20pt}
    {
    \resizebox{2.1\textwidth/6}{!}{\definecolor{myred}{RGB}{204,51,17}
\definecolor{myblue}{RGB}{0,119,187}
\definecolor{mygrey}{RGB}{135,135,135}
\definecolor{myteal}{RGB}{80,175,148}
\definecolor{myorange}{RGB}{238,119,51}
\definecolor{mymagenta}{RGB}{238,51,110}

\tikzset{player/.style={circle,draw=black,inner sep=0, minimum size=20pt,fill=white,thick
 }}

\tikzset{sourcea/.style={circle, fill=myred,inner sep=10/3pt,outer sep=0pt}}
\tikzset{sourceb/.style={circle, fill=myblue,inner sep=10/3pt,outer sep=0pt}}
\tikzset{sourcec/.style={circle, fill=myorange,inner sep=10/3pt,outer sep=0pt}}
\tikzset{sourced/.style={circle, fill=myteal,inner sep=10/3pt,outer sep=0pt}}

\tikzset{ligne/.style={very thick
}}

\centering
\begin{tikzpicture}[scale=0.2]
\centering

\usetikzlibrary{decorations.pathreplacing}
\tikzset{ligne/.style={very thick,-to
}}

\node[player] (A0) at (0,0) {$A_{t_0}$}; 
\node[player] (B0) at (7.5,0) {$B_{t_0}$};
\node[player] (C0) at (15,0) {$C_{t_0}$};
\node[player] (D0) at (22.5,0) {$D_{t_0}$};
\node[] (E0) at (30,0) {$\Large \mathbf{\dots}$};

\node[player] (A1) at (0,-6) {$A_{t_1}$}; 
\node[player] (B1) at (7.5,-6) {$B_{t_1}$};
\node[player] (C1) at (15,-6) {$C_{t_1}$};
\node[player] (D1) at (22.5,-6) {$D_{t_1}$};
\node[] (E1) at (30,-6) {$\Large \mathbf{\dots}$};

\node[player] (A2) at (0,-12) {$A_{t_2}$}; 
\node[player] (B2) at (7.5,-12) {$B_{t_2}$};
\node[player] (C2) at (15,-12) {$C_{t_2}$};
\node[player] (D2) at (22.5,-12) {$D_{t_2}$};
\node[] (E2) at (30,-12) {$\Large \mathbf{\dots}$};

\draw[ligne] (A0) -- (A1);
\draw[ligne] (A0) -- (B1);
\draw[ligne] (B0) -- (B1);
\draw[ligne] (B0) -- (C1);
\draw[ligne] (C0) -- (C1);
\draw[ligne] (C0) -- (D1);
\draw[ligne] (D0) -- (D1);
\draw[ligne] (D0) -- (E1);

\draw[ligne] (A1) -- (A2);
\draw[ligne] (A1) -- (B2);
\draw[ligne] (B1) -- (B2);
\draw[ligne] (B1) -- (C2);
\draw[ligne] (C1) -- (C2);
\draw[ligne] (C1) -- (D2);
\draw[ligne] (D1) -- (D2);
\draw[ligne] (D1) -- (E2);

\begin{scope}[on background layer]
\node[inner sep=1pt] (L) at (17.5,3.5) {\large $\textcolor{mygrey}{{\lambda}}$};
\draw[ligne,-to,mygrey] (L) -- (A0);
\draw[ligne,-to,mygrey] (L) -- (B0);
\draw[ligne,-to,mygrey] (L) -- (C0);
\draw[ligne,-to,mygrey] (L) -- (D0);
\draw[ligne,-to,mygrey] (L) -- (E0);
\end{scope}

\node[inner sep=2pt] (inputA) at (0,4.5) {$\textcolor{myred}{x}$};
\node[inner sep=2pt] (inputB) at (7.5,4.5) {$\textcolor{myred}{y}$};
\node[inner sep=2pt] (inputC) at (15,4.5) {$\textcolor{myred}{z}$};
\node[inner sep=2pt] (inputD) at (22.5,4.5) {$\textcolor{myred}{w}$};
\node[inner sep=2pt] (outputA) at (0,-16.5) {$\textcolor{myred}{a}$};
\node[inner sep=2pt] (outputB) at (7.5,-16.5) {$\textcolor{myred}{b}$};
\node[inner sep=2pt] (outputC) at (15,-16.5) {$\textcolor{myred}{c}$};
\node[inner sep=2pt] (outputD) at (22.5,-16.5) {$\textcolor{myred}{d}$};

\draw[thick,-to,myred] (inputA) -- (A0);
\draw[thick,-to,myred] (inputB) -- (B0);
\draw[thick,-to,myred] (inputC) -- (C0);
\draw[thick,-to,myred] (inputD) -- (D0);
\draw[thick,to-,myred] (outputA) -- (A2);
\draw[thick,to-,myred] (outputB) -- (B2);
\draw[thick,to-,myred] (outputC) -- (C2);
\draw[thick,to-,myred] (outputD) -- (D2);

\node[align=left,anchor=north] at (15,-18) {c. Novel LONC explanatory causal model,\\ inspired by distributed computing};
\end{tikzpicture}}
    }}
 \caption{
    Three different formulations of the notion of genuine multipartite nonlocality. \\
    a. Svetlichny's LOCC causal explanatory model defines as $\GMNL$ the correlations $\vec{P}=\{p(a,b,c|x,y,z)\}$ that cannot be obtained in a causal structure in which one stochastically chosen party (selected by the die) is independent from the other two parties (e.g. if the die rolls $\mu_4$, then $B$ is independent from $A,C$). Svetlichny's $\GMNL$ correlations can thus be written as $\vec{P}\neq\int \textrm{d}\lambda\: \vec{Q}_{AB}^\lambda\vec{Q}_{C}^\lambda + \int \textrm{d}\mu\: \vec{R}_{BC}^\mu\vec{R}_{A}^\mu + \int \textrm{d}\nu\: \vec{S}_{AC}^\nu\vec{S}_{B}^\nu
$, where e.g. $\vec{Q}_{AB}^\lambda=\{q^\lambda(a,b|x,y)\}_{\lambda}$ may be a signalling distribution.\\
b. In the LOSR redefinition~\cite{PhysRevLett.127.200401},
$\vec{P}=\{p(a,b,c|x,y,z)\}$ is $\GMNL$ if it is not compatible with the represented causal model, where the principles of causality (also called No-Signalling and Independence) and of device-replication are assumed (see the proof of Theorem~\ref{theo: 4-cluster} and~Fig.~\ref{fig:intuitionc4}). Here, the shared randomness $\lambda$ is tripartite and the three bipartite sources are arbitrary no-signalling resources.\\
c. In this letter, we introduce the Local-Operations-and-Neighbour-Communication (LONC) framework, in which we define the LONC-$\GMNL$ correlations for a given underlying graph. Here we illustrate the correlations $\vec{P}=\{p(a,b,c,d,\dots|x,y,z,w,\dots)\}$ that can be generated in two rounds of one-way synchronous communication on a path (i.e. the correlations are LONC-$\GMNL_2$ for the directed path).}
    \label{fig:CausalExplanatoryModels}
\end{figure*}

\emph{Introduction---} 
Bell’s 1964 theorem~\cite{bell1964einstein} proved that two space-like separated parties can produce nonlocal correlations by performing local measurements on an appropriately entangled quantum system. This discovery of the nonlocality of quantum correlations has had a profound and lasting impact. 
At a fundamental level, quantum nonlocal correlations resist explanations in terms of local-hidden-variable models: they can score higher at Bell games (e.g. the CHSH game) than classical correlations~\cite{CHSHoriginal,mermin1990quantum,coiteux2019rgb}.
In terms of concrete applications, Bell nonlocality allows for the certification of the properties or functionalities of quantum devices (e.g. entanglement of a state~\cite{mayers2004selftesting, Supic2020SelfTestingReview, Supic2023QuantNetwSelfTestAllEntStates} or a measurement \cite{Renou2018SelfTestingEntMeas, Bancal2018DICertificationBSM}, security of a quantum key distribution protocol~\cite{Acin2007DIQKD,Zhang2022DIQKDExperiment1,Nadlinger2022DIQKDExperiment2}) based solely on the objective operational correlations between experimental events.

In this \emph{device-independent} (DI) paradigm~\cite{scarani2019bellNonlocality}, the certification is obtained without making any assumptions on the measurement apparatus. Here, one does not even make any assumptions on the physical theory governing them (only assuming it satisfies causality~\cite{ArianoOPT}), allowing one to consider and test alternative descriptions of nature.
As a result, DI is seen as the most secure way to assess the progress of quantum technologies, e.g. to benchmark a quantum device without needing to trust its designer or the measurement apparatus used to characterize it.
This contrasts with the \emph{device-dependent} (DD) paradigm, where quantum theory is accepted (e.g. Tsirelson's $2\sqrt{2}$ bound for the CHSH game holds, while a score of up to 4 is possible in the DI paradigm~\cite{popescu1997PRbox})  and the detectors are trusted. Intermediary concepts of semi-device-independence have also been introduced~\cite{Pawwlowski2011SemiDI,VanHimbeeck2017SemiDI}.

The ability to manipulate large nonclassical quantum systems is seen as a key resource in many applications of quantum theory.
The desire to certify this ability in a DI way has motivated the concept of 
\emph{nonlocality of depth $n$} (which we abbreviate as $\GMNL_{n}$). Systems that are $\GMNL_{n}$ are systems that produce correlations whose nonlocality cannot be understood as being `obtained by combining many states composed of at most $(n-1)$-partite constituents'. Furthermore, when a system of $n$ parties exhibits $\GMNL_{n}$ correlations, we say the system is (maximally) \textit{genuinely multipartite nonlocal} (GMNL). A typical example of a state that should intuitively produce $\GMNL_3$ (hence $\GMNL$) correlations is the three-party $\ket{\GHZ_3}=(\ket{000}+\ket{111})/\sqrt{2}$ state, while, by contrast, the state $\ket{\phi^ +}\otimes\ket{0}$, which is composed of an EPR pair and an (independent) ancillary qubit, should intuitively produce only $\GMNL_2$ correlations.

The task of extending this preliminary idea into a general classification of all states, that is, proposing a universal mathematical definition for the concept of $\GMNL$, is not trivial.
It is, however, of critical importance, as many experimental works assess the large nonclassicality of their experimentally-produced systems based on the notion of GMNL~\cite{Riedel2010GMEAtomChip,Lucke2014GMEDickeState, Cao2023GMESupercondQubits,Bornet2023GMESpinSqueezing}.
An appropriate definition requires first the introduction of a concrete physical model expressing what is meant by `obtained by combining many states composed of at most $(n-1)$-partite constituents,' which we call the \emph{causal explanatory model}. 
Then, by definition, a state producing correlations that cannot be explained in this causal explanatory model will be called $\GMNL_n$.
The first historical definition of $\GMNL$ was proposed by Svetlichny \cite{PhysRevD.35.3066} (see Fig.~\ref{fig:CausalExplanatoryModels}a for a detailed description) in the context of Local Operations and Classical Communication (LOCC). 
It characterizes $\GMNL$ as the direct DI counterpart to the DD definition by Seevinck and Uffink of \emph{genuine multipartite entanglement}~\cite{Seevinck2001GME}. 
As we detail later, the LOCC definition by Svetlichny, as well as the one by Seevinck and Uffink, are known (since at least a decade ago~\cite{Gallego2012GMNLisBAD}) to be paradoxical when used to assess the large nonclassical nature of practical quantum systems. 
This has motivated the redefinition of these concepts in the framework of Local-Operations-and-Shared-Randomness (LOSR), which resulted in a total redefinition of the underlying causal explanatory model to be excluded~\cite{Navascu_s_2020,PhysRevLett.127.200401}, inspired by the concept of quantum network nonlocality~\cite{Tavakoli2022ReviewNonlocality, BeyondBellII, Renou2022NonlocForGenericNetworks,Renou2019GenuineQNonlocalityTriangle,Branciard2010BilocalInequality, Rosset2016BilocalInequalityAndMore}. The LOSR-$\GMNL$ definition is detailed in Fig.~\ref{fig:CausalExplanatoryModels}b.

Very few states are known to produce $\GMNL$ correlations under the LOSR framework.
In \cite{Navascu_s_2020} and \cite{PhysRevLett.127.200401} it was proven that $\ket{\GHZ}$ is $\GME$ and $\GMNL$ according to the redefinitions of Fig.~\ref{fig:CausalExplanatoryModels}b, which we denote by LOSR-$\GME$ and LOSR-$\GMNL$ respectively. The existence of LOSR-$\GMNL$ correlations has also been demonstrated experimentally~\cite{Cao2022ExperimentGMNL,Mao2022ExperimentGMNLJingyunFan}.
Recent results have shown that $n$-partite graph states, a well-studied family of multipartite quantum states~\cite{hein2004multiparty} which we recall below, are LOSR-$\GME_{3}$ \cite{Makuta2023-kj, wang2022quantum}.
Importantly, while appropriate in some situations, the new LOSR definition of Fig.~\ref{fig:CausalExplanatoryModels}b may not be suitable for all experiments. 
One could consider, for example, a
1$D$ condensed matter system, in which subsystems are so close to each other that short-range communication between adjacent sites cannot be ruled out. In such a scenario, an alternative causal explanatory model should be used which allows for communication between neighbours.

In this letter, we focus our analysis on the (more demanding) DI paradigm, discussing the implications for the (weaker) DD paradigm when applicable. We show that all cluster states, and in fact, all \emph{caterpillar graph states} (shown in Fig.~\ref{fig:graphstates}), are LOSR-$\GMNL$, in a noise-robust way. Our results improve significantly on previous techniques, which could only provide lower bounds of degree~3 on the nonlocality of those states, and were restricted to the weaker device-dependent setting of $\GME$~\cite{Makuta2023-kj, wang2022quantum}.
This is thus the main technical contribution of our letter. 
Second, motivated by experimental considerations, we introduce a new framework allowing Local Operations and Neighbour Communication (LONC). 
Its associated causal explanatory model (see Fig.~\ref{fig:CausalExplanatoryModels}c) follows $t$ communication steps along a given network. 
This leads us to a new notion of certifiable multipartite nonlocality which we call LONC-$\GMNL$, according to which states can be classified quantitatively as LONC-$\GMNL_t$ for a given network. We prove that, in this new LONC-$\GMNL$ definition, $\ket{\GHZ}$ and cluster states belong respectively to opposite ends of the complexity spectrum: for the directed path communication graph, $\ket{\GHZ}$ is maximally LONC-$\GMNL$ while the cluster state is only LONC-$\GMNL_2$ (this finding improves a result of~\cite{le2019quantum}). Characterizing this distinction by introducing the LONC framework is the main conceptual contribution of our letter.

\emph{LOCC and LOSR lead to two different notions of $\GMNL$---} 
The historical Svetlichny definition of $\GMNL$ is anchored in the LOCC framework and is based on the explanatory model of Fig.~\ref{fig:CausalExplanatoryModels}a. 
There are two important issues with this model.
First, multiple copies of few-partite states distributed among different parties can simulate many-partite $\GMNL$. For example, the composition $\ket{\phi^ +}_{AB_1}\otimes\ket{\phi^ +}_{B_2C}$ of two bipartite EPR states, where $A,C$ respectively own one qubit and $B$ owns two qubits, is Svetlichny-$\GMNL_3$ despite comprising only bipartite sources (more generally, $n$ copies of $\ket{\phi^ +}$ can be Svetlichny-$\GMNL_{n+1}$). 
This is highly problematic in a device-independent context where the large nonclassicallity of a quantum source should be benchmarked without trusting its designer or the measurement apparatus. Indeed, a designer only able to produce bipartite $\ket{\phi^ +}$ states could claim to produce $\GMNL_{n}$ states with arbitrarily large $n$~\footnote{Note that while the bipartite structure of $\ket{\phi^ +}_{AB_1}\otimes\ket{\phi^ +}_{B_2C}$ is apparent, $B$ could obfuscate that structure by encoding their state in a four-level system and applying some local unitary.}.
Second, more conceptually, this definition involves signalling distributions (see caption of Fig.~\ref{fig:CausalExplanatoryModels}b), which is not physically motivated. 
While the latter shortcoming was addressed in~\cite{bancal2013definitions}, patching the first issue required a shift of the concept of $\GMNL$ from the LOCC framework towards the LOSR framework.
This led to the introduction of the causal explanatory model which we detail in Fig.~\ref{fig:CausalExplanatoryModels}b. 

\emph{Graph states and cluster states---} 
Graph states are a family of states corresponding to graphs where vertices represent subsystems and edges determine the entanglement between these subsystems (see Fig.~\ref{fig:graphstates}).
Due to their inherent symmetries, as well as due to the simplicity of applying Clifford gates on graph states, these states have found numerous applications in quantum computation~\cite{PhysRevLett.86.5188, Briegel2009-wb, doi:10.1146/annurev-conmatphys-020911-125041, NIELSEN2006147}, quantum error correction~\cite{Yao2012-du, PhysRevA.52.R2493},  quantum cryptography~\cite{markham2008graph,PhysRevA.59.1829}, and even quantum metrology~\cite{Toth_2014, nguyen2023harnessing}. Furthermore, their high degree of nonclassicality has made them a frequent subject of entanglement~\cite{hein2006entanglement,PhysRevA.69.062311} and nonlocality~\cite{PhysRevA.71.042325,PhysRevLett.95.120405,PhysRevLett.124.020402} studies.

In our main text, we focus for simplicity on the 4-cluster state $\ket{C_4}$, which is a graph state corresponding to four vertices on a line. For a particular choice of local basis, it can be written as
    \begin{equation}
       \ket{C_4} := \frac{1}{\sqrt{2}}\Big(\ket{00}\ket{\phi^+}+\ket{11}\ket{\phi^-}\Big) 
       \,,
    \end{equation}
where $\ket{\phi^\pm}:=\frac{1}{\sqrt{2}}\big(\ket{00}\pm\ket{11}\big)$. 
For a more detailed introduction to graph states, see Appendix~\ref{app:graph_states}.

\begin{figure}
    \centering
   \definecolor{myred}{RGB}{204,51,17}
\definecolor{myblue}{RGB}{0,119,187}
\definecolor{mygrey}{RGB}{187,187,187}
\definecolor{myteal}{RGB}{0,153,136}
\definecolor{myorange}{RGB}{238,119,51}
\definecolor{mymagenta}{RGB}{238,51,110}

\resizebox{\columnwidth}{!}{%
\begin{tikzpicture}[scale=0.95]
\centering

\tikzset{vertex/.style={circle,inner sep=5pt,outer sep=-1/2pt, very thick,fill=black!90}}

\tikzset{svertex/.style={circle,inner sep=5pt,outer sep=-1/2pt,very thick,fill=myblue!100}}
\tikzset{cutvertex/.style={circle,inner sep=5pt,outer sep=-1/2pt,very thick,fill=mygrey}}

\tikzset{ligne/.style={very thick,black
}}

\tikzset{2ligne/.style={very thick, myblue}}
\tikzset{3ligne/.style={very thick, mygrey}}

\node at (-2,0) {$~$};
\node at (11,0) {$~$};

\node[text width=4.2cm, align=left,anchor=north] at (2,-1.75) {\large a.~A $4$-line graph};

\node[text width=4.8cm, align=left,anchor=north] at (8.5,-1.75) {\large b.~A caterpillar graph};

\node[vertex] (A0) at (0,0) {};
\node[vertex] (A1) at (1,0) {};
\node[vertex] (A2) at (2,0) {};
\node[vertex] (A3) at (3,0) {};
\begin{scope}[on background layer]
\draw[ligne] (A0) -- (A1) -- (A2) -- (A3);
\end{scope}
\begin{scope}[shift={(7,0)}]
\node[vertex] (A1) at (1,0) {};
\node[vertex] (A2) at (2,0) {};
\node[svertex] (A3) at (3,0) {};
\node[svertex] (B0) at (5/3,-1) {};
\node[svertex] (B1) at (7/3,-1) {};
\node[svertex] (B2) at (9/3,-1) {};
\node[svertex] (B3) at (7/3,1) {};
\node[svertex] (B5) at (5/3,1) {};
\node[svertex] (B6) at (3/3,-1) {};
\node[svertex] (C3) at (0,-1) {};
\node[svertex] (E) at (-2/3,1) {};
\node[vertex] (C4) at (-1,0) {};
\node[vertex] (D) at (0,0) {};

\begin{scope}[on background layer]
\draw[ligne] (A1)-- (A2);
\draw[2ligne] (E) -- (C4);
\draw[2ligne] (A3) -- (A2) -- (B0);
\draw[2ligne] (A2) -- (B1);
\draw[2ligne] (A2) -- (B2);
\draw[2ligne] (A1) -- (B5);
\draw[2ligne] (A2) -- (B3);
\draw[2ligne] (A1) -- (B6);
\draw[2ligne] (C4) -- (C3);
\draw[ligne] (D) -- (A1);
\draw[ligne] (D) -- (C4);

\end{scope}

\end{scope}

\end{tikzpicture}}
    \caption{a.~Graph states are defined from graphs. An $n$-cluster state corresponds to a line over $n$ vertices.\\
    b.~A caterpillar graph is a connected acyclic graph whose vertices of degree greater than one form a path (the ``spine'' in black above). All vertices of degree 1 (the ``legs'', in blue above) hence connect only to the spine.
   }\label{fig:graphstates}
\end{figure}
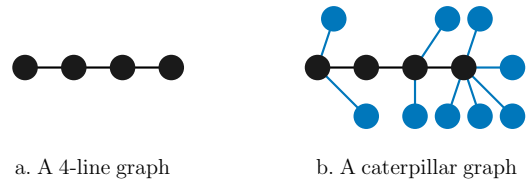

\emph{All caterpillar graph states are genuinely multipartite nonlocal---}
This section contains the main technical contribution of our letter, which is to prove that all caterpillar graph states are LOSR-$\GMNL$ in a noise-tolerant way. 
Here, we present the proof idea (for the detailed proof, see Appendix~\ref{app: 4-cluster}) using as an example the 4-partite cluster state $\ket{C_{4}}$, which is one of the simplest representatives of caterpillar graph states. The general proof for any caterpillar graph state is relegated to Appendix~\ref{app:C_N}.

\begin{theorem}\label{theo: 4-cluster}
All caterpillar graph states are LOSR-$\GMNL$ in a noise-robust way. \\
In particular, any state violating the following is $\GMNL_4$,\begin{equation}
2\langle A_0 B_0 \rangle + 2 \langle C_0 D_2 \rangle+ 2\langle A_1 B_1 D_2 \rangle + I^{BCD} \leqslant 8\,,\label{eqn: 4_cluster}
\end{equation}
where $I^{BCD}:=\langle C_0 D_{0} \rangle +\langle  C_0 D_{1} \rangle +  \langle B_0 C_1 D_{0} \rangle - \langle B_0 C_1 D_{1} \rangle $ is a distributed variant
of the CHSH inequality.

Under suitable measurements, the $\ket{C_4}$ cluster state violates this bound by  reaching $6+2\sqrt{2} > 8$.
\end{theorem}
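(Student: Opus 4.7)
The plan is to separately establish (i) the LOSR upper bound of $8$ on the left-hand side of~\eqref{eqn: 4_cluster} under any causal model in which every source involves at most three of the four parties, and (ii) the existence of Pauli measurements on $\ket{C_4}$ attaining $6+2\sqrt{2}$. Part (ii) is a direct stabilizer calculation: I would choose the local bases so that $A_0 B_0$, $C_0 D_2$, and $A_1 B_1 D_2$ each equal (up to sign) a stabilizer element or a product of stabilizer elements of $\ket{C_4}$, making all three correlators saturate at $1$, and then set $(C_0, C_1)$ and $(D_0, D_1)$ to CHSH-optimal Pauli directions so that $I^{BCD}$ reaches Tsirelson's value $2\sqrt{2}$. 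Summing yields $2+2+2+2\sqrt{2} = 6+2\sqrt{2}$, as claimed.

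For part (i), the key observation is the rewriting $I^{BCD}=\langle C_0(D_0+D_1)\rangle+\langle B_0 C_1(D_0-D_1)\rangle$, a Bell--CHSH expression in which the two effective sides are $(C_0,\, B_0 C_1)$ and $(D_0, D_1)$. Whenever the joint distribution of $(B,C,D)$ factorizes between $\{B,C\}$ and $\{D\}$ through classical shared randomness---equivalently, whenever no tripartite source of the causal model links $D$ to either of $B,C$---the local CHSH bound gives $I^{BCD}\leq 2$, and the remaining bipartite and tripartite correlators sum trivially to at most $6$, yielding LHS $\leq 8$.

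The nontrivial task is therefore to handle causal structures containing a tripartite source coupling $D$ to $B$ or $C$. I would proceed by convexity, reducing to deterministic strategies on each extremal causal structure (after conditioning on the classical shared randomness), and then treat the relevant tripartite sources case by case. If the tripartite source is $\{B,C,D\}$, then $I^{BCD}$ can approach $2\sqrt{2}$, but $A$ is separated from $\{B,C,D\}$ by at most bipartite sources, so $\langle A_1 B_1 D_2\rangle$ and $\langle A_0 B_0\rangle$ are constrained by a CHSH-type cut isolating $A$ that trades off against the $I^{BCD}$ violation. If the tripartite source is $\{A,B,D\}$, then $\langle A_1 B_1 D_2\rangle$ can reach $1$ but the factorization argument above gives $I^{BCD}\leq 2$, yielding LHS $\leq 8$ directly. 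Symmetric arguments cover the remaining tripartite-source placements $\{A,B,C\}$ and $\{A,C,D\}$.

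The main obstacle is the first case: one must quantitatively verify that the CHSH trade-off across the bipartite cut isolating $A$, together with the $2\sqrt{2}$ permitted in $I^{BCD}$, sums with the bipartite terms to at most $8$. This calibration is precisely what motivates the specific coefficients $(2,2,2,1)$ in~\eqref{eqn: 4_cluster}, and the noise-robustness claim then follows from the constant-size gap $2\sqrt{2}-2$ between the quantum value and the LOSR bound. The extension to arbitrary caterpillar graphs (appendix) should iterate this bipartite-cut trade-off along the spine, combining CHSH-like inequalities on each edge with the stabilizer structure at each vertex.
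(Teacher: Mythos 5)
Your part (ii) is fine and matches the paper: the three correlators $\langle A_0B_0\rangle,\langle C_0D_2\rangle,\langle A_1B_1D_2\rangle$ are saturated via stabilizer elements of $\ket{C_4}$ and $I^{BCD}$ reaches $2\sqrt{2}$ with the measurements $Z$ for $A_0,B_0,C_0,D_2$, $X$ for $A_1,B_1,C_1$, and $(Z\pm X)/\sqrt2$ for $D_0,D_1$. The problem is part (i), where your argument rests on a misreading of the causal model to be excluded. LOSR-$\GMNL_4$ does not refute a convex mixture over scenarios each containing \emph{one} tripartite source in some placement; it refutes a single network (the network $\mathcal{I}_0$ of Fig.~\ref{fig:intuitionc4}) in which \emph{all four} tripartite no-signalling sources --- on $\{B,C,D\}$, $\{A,C,D\}$, $\{A,B,D\}$ and $\{A,B,C\}$ --- are present \emph{simultaneously}, together with global shared randomness. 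Your case analysis (``if the tripartite source is $\{B,C,D\}$\dots, if it is $\{A,B,D\}$\dots'') therefore never addresses the actual model: in $\mathcal{I}_0$ the source $\tau_A$ always couples $D$ to both $B$ and $C$, so the factorization that would give $I^{BCD}\le 2$ never occurs, and at the same time $\langle A_1B_1D_2\rangle$ can be fed by $\tau_C$. Moreover, the sources are arbitrary no-signalling (possibly post-quantum) boxes, so ``reducing to deterministic strategies by convexity'' is not available for them --- only the classical $\lambda$ can be conditioned on --- and your appeal to a ``CHSH trade-off across the cut isolating $A$'' is not a proof but precisely the quantitative statement that needs to be established.

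The missing idea is the inflation technique. The paper constructs a sequence of modified networks $\mathcal{I}_0\to\mathcal{I}_1\to\cdots\to\mathcal{J}$ in which parties and sources are duplicated; in the inflation $\mathcal{J}$, the monogamy of CHSH for no-signalling correlations (Theorem~1 of~\cite{Augusiak_2014}) gives $I^{ACD}_{\mathcal{J}}+2\langle A'_1B'_1C_0\rangle_{\mathcal{J}}\le 4$ with $\tilde C_1=A_0C_1$. Each term is then transported back to the original network $\mathcal{I}_0$ by two tools: (a) the observation that expectation values agree between two networks whenever the subnetwork of the involved parties and their sources is isomorphic, and (b) the elementary bound $\langle M_1M_2\rangle\ge\langle M_1M_3\rangle+\langle M_2M_3\rangle-1$ (Lemma~\ref{lem:ineq}), which converts correlators that exist only in an inflation into correlators of $\mathcal{O}$ at the cost of $-1$ each; these costs are exactly what produces the bound $8$ and the coefficients $(2,2,2,1)$. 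Without this duplication-and-transport step there is no route from bipartite monogamy to a constraint on a network with four overlapping tripartite no-signalling sources, so your outline cannot be completed as written. The same remark applies to your sketch for general caterpillar graphs, which in the paper is again an iterated inflation argument along the spine, not a direct chaining of CHSH cuts.
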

\begin{proof}
The proof of Eq.~\eqref{eqn: 4_cluster} is based on the inflation technique~\cite{wolfe2019inflation,PhysRevLett.127.200401,PhysRevA.104.052207}. 
We assume that in the underlying causal model, the alternative explanation for the observed correlations is compatible with the principle of causality~\cite{ArianoOPT}, also called No-Signalling and Independence~\cite{Beigi2022CovDecompoNSI, gisin2020constraints}.  Its intuition is illustrated in Fig.~\ref{fig:intuitionc4}.
The quantum violation can be obtained by measuring on the state $\ket{C_4}_{ABCD}$ the following observables: $Z$ for $A_0, B_0, C_0, D_2$; $X$ for $A_1, B_1, C_1$; and, respectively, $(Z+ X) / \sqrt{2},(Z- X) / \sqrt{2}$ for $D_0, D_1$. 
\end{proof}

\definecolor{myred}{RGB}{204,51,17}
\definecolor{myblue}{RGB}{0,119,187}
\definecolor{mygrey}{RGB}{187,187,187}
\definecolor{myteal}{RGB}{80,175,148}
\definecolor{myorange}{RGB}{238,119,51}
\definecolor{mymagenta}{RGB}{238,51,110}

\tikzset{player/.style={circle,draw=black,inner sep=0, minimum size=20pt,fill=white,thick,font=\large
 }}

\tikzset{sourcea/.style={circle, fill=myred,inner sep=10.5/3pt,outer sep=-1pt}}
\tikzset{sourceb/.style={circle, fill=myblue,inner sep=10.5/3pt,outer sep=-1pt}}
\tikzset{sourcec/.style={circle, fill=myorange,inner sep=10.5/3pt,outer sep=-1pt}}
\tikzset{sourced/.style={circle, fill=myteal,inner sep=10.5/3pt,outer sep=-1pt}}

\tikzset{ligne/.style={ultra thick
}}

\tikzset{fleche/.style={-to,line width=1mm,color=black!50}}

\begin{figure}
\centering
\resizebox{\columnwidth}{!}{%
\begin{tikzpicture}[scale=0.18]
\centering

\node[player] (A0) at (0,10) {$A$};
\node[player] (B0) at (10,10) {$B$};
\node[player] (C0) at (0,0) {$C$};
\node[player] (D0) at (10,0) {$D$};

\node[sourcea] (SA0) at (10/3,20/3) { };
\node[sourceb] (SB0) at (20/3,20/3) { };
\node[sourcec] (SC0) at (10/3,10/3) { };
\node[sourced] (SD0) at (20/3,10/3) { };
\begin{scope}[on background layer]
\draw[ligne] (A0) -- (SA0) (A0) -- (SB0) (A0) -- (SC0);
\draw[ligne] (B0) -- (SA0) (B0) -- (SB0) (B0) -- (SD0);
\draw[ligne] (C0) -- (SA0) (C0) -- (SD0) (C0) -- (SC0);
\draw[ligne] (D0) -- (SD0) (D0) -- (SB0) (D0) -- (SC0);
\end{scope}
\node at (5,-4) {\Large $\mathcal{I}_0$};

\draw[fleche] (12,5) -- (16,5) ;
\draw[fleche] (32,-15) -- (36,-15) ;

\draw[fleche] (32,5) -- (36,5) ;
\draw[fleche] (12,-15) -- (16,-15) ;
\draw[fleche] (53,5) -- ({53+4*cos(35)},{5-4*sin(35}) ;
\draw[fleche] (53,-20+5) -- ({53+4*cos(35)},{-20+5+4*sin(35}) ;

\begin{scope}[shift={(20,0)}]
\node[player] (A0) at (0,10) {$A$};
\node[player] (B0) at (10,10) {$B$};
\node[player] (C0) at (0,0) {$C$};
\node[player] (D0) at (10,0) {$D$};

\node[sourcea] (SA0) at (10/3,20/3) { };
\node[sourceb] (SB0) at (20/3,20/3) { };
\node[sourcec] (SC0) at (10/3,10/3) { };
\node[sourcec] (SC1) at (0,5) { };
\node[sourced] (SD0) at (20/3,10/3) { };
\begin{scope}[on background layer]
\draw[ligne] (A0) -- (SA0) (A0) -- (SB0) (A0) -- (SC1);
\draw[ligne] (B0) -- (SA0) (B0) -- (SB0) (B0) -- (SD0);
\draw[ligne] (C0) -- (SA0) (C0) -- (SD0) (C0) -- (SC0);
\draw[ligne] (D0) -- (SD0) (D0) -- (SB0) (D0) -- (SC0);
\end{scope}
\node at (5,-4) {\Large $\mathcal{I}_1$};
\end{scope}

\begin{scope}[shift={(40,0)}]
\node[player] (A0) at (0,10) {$A$};
\node[player] (B0) at (10,10) {$B$};
\node[player] (C0) at (0,0) {$C$};
\node[player] (D0) at (10,0) {$D$};

\node[sourcea] (SA0) at (10/3,20/3) { };
\node[sourceb] (SB0) at (20/3,20/3) { };
\node[sourcec] (SC0) at (10/3,10/3) { };
\node[sourcec] (SC1) at (0,5) { };
\node[sourced] (SD0) at (20/3,10/3) { };
\node[sourced] (SD1) at (10,5) { };
\begin{scope}[on background layer]
\draw[ligne] (A0) -- (SA0) (A0) -- (SB0) (A0) -- (SC1);
\draw[ligne] (B0) -- (SA0) (B0) -- (SB0) (B0) -- (SD1);
\draw[ligne] (C0) -- (SA0) (C0) -- (SD0) (C0) -- (SC0);
\draw[ligne] (D0) -- (SD0) (D0) -- (SB0) (D0) -- (SC0);
\end{scope}
\node at (5,-4) {\Large $\mathcal{I}_2$};
\end{scope}

\begin{scope}[shift={(0,-10)},yscale=-1]
\node[player] (A0) at (0,10) {$A'$};
\node[player] (B0) at (10,10) {$B'$};
\node[player] (C0) at (0,0) {$C$};
\node[player] (D0) at (10,0) {$D$};

\node[sourcea] (SA0) at (10/3,20/3) { };
\node[sourceb] (SB0) at (20/3,20/3) { };
\node[sourcec] (SC0) at (10/3,10/3) { };
\node[sourced] (SD0) at (20/3,10/3) { };
\begin{scope}[on background layer]
\draw[ligne] (A0) -- (SA0) (A0) -- (SB0) (A0) -- (SC0);
\draw[ligne] (B0) -- (SA0) (B0) -- (SB0) (B0) -- (SD0);
\draw[ligne] (C0) -- (SA0) (C0) -- (SD0) (C0) -- (SC0);
\draw[ligne] (D0) -- (SD0) (D0) -- (SB0) (D0) -- (SC0);
\end{scope}
\node at (5,14) {\Large $\mathcal{I}_3$};
\end{scope}

\begin{scope}[shift={(20,-10)},yscale=-1]
\node[player] (A0) at (0,10) {$A'$};
\node[player] (B0) at (10,10) {$B'$};
\node[player] (C0) at (0,0) {$C$};
\node[player] (D0) at (10,0) {$D$};

\node[sourcea] (SA0) at (10/3,20/3) { };
\node[sourcea] (SA1) at (0,5) { };
\node[sourceb] (SB0) at (20/3,20/3) { };
\node[sourcec] (SC0) at (10/3,10/3) { };
\node[sourced] (SD0) at (20/3,10/3) { };
\begin{scope}[on background layer]
\draw[ligne] (A0) -- (SA0) (A0) -- (SB0) (A0) -- (SC0);
\draw[ligne] (B0) -- (SA0) (B0) -- (SB0) (B0) -- (SD0);
\draw[ligne] (C0) -- (SA1) (C0) -- (SD0) (C0) -- (SC0);
\draw[ligne] (D0) -- (SD0) (D0) -- (SB0) (D0) -- (SC0);
\end{scope}
\node at (5,14) {\Large $\mathcal{I}_4$};
\end{scope}

\begin{scope}[shift={(40,-10)},yscale=-1]
\node[player] (A0) at (0,10) {$A'$};
\node[player] (B0) at (10,10) {$B'$};
\node[player] (C0) at (0,0) {$C$};
\node[player] (D0) at (10,0) {$D$};

\node[sourcea] (SA0) at (10/3,20/3) { };
\node[sourcea] (SA1) at (0,5) { };
\node[sourceb] (SB0) at (20/3,20/3) { };
\node[sourceb] (SB1) at (10,5) { };
\node[sourcec] (SC0) at (10/3,10/3) { };
\node[sourced] (SD0) at (20/3,10/3) { };
\begin{scope}[on background layer]
\draw[ligne] (A0) -- (SA0) (A0) -- (SB0) (A0) -- (SC0);
\draw[ligne] (B0) -- (SA0) (B0) -- (SB0) (B0) -- (SD0);
\draw[ligne] (C0) -- (SA1) (C0) -- (SD0) (C0) -- (SC0);
\draw[ligne] (D0) -- (SD0) (D0) -- (SB1) (D0) -- (SC0);
\end{scope}
\node at (5,14) {\Large $\mathcal{I}_5$};
\end{scope}

\begin{scope}[shift={(180/3,-5)}]
\node[player] (A0) at (0,10) {$A$};
\node[player] (B0) at (10,10) {$B$};
\node[player] (C0) at (0,0) {$C$};
\node[player] (D0) at (10,0) {$D$};

\node[sourcea] (SA0) at (10/3,20/3) { };
\node[sourceb] (SB0) at (20/3,20/3) { };
\node[sourcec] (SC1) at (0,5) { };
\node[sourced] (SD1) at (10,5) { };
\begin{scope}[on background layer]
\draw[ligne] (A0) -- (SA0) (A0) -- (SB0) (A0) -- (SC1);
\draw[ligne] (B0) -- (SA0) (B0) -- (SB0) (B0) -- (SD1);
\draw[ligne] (C0) -- (SA0) ;
\draw[ligne] (D0) -- (SB0) ;
\end{scope}
\begin{scope}[yscale=-1]
\node[player] (A0) at (0,10) {$A'$};
\node[player] (B0) at (10,10) {$B'$};

\node[sourcea] (SA0) at (10/3,20/3) { };
\node[sourceb] (SB0) at (20/3,20/3) { };
\node[sourcec] (SC0) at (10/3,10/3) { };
\node[sourced] (SD0) at (20/3,10/3) { };
\begin{scope}[on background layer]
\draw[ligne] (A0) -- (SA0) (A0) -- (SB0) (A0) -- (SC0);
\draw[ligne] (B0) -- (SA0) (B0) -- (SB0) (B0) -- (SD0);
\draw[ligne] (C0) -- (SD0) (C0) -- (SC0);
\draw[ligne] (D0) -- (SD0) (D0) -- (SC0);
\end{scope}
\end{scope}

\node at (5,-14.5) {\Large $\mathcal{J}$};

\end{scope}

\end{tikzpicture}}
\caption{Intuition for the proof of Ineq.~\ref{eqn: 4_cluster} in Thm.~1. We assume here by contradiction that the quantum violation $\langle A_0B_0 \rangle = \langle C_0D_2 \rangle = \langle A_1B_1D_2 \rangle=1,\, I^{BCD}=2\sqrt{2}$ can be obtained without any 4-way nonclassical cause, that is, it can be obtained from four 3-way nonclassical causes as in the networks $\mathcal{I}_0, \mathcal{I}_3$, where the colored disks represent general tripartite nonsignalling sources (the inclusion of classical randomness shared between all parties would not change our proof). 
We show that this would imply that in the rightmost inflated scenario $\mathcal{J}$, $C$ would be part of the violation of a distributed variant of the CHSH inequality $I^{ACD}_{(\mathcal{J})}$ (the quantity $I^{ACD}$ is defined as $I^{BCD}$ but with Bob replacing Alice), but his output could be guessed by $A', B'$ together, which is impossible as it contradicts the monogamy of CHSH correlations.
We reach this conclusion by following two different lines of inflation reasonings: $\mathcal{I}_0 \rightarrow\mathcal{I}_1\rightarrow\mathcal{I}_2\rightarrow\mathcal{J}$ (top) and $\mathcal{I}_3 \rightarrow\mathcal{I}_4\rightarrow\mathcal{I}_5\rightarrow\mathcal{J}$ (bottom). 
We repeatedly use the argument that the behaviour of two subgroups of parties in two of these networks $\mathcal{I}_k, \mathcal{I}_l$ must be identical whenever their associated subnetworks are isomorphic. For instance, isolate $B$, $C$ and $D$ in $\mathcal{I}_0$ and $\mathcal{I}_1$. As their associated subnetworks (that is, the subnetworks in $\mathcal{I}_0, \mathcal{I}_1$ that are composed of $B, C, D$ and the sources they are connected to) are isomorphic, we must have $I^{BCD}_{(\mathcal{I}_0)}=I^{BCD}_{(\mathcal{I}_1)}$. Since we assumed that $I^{BCD}_{(\mathcal{I}_0)}=2\sqrt{2}$, we deduce that $I^{BCD}_{(\mathcal{I}_1)}=2\sqrt{2}$.
\\
By repeatedly using this type of argument along these networks, it is possible to show that $(i): \langle A_1'B_1'C_0 \rangle _{(\mathcal{J})}=1$ and $(ii): I^{ACD}_{(\mathcal{J})}=2\sqrt{2}$. $(i)$ implies that the product of the outputs of $A', B'$ on their input $1$ is equal to the output of $C$ on his input $0$. $(ii)$ shows that the output of $C$ on input $0$ is part of a violation of the distributed variant of the CHSH inequality $I^{ACD}_{(\mathcal{J})}\leq 2$.
This directly contradicts the monogamy of nonlocal correlations, which states that outputs violating CHSH cannot be guessed with high probability by a third party~\cite{Augusiak_2014}.
\label{fig:intuitionc4}}
\end{figure}
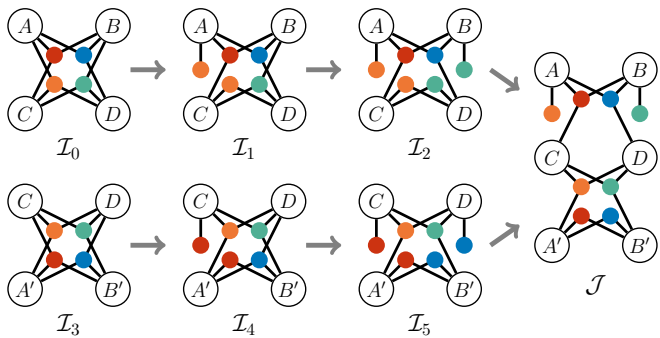

\emph{Higher dimensions---}
We show in Appendix~\ref{app:C_N^d} that cluster graph states of higher local dimensions (beyond qubits) are LOSR-$\GMNL$ for any prime local dimension and any number of subsystems, and in Appendix~\ref{app:ghz} we show that the GHZ state is LOSR-$\GMNL$ for any local dimension and any number of subsystems.

\emph{From LOSR to LONC---}
We now turn to the main conceptual contribution of our letter, which is to adapt the concept of $\GMNL$ to the Local-Operations-and-Neighbour-Communication (LONC) framework, which relaxes the no-signalling assumption of the LOSR framework. 
We recall that the concept of LOSR-$\GMNL$, shown in Fig.~\ref{fig:CausalExplanatoryModels}b, relies on the assumption that absolutely no communication occurs between any of the parties.
This can be overly restrictive in experimental setups with a high number of parties in close proximity. 
For instance, in a 1$D$ condensed matter system, one could imagine that each site cannot be realistically prohibited from exchanging information carriers with its neighbours during the time window separating the choice of the measurement setting (i.e. the input) and the reading of the measurement outcome (i.e. the output).
As we explain below, even though cluster states are LOSR-$\GMNL$ when no communication is allowed (see Theorem~\ref{theo: 4-cluster}), they can be prepared in only a few rounds of communication between neighbouring sites, and this communication can even be synchronous. 

The above considerations motivate the introduction of the concept of Neighbour-Communication Genuine Multipartite Nonlocality (LONC-$\GMNL$), which is inspired by synchronous distributed computing~\cite{linial1987distributive,linial1992locality,gavoille2009can,le2019quantum}. 
As detailed in Fig.~\ref{fig:CausalExplanatoryModels}c, our LONC model is based on a communication graph along which the parties, represented by nodes, can communicate with their neighbours during $t$ synchronous communication steps (of unbounded bandwidth), and have access to shared randomness. 
Parties can transmit classical, quantum, or, more generally, any \emph{causal resource}: the communication is only restricted by constraints of the no-signalling type~\cite{gisin2020constraints,PhysRevA.104.052207}.
The edges of the graph, which can be oriented or not, indicate the communication flow. 
Importantly, the parties receive their measurement inputs \emph{before} the communication rounds, so inputs can be leaked to neighbours residing up to a distance $t$ away.

Hence, the notion of LONC-$\GMNL$ is always relative to a specific communication graph, based on which the notion of neighbour communication is defined.
This graph should, in practice, be selected according to the experiment topology. 
For simplicity, in the following, we focus on the 1$D$-local scenario, where all parties form a line and where communication flows in one direction only.

\emph{All cluster states are only LONC-$\GMNL_2$; but GHZ states are maximally LONC-$\GMNL$---}
We show in Appendix \ref{app:c-gmnl} that two rounds of one-way quantum communication on a directed path are enough to \emph{prepare} a cluster state $\ket{C_n}$: 
\begin{theorem}\label{clusterline}
All cluster states $\ket{C_n}$ can be prepared in two rounds of one-way synchronous quantum communication.
\end{theorem}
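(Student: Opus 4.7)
I would prove the theorem by exhibiting an explicit two-round protocol on the directed path $1 \to 2 \to \cdots \to n$ and verifying, by the graph-state stabilizer calculus, that the resulting joint state is $|C_n\rangle$. The construction is based on enlarging the cluster state with interleaved ancillas that can be disentangled by a single-qubit $Y$-basis measurement in the second round.

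\textbf{The protocol.} Each party $i \in \{1, \ldots, n-1\}$ locally prepares two qubits $Q_i$ and $E_i$ in state $|+\rangle$ and applies $CZ(Q_i, E_i)$; party $n$ prepares only $Q_n$ in $|+\rangle$. In round~1, each party $i \leq n-1$ transmits $Q_i$ (not $E_i$) forward to party $i+1$. Between the two rounds, party $j$ (for $2 \leq j \leq n-1$) applies the local gate $CZ(E_j, Q_{j-1})$, and party $n$ applies $CZ(Q_{n-1}, Q_n)$. A direct stabilizer check shows that the joint state on $(Q_1, E_2, Q_2, E_3, \ldots, E_{n-1}, Q_{n-1}, Q_n)$ is the linear graph state on those $2n-2$ vertices. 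In round~2, each party $j \in \{2, \ldots, n-1\}$ measures $E_j$ in the $Y$-basis, obtains an outcome $t_j \in \{\pm 1\}$, applies a local $S^{\pm t_j}$ Clifford correction to the qubit $Q_{j-1}$ they hold, and forwards $t_j$ as one classical bit to party $j+1$. Upon receipt, party $j+1$ applies the matching $S^{\pm t_j}$ to $Q_j$ (which they are holding, having received it in round~1); party $n$ applies $S^{\pm t_{n-1}}$ to $Q_{n-1}$ after receiving $t_{n-1}$.

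\textbf{Verification.} By the standard graph-state calculus, the $Y$-basis measurement at a non-leaf vertex $v$ with neighbours $u, w$ implements local complementation at $v$: it removes $v$, toggles the edge between $u$ and $w$, and introduces $S^{\pm 1}$ corrections on $u$ and $w$ depending on the measurement outcome. Applying this to each $E_j$ therefore deletes $E_j$ and adds precisely the desired $CZ$ edge between $Q_{j-1}$ and $Q_j$, while the $(Q_{n-1}, Q_n)$ edge was already put in place during the interlude. After all the measurements and their accompanying corrections, the remaining $n$ qubits $(Q_1, \ldots, Q_n)$ sit in the graph state of the $n$-vertex path, i.e., in $|C_n\rangle$; the distribution across parties is that party $k \in \{2, \ldots, n-1\}$ holds $Q_{k-1}$ and party $n$ holds both $Q_{n-1}$ and $Q_n$, while party~$1$ holds no cluster qubit.

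\textbf{The main obstacle.} The key and only subtle design choice is to \emph{send the output qubit $Q_i$ forward in round~1, rather than the ancilla $E_i$}. This seemingly minor inversion is what makes all the Pauli/Clifford corrections produced by the $Y$-basis measurements compatible with the one-way forward communication constraint: with this ordering, each ancilla $E_j$ remains at the sender's site at the time of measurement, so that the two $S^{\pm 1}$ corrections on its neighbours $(Q_{j-1}, Q_j)$ fall respectively on a qubit held locally by the measuring party and on a qubit held by the immediate forward neighbour, to whom a single classical bit suffices to propagate the outcome. Had we instead sent the ancilla forward (the more natural-looking choice), one of the two induced Clifford corrections would need to travel backward along the path, which is forbidden. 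Once this forward-compatibility of the corrections is established, the remaining steps --- local preparation of $|+\rangle$ states, local $CZ$ gates, and single-qubit $Y$-basis measurements --- are routine, and the identification of the final state with $|C_n\rangle$ is a direct application of the stabilizer formalism.
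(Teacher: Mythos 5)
Your mechanism --- interleave ancillas $E_j$ along the chain, then remove each one by a $Y$-basis measurement so that local complementation fuses $Q_{j-1}$ and $Q_j$, with the outcome bit forwarded so that all $S^{\pm 1}$ corrections respect the one-way orientation --- is sound and genuinely different from the paper's construction. The paper never measures: it detaches the interleaved ancillas by further local $CZ$ gates, which forces the second communication round to carry a \emph{qubit}, whereas your second round carries only a classical bit. That economy is a real advantage of your route. However, as written your protocol does not prepare $\ket{C_n}$ shared one qubit per party, and this matters because the theorem's role in the LONC argument is precisely that each party can afterwards measure \emph{its own} cluster qubit to reproduce the cluster-state correlations.

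Two concrete problems. First, party $1$ prepares $E_1$ and entangles it with $Q_1$ via $CZ(Q_1,E_1)$, but $E_1$ is never measured (you only measure $E_2,\dots,E_{n-1}$), so the final state is the $n$-vertex path with a residual leaf attached to $Q_1$; the reduced state on $Q_1,\dots,Q_n$ is then mixed, not $\ket{C_n}$. This is an indexing slip: party $1$ should prepare no ancilla at all. Second, and more substantively, your final distribution is $0,1,1,\dots,1,2$: party $1$ holds no cluster qubit while party $n$ holds both $Q_{n-1}$ and $Q_n$. Party $1$ therefore cannot produce an outcome correlated as the first site of $\ket{C_n}$, and no relabelling repairs this because information cannot travel backwards along the directed path. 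The paper avoids exactly this by giving party $1$ an extra ``head'' qubit (its qubit $(0,1)$) that party $1$ retains, so that after the forward shift in the second round every party holds exactly one cluster qubit. The same fix works for you: let party $1$ also prepare $Q_0$, apply $CZ(Q_0,Q_1)$ locally before round~1, retain $Q_0$, and build the cluster on $Q_0,\dots,Q_{n-1}$ (dropping $Q_n$); then party $k$ ends up holding $Q_{k-1}$, the $k$-th vertex of the path, and your round-2 classical corrections go through unchanged.
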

This strengthens a result in~\cite{le2019quantum}, where two-way communication was considered, and it shows that, while $\ket{C_n}$ is hard in the model of Fig.~\ref{fig:CausalExplanatoryModels}b, it is easy in that of Fig.~\ref{fig:CausalExplanatoryModels}c.

Note that Theorem~\ref{clusterline} even allows one to \emph{prepare} $\ket{C_n}$ with \emph{quantum} communication. 
Hence, it only uses a very restricted part of what our LONC-$\GMNL$ model allows for: more generally, super-quantum communication can be used (as long as it does not allow for signalling), and it is sufficient to merely \emph{simulate} the correlations (e.g. by leaking the input).
For instance, it is evident that $\ket{{\rm GHZ}_n}$ cannot be constructed in the same way with less than $n-1$ communication steps (because $\ket{{\rm GHZ}_n}$ is a coherent superposition of perfect correlations between all parties); it is, however, not straightforward to show that $\ket{{\rm GHZ}_n}$ is hard in the LONC-$\GMNL$ model.
In particular, the leaking of inputs up to a distance $t$ offers many ways to simulate some $\ket{{\rm GHZ}_n}$ correlations without necessarily constructing the state.

Despite these difficulties, we conclude this letter by showing that $\ket{{\rm GHZ}_n}$ is maximally LONC-$\GMNL$ for the oriented path graph.

\begin{theorem}\label{GHZline}
$\ket{{\rm GHZ}_n}$ produces correlations that are incompatible with the LONC model along an oriented path with ${t<n-1}$ communication steps.
\end{theorem}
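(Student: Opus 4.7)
The plan is to prove the theorem by a CHSH-monogamy argument that exploits the causal disconnection of the endpoints in the LONC model with $t<n-1$. The key structural fact is that party $n$'s output $a_n$ depends on its input $x_n$ and on its neighbours' inputs only up to distance $t$ back along the oriented path, so with $t<n-1$ communication steps there is no directed path from $x_1$ to $a_n$ in the LONC causal graph: the only mediation between party $1$ and party $n$ is the shared randomness $\lambda$ together with the intermediate parties, whose outputs are themselves causally restricted.

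First, I would construct a Bell-type witness by exploiting the entanglement-concentration property of $\ket{\mathrm{GHZ}_n}$. Having each middle party $i\in\{2,\ldots,n-1\}$ measure in the $X$ basis and recording the outcome $a_i\in\{\pm 1\}$, the post-measurement state on parties $1$ and $n$ is the Bell pair $(\ket{00}+s\ket{11})/\sqrt 2$, where the sign $s=\prod_{i=2}^{n-1}a_i\in\{\pm 1\}$ is determined by the product of the middle outcomes. Taking $A_{0}=Z$, $A_{1}=X$ for party $1$ and $B_{0}=(Z+X)/\sqrt 2$, $B_{1}=(Z-X)/\sqrt 2$ for party $n$, the sign-corrected CHSH combination $I_{\mathrm{GHZ}}:=s\,[\langle A_{0}B_{0}\rangle+\langle A_{0}B_{1}\rangle+\langle A_{1}B_{0}\rangle-\langle A_{1}B_{1}\rangle]$ reaches Tsirelson's value $2\sqrt{2}$ on $\ket{\mathrm{GHZ}_n}$ for every value of $s$.

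Second, I would show $I_{\mathrm{GHZ}}\leqslant 2$ for any LONC model with $t<n-1$ via an inflation-and-monogamy argument in the spirit of Theorem~\ref{theo: 4-cluster}. Inflating the LONC network so that the middle subnetwork is duplicated with independent copies of its intermediate no-signaling resources (while keeping the same $\lambda$ and the same copy of parties $1$ and $n$) produces two independent realizations of the middle measurement record; together with the corresponding two realizations of $a_n$, each would inherit the CHSH correlations with $a_1$. Because the graph distance $n-1$ exceeds $t$ and no information from $x_1$ can reach $a_n$, this inflation is compatible with the LONC causal structure, and therefore $a_1$ would be sharing a CHSH-violating pair with two distinct parties---contradicting the monogamy of CHSH in any no-signaling theory, which enforces $I_{\mathrm{GHZ}}\leqslant 2$ locally. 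The main obstacle is to make the inflation rigorous for arbitrary no-signaling resources along the oriented path and to control the sign correction $s$ across the two inflated copies of the middle subnetwork, so that their monogamy incompatibility is cleanly exhibited; I expect the construction to parallel the inflation scheme of Fig.~\ref{fig:intuitionc4} adapted to the LONC causal graph.
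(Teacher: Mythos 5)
Your opening causal observation---that with $t<n-1$ the only mediation between $A^{(1)}$ and $A^{(n)}$ is the shared randomness $\lambda$---is exactly the key fact, but both concrete steps you build on it have problems. First, the witness: multiplying the \emph{entire} CHSH combination by $s=\prod_{i=2}^{n-1}a_i$ does not give $2\sqrt 2$ ``for every value of $s$''. Conditioned on $s=-1$ the post-measurement state of the endpoints is $\ket{\phi^-}$, for which your fixed observables give CHSH value $0$, and $s\cdot 0=0$; equivalently, the correlators $\langle s\,A_0 B_y\rangle=\langle X_2\cdots X_{n-1}\,Z_1\,B_y\rangle$ vanish identically on $\ket{\mathrm{GHZ}_n}$, so your $I_{\mathrm{GHZ}}$ evaluates to $\sqrt 2$, below even the local bound. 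The correct correction inserts $s$ only in the terms where party $1$ measures $X$---this is precisely the structure of Eq.~\eqref{eqghzline}, where the middle string $A_1^{(2)}\cdots A_1^{(n-1)}$ multiplies the two $A_1^{(1)}$ terms but not the two $A_0^{(1)}$ terms.

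Second, the upper bound. Your inflation sketch is not carried out, and it is not clear it can be: the LONC model is a sequential communication protocol with a single global $\lambda$ and adaptively created resources, not a network of independent sources, so the ``independent copies of the intermediate no-signalling resources'' that inflation requires are not supplied by the causal structure; you would need an additional device-replication postulate together with a proof that each copy reproduces the CHSH value, which is exactly the step you defer. The paper needs none of this. It observes (i) that the bipartite box between the group $(A^{(1)}\cdots A^{(n-1)})$ and $A^{(n)}$ is no-signalling (the input of $A^{(1)}$ cannot reach $A^{(n)}$ in $t<n-1$ steps, and communication is one-way, so $A^{(n)}$'s input cannot reach the group), and (ii) that $A_0^{(n)}$---which is perfectly $Z$-correlated with $A_0^{(1)}$ on the GHZ state---can only be correlated with $A_0^{(1)}$ through $\lambda$ and is therefore reproducible by an external $\lambda$-only party $\tilde A^{(n)}$; a single application of the monogamy relation of~\cite{Augusiak_2014} to the triple (group, $A^{(n)}$, $\tilde A^{(n)}$) then yields the bound of $4$ on the five-term expression of Theorem~\ref{GHZlinefull}. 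Note also that your witness contains no analogue of the term $2\langle A_0^{(1)}A_0^{(n)}\rangle$, which is what lets the monogamy relation bite; to bound the four CHSH terms alone you would instead have to argue that $A_0^{(1)}$ itself is a deterministic function of $\lambda$ (after absorbing party $1$'s local randomness into $\lambda$) and hence perfectly guessable by the external party---an argument you do not make.
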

\begin{proof}
    We prove this theorem in Appendix~\ref{sectionGHZonaline}, where we provide a Bell-like inequality that holds for correlations produced with ${t<n-1}$ communication steps on the path ${A^{(1)}\rightarrow \dots\rightarrow A^{(n)}}$, but that is violated by suitable measurements of the $\ket{{\rm GHZ}_n}$ state. 
\end{proof}

\emph{Discussion---}
In this letter, we presented three definitions of genuinely multipartite nonlocality, based on three different frameworks (i.e. LOCC, LOSR, and LONC) and on which causal explanatory model is to be rejected (see Fig.~\ref{fig:CausalExplanatoryModels}).
After discussing the contradictions inherent to the standard Svetlichny LOCC definition (Fig.~\ref{fig:CausalExplanatoryModels}a), we turned to the model of Fig.~\ref{fig:CausalExplanatoryModels}b and showed in a noise-robust way that in this LOSR definition, caterpillar graph states are LOSR-$\GMNL$. 
We then proposed the new LONC model of Fig.~\ref{fig:CausalExplanatoryModels}c, which allows communication between nearest neighbours. We showed that, in the directed path graph, cluster graph states are trivial while the $\ket{{\rm GHZ}_n}$ state is maximally LONC-$\GMNL$.

While we only discussed the theory-agnostic DI setting (which does not limit the underlying causal explanatory model of Fig.~\ref{fig:CausalExplanatoryModels}  to quantum mechanics, and where, for example, nonlocal boxes could be transmitted), it is straightforward to adapt our results and concepts to the DD setting considered in~\cite{Navascu_s_2020, Makuta2023-kj, wang2022quantum}.

Let us conclude with a discussion of experimental benchmarking based on the concept of $\GMNL$, often used to assess the large, nonclassical behavior of experimental systems~\cite{Riedel2010GMEAtomChip,Lucke2014GMEDickeState, Cao2023GMESupercondQubits,Bornet2023GMESpinSqueezing}.
As we have discussed, many different definitions of $\GMNL$ can be considered, leading to radically different conclusions. 
This begs the question, ``\emph{Which definition is the right one?}'' 
We are convinced that no definitive answer can be given \emph{a priori}.
To select the most appropriate framework, one must first carefully reflect on the experimental setup under observation and on which causal explanatory model of the experimental observations should be ruled out.
As we discussed, Svetlichny's LOCC definition, designed to reject the causal explanatory model of Fig.~\ref{fig:CausalExplanatoryModels}a, is often ill-adapted as it fails to reject causal explanations based on bipartite sources.
While an LOSR definition based on the causal explanatory model of Fig.~\ref{fig:CausalExplanatoryModels}b is appropriate under strict conditions of no-signalling, experimentalists working with many subsystems in close proximity might be more interested in refuting models allowing communication between nearest neighbours, as in the LONC causal explanatory model of Fig.~\ref{fig:CausalExplanatoryModels}c.
Some other experimental situations might require the definition of a novel, custom model. 
Hence, before discussing the nature (or depth) of the genuine multipartite entanglement and genuinely multipartite nonlocality of a quantum system produced by some experiment, it is essential to do some rigorous conceptual work to identify a framework and definition that justify, in alignment with the experimental setup, which causal explanatory model is sensible.

\emph{Acknowledgment---}
We thank Otfried Gühne, Simon Perdrix and Rob Thew for helpful discussions. XCR acknowledges funding from the Swiss National Science Foundation. OM acknowledges funding from Competition for Young Scientists 2023. FC acknowledges funding from the Government of Spain (Severo Ochoa CEX2019-000910-S and TRANQI), Fundació Cellex, Fundació Mir-Puig, Generalitat de Catalunya (CERCA program), ERC AdG CERQUTE and Ayuda PRE2022-101448 financiada por MCIN/AEI/ 10.13039/501100011033 y por el FSE+. MOR acknowledges funding by INRIA Action Exploratoire project
DEPARTURE and from ANR JCJC LINKS (ANR-23-CE47-0003). OM and RA acknowledge support by the (Polish) National Science Center through the SONATA BIS Grant No. 2019/34/E/ST2/00369.

\bibliography{refs}

\newpage
\onecolumngrid
\appendix
\section{Introduction to graph states}\label{app:graph_states}

\begin{figure}[b]
    {\centering
\resizebox{2\textwidth/3}{!}{\definecolor{myred}{RGB}{204,51,17}
\definecolor{myblue}{RGB}{0,119,187}
\definecolor{mygrey}{RGB}{135,135,135}
\definecolor{myteal}{RGB}{80,175,148}
\definecolor{myorange}{RGB}{238,119,51}
\definecolor{mymagenta}{RGB}{238,51,110}

\tikzset{player/.style={circle,draw=black,inner sep=1, minimum size=15pt,fill=white,thick
 }}

\tikzset{ligne/.style={very thick
}}

\centering
\begin{tikzpicture}[scale=1]
\centering

\usetikzlibrary{decorations.pathreplacing}
\tikzset{ligne/.style={very thick,myred
}}
\tikzset{ligne2/.style={very thick,black
}}

\node[player] (A) at (0,0) {$A$};

\node[player] (B) at (-4/3,5/3) {$B$};
\node[player] (C) at (0,2) {$C$};
\node[player] (D) at (1,4/3) {$D$};

\draw[ligne2] (B) -- (A) -- (C);
\draw[ligne2] (A) -- (D);
\draw[ligne] (B) -- (C);

\begin{scope}[shift={(8,0)}]
\node[player] (A) at (0,0) {$A$};
\node[player] (B) at (-4/3,5/3) {$B$};
\node[player] (C) at (0,2) {$C$};
\node[player] (D) at (1,4/3) {$D$};

\draw[ligne2] (B) -- (A) -- (C);
\draw[ligne2] (A) -- (D);
\draw[ligne] (B) -- (D)--(C);
\end{scope}

\draw[ligne2,very thick,double,{<[length=2mm, open]}-{>[length=2mm, open]}] (2,0.8) -- (6,0.8);
\node[align=left,anchor=south] at (4,1) {\large local complementation};

\begin{scope}[shift={(0,-3.2)}]
\node[player] (A) at (0,0) {$A$};
\node[player] (B) at (-4/3,5/3) {$B$};
\node[player] (C) at (0,2) {$C$};
\node[player] (D) at (1,4/3) {$D$};

\draw[ligne2] (B) -- (A);
\draw[ligne2] (A) -- (D);
\draw[ligne2] (A) to[out = 20, in = -90] (D);
\draw[ligne2] (A) to[out = 70, in = 210] (D);

\draw[ligne2] (C) to[out = -120, in = 120] (A);
\draw[ligne2] (C) to[out = -100, in = 100] (A);

\begin{scope}[shift={(8,0)}]
\node[player] (A) at (0,0) {$A$};
\node[player] (B) at (-4/3,5/3) {$B$};
\node[player] (C) at (0,2) {$C$};
\node[player] (D) at (1,4/3) {$D$};

\draw[ligne2] (B) to[out = -80, in = 160] (A);
\draw[ligne2] (B) to[out = -50, in = 140] (A);

\draw[ligne2] (C) to[out = -60, in = 60] (A);
\draw[ligne2] (C) to[out = -80, in = 80] (A);
\draw[ligne2] (C) to[out = -100, in = 100] (A);
\draw[ligne2] (C) to[out = -120, in = 120] (A);

\draw[ligne2] (A) -- (D);
\end{scope}

\draw[ligne2,very thick,double,{[length=2mm, open]}-{>[length=2mm, open]}] (2,0.8) -- (6,0.8);
\node[align=left,anchor=south] at (4,1) {\large vertex multiplication by $2$};
\node[align=left,anchor=west] at (-2,2.5) {\large b.};
\end{scope}

\node[align=left,anchor=west] at (-2,2.5) {\large a.};

\end{tikzpicture}}
   } \caption{
    \begin{minipage}[t]{0.6\linewidth} a. Local complementation acting on $A$, for dimension $d=2$.\\ b. Vertex multiplication by $2$ acting on $A$, for dimension $d=5$.\end{minipage}
   }\label{fig:graph_local_operations}
\end{figure}
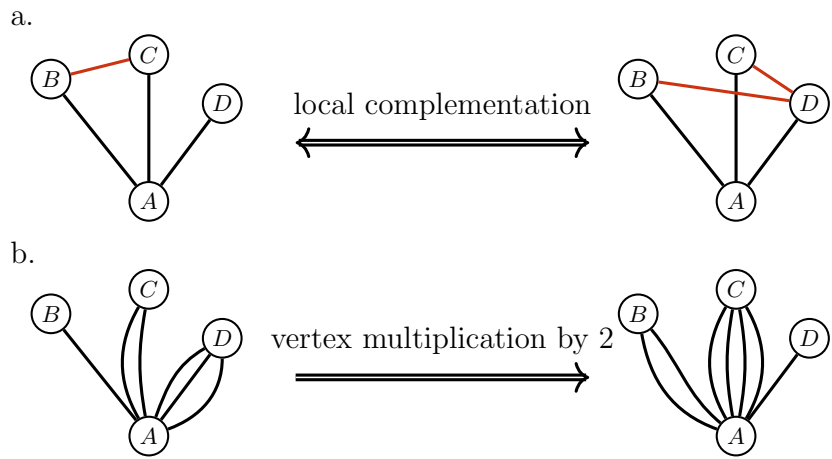
In this section, we give a comprehensive introduction to the graph states relevant from the perspective of our work. However, to properly define graph states, we first have to introduce the basic notions of graph theory. A graph $G$ is defined as a pair of sets $G=(V,E)$ where $V$ is a set of elements called vertices, while $E$ is the set consisting of edges connecting elements from $V$. In this work, we make the most commonly used assumption that each individual edge connects exactly two, different vertices. However, in contrast to what is usually assumed, we consider graphs in which two vertices can be connected by more than one edge. Such graphs are commonly referred to as multigraphs, but here we do not make that distinction and refer to them simply as graphs. 

Among the many properties of graphs that are commonly studied, three are of particular interest in the context of our work. The first one is connectivity: we say that a graph is connected if there exists a path from any vertex to any other vertex in the graph, i.e. we can get from any vertex to any other vertex by traveling along the edges of the graph.

The second definition of interest to us is that of an induced subgraph $G_{S}=(V_{S},E_{S})$ of a graph $G=(V,E)$, which is defined by the following relations:
\begin{equation}
V_{S}\subseteq V\,, \qquad E_{S}=\{ \{i,j\}\in E | i,j \in V_{S} \}\,.
\end{equation}
In descriptive terms, an induced subgraph $G_{S}$ is constructed from the graph $G$ by removing some vertices of $G$ and all edges related to the removed vertices. The third important graph property is the neighbourhood $\mathcal{N}_{i}$ of a vertex $i$. It is defined as a set of all vertices in $V$ connected to $i$ by at least one edge
\begin{equation}\label{eq:neighbourhood}
\mathcal{N}_{i}=\{j\in V | \{i,j\}\in E\}\,.
\end{equation}

Using these three properties, we can define the classes of graphs on which we focus in our work. The first class, called linear graphs (or path graphs), is pretty self-explanatory: it is a graph that forms a line. In more technical terms, a graph $G=(V,E)$ is a linear graph if it is connected and if, for exactly two vertices $i,j\in V$, we have $|\mathcal{N}_{i}|=|\mathcal{N}_{j}|=1$, while $|\mathcal{N}_{k}|=2$ for all $k\in V\setminus \{i,j\}$. In order to define the second class, we first need to consider the induced subgraph $G_{S}=(V_{S},E_{S})$ of $G=(V,S)$ defined by 
\begin{equation}\label{eq:spine}
V_{S}=\{i\in V| |\mathcal{N}_{i}|\geqslant 2\}\,.
\end{equation}
If $G$ is connected and the induced subgraph $G_{S}$ is a linear graph, then we call $G$ a caterpillar graph. In other words, a caterpillar graph is a path graph with additional vertices attached to it, such that the only vertex in their neighbourhood is one of the vertices in the path graph.  See Fig.~\ref{fig:graphstates} for an example of these two classes.

While the formal definition of a graph and its graphical description both prove useful for some applications, they are unsuitable for an efficient mathematical analysis of a graph structure. To that end, one considers an adjacency matrix $\Gamma$ of a graph $G$. The matrix $\Gamma$ is defined element-wise: a matrix element $\Gamma_{i,j}$ corresponds to the number of edges $\{i,j\}\in E$, connecting vertices $i,j\in V$, where we label vertices as $V=\{1,2,\dots,N \}$. 

There are three properties of $\Gamma$ worth highlighting. First, each element $\Gamma_{i,j}$ has to be a natural number (including zero), as vertices can be connected only by a discrete number of edges, but this number can be larger than one, as we consider multigraphs. Second, $\Gamma$ has to be symmetric, as the number of edges connecting $i$ and $j$ is the same as the number of edges connecting $j$ and $i$. Lastly, since we exclude edges that connect a vertex to itself, we have $\Gamma_{i,i}=0$ for all $i\in V$.

With that, we are ready to define graph states. Let us consider a Hilbert space $\mathcal{H}=\mathbb{C}_{d}^{\otimes N}$ and generalised Pauli matrices
\begin{equation}\label{eq:def_xz}
X= \sum_{j=0}^{d-1} \ket{j+1}\! \bra{j}\,, \qquad Z=\sum_{j=0}^{d-1}\omega^{j} \ket{j}\! \bra{j}\,,
\end{equation}
where $\omega=\exp(2 \pi \mathbb{i}/d)$ and $\ket{d} \equiv\ket{0}$. Given $\mathcal{H}$ and a graph $G$, a graph state $\ket{G}\in \mathcal{H}$ is defined as the unique state fulfilling
\begin{equation}
g_{i} \ket{G} = \ket{G}
\end{equation}
for all $i$, where
\begin{equation}\label{eqn: gi_def}
g_{i} = X_{i} \prod_{j=1}^{N} Z_{j}^{\Gamma_{i,j}}
\end{equation}
and $X_{i}$ is an operator acting with a generalised Pauli matrix $X$ on the $i$'th subsystem (qudit) and with an identity operator $\mathbb{1}$ on every other qudit (similarly for $Z_{j}$). Note that only the number of edges $\Gamma_{i,j} \mod d$ matters in \eqref{eqn: gi_def}, so we are free to restrict our consideration to graphs for which $\Gamma_{i,j} \leqslant d-1$ holds for all $i, j$.

One of the most important properties of graph states is that some of them are equivalent to each other via local unitary operations. 
For example, we point out that the state $\ket{C_4}$ described in the main text (and repeated in Eq.~\eqref{eqn: 4cluster_state} in Appendix~\ref{app: 4-cluster}) is described by a set of stabilizers equivalent to $\mathbb{S}=\{ ZZII, XXZI, IZXX, IIZZ\}$, whereas applying Eq.~\eqref{eqn: gi_def} would define the four-party linear graph state as stabilized by $\mathbb{S}'=\{ XZII, ZXZI, IZXZ, IIZX\}$. However, $\mathbb{S}$ and $\mathbb{S}'$, and therefore their associated states, are equivalent up to the local unitary $U=HIIH$, where $H$ is the Hadamard matrix. (Note that in the above, we have omitted the tensor product symbol $\otimes$ to lighten the notation.)

Interestingly, the equivalence under local unitaries can also be studied on the graphs themselves, i.e. for some graph transformations $G\rightarrow G'$ we know that there exists a local unitary operation $LU: \ket{G}\rightarrow \ket{G'}$ (see \cite{bahramgiri2007graph} for more details on this topic).  

The first such operation is called \emph{local complementation} (see Fig.~\ref{fig:graph_local_operations}a). This operation acts on a chosen vertex $n\in V$ such that the edges in the graph are modified according to the following formula 
\begin{equation}
\Gamma_{i,j}'=\Gamma_{i,j} + \Gamma_{i,n}\Gamma_{j,n} \mod d\,,
\end{equation}
where $\Gamma'$ is the adjacency matrix of the graph after the transformation. Notice, that this changes only the neighbourhood of $n$ (or more formally, the induced subgraph $G_{S}=(V_{S}=\mathcal{N}_{n},E_{S})$) while keeping the rest of the edges in the graph unchanged.

The second transformation of this type is \emph{vertex multiplication} (see Fig.~\ref{fig:graph_local_operations}b), whose action on a vertex $n$ is described by
\begin{equation}
\Gamma_{i,n}'= b \Gamma_{i,n} \mod d \quad \textrm{for all }i\in V\,, \qquad \Gamma_{i,j}'=\Gamma_{i,j} \quad \textrm{for all }i,j\neq n\,,
\end{equation}
where $b\in\{1,\dots,d-1\}$ can be chosen freely.

Using both of these transformations, one can find graph states that are related to each other via local unitaries. While this is a very difficult task in general, for some examples it is fairly straightforward. Let us consider one of the simplest examples relevant to our results: let us show that, given that $d$ is prime, every graph state corresponding to a linear graph, where the number of vertices $N$ can vary, is equivalent up to local unitaries to a graph state corresponding to a linear graph in which two consecutive vertices are connected by a single edge (called cluster state).

Without loss of generality, we can take that
\begin{equation}
\Gamma_{i,i+1} = \Gamma_{i+1,i} \neq 0 \mod d \quad \textrm{for all  }i\in\{1,\dots,N-1\}\,,
\end{equation}
and $\Gamma_{i,j}=0$ for all other $i$ and $j$. Let us start by acting on vertex $2$ with vertex multiplication such that
\begin{equation}
\Gamma_{1,2}' = b\Gamma_{1,2} = 1 \mod d.
\end{equation}
Notice that we can always find such a $b$ as long as $d$ is prime. Next, we act on vertex $3$ with vertex multiplication such that
\begin{equation}
\Gamma_{2,3}' = b\Gamma_{2,3} = 1 \mod d\,.
\end{equation}
We can repeat this procedure until we get
\begin{equation}
\Gamma_{i,i+1} = 1 \quad \textrm{for all } i\in \{1,\dots, N-2\}\,.
\end{equation}
Lastly, to make $\Gamma_{N-1,N}' =1$, we act with an appropriate vertex multiplication on the vertex $N$.

\section{LOSR-GMNL detection for a 4-qubit cluster state}\label{app: 4-cluster}
\begin{figure}[h]
\centering
\resizebox{\columnwidth/2}{!}{%
\begin{tikzpicture}[scale=0.18]
\centering

\node[player] (A0) at (0,10) {$A$};
\node[player] (B0) at (10,10) {$B$};
\node[player] (C0) at (0,0) {$C$};
\node[player] (D0) at (10,0) {$D$};

\node[sourcea] (SA0) at (10/3,20/3) { };
\node[sourceb] (SB0) at (20/3,20/3) { };
\node[sourcec] (SC0) at (10/3,10/3) { };
\node[sourced] (SD0) at (20/3,10/3) { };
\begin{scope}[on background layer]
\draw[ligne] (A0) -- (SA0) (A0) -- (SB0) (A0) -- (SC0);
\draw[ligne] (B0) -- (SA0) (B0) -- (SB0) (B0) -- (SD0);
\draw[ligne] (C0) -- (SA0) (C0) -- (SD0) (C0) -- (SC0);
\draw[ligne] (D0) -- (SD0) (D0) -- (SB0) (D0) -- (SC0);
\end{scope}
\node at (5,-4) {\Large $\mathcal{I}_0$};

\draw[fleche] (12,5) -- (16,5) ;
\draw[fleche] (32,-15) -- (36,-15) ;

\draw[fleche] (32,5) -- (36,5) ;
\draw[fleche] (12,-15) -- (16,-15) ;
\draw[fleche] (53,5) -- ({53+4*cos(35)},{5-4*sin(35}) ;
\draw[fleche] (53,-20+5) -- ({53+4*cos(35)},{-20+5+4*sin(35}) ;

\begin{scope}[shift={(20,0)}]
\node[player] (A0) at (0,10) {$A$};
\node[player] (B0) at (10,10) {$B$};
\node[player] (C0) at (0,0) {$C$};
\node[player] (D0) at (10,0) {$D$};

\node[sourcea] (SA0) at (10/3,20/3) { };
\node[sourceb] (SB0) at (20/3,20/3) { };
\node[sourcec] (SC0) at (10/3,10/3) { };
\node[sourcec] (SC1) at (0,5) { };
\node[sourced] (SD0) at (20/3,10/3) { };
\begin{scope}[on background layer]
\draw[ligne] (A0) -- (SA0) (A0) -- (SB0) (A0) -- (SC1);
\draw[ligne] (B0) -- (SA0) (B0) -- (SB0) (B0) -- (SD0);
\draw[ligne] (C0) -- (SA0) (C0) -- (SD0) (C0) -- (SC0);
\draw[ligne] (D0) -- (SD0) (D0) -- (SB0) (D0) -- (SC0);
\end{scope}
\node at (5,-4) {\Large $\mathcal{I}_1$};
\end{scope}

\begin{scope}[shift={(40,0)}]
\node[player] (A0) at (0,10) {$A$};
\node[player] (B0) at (10,10) {$B$};
\node[player] (C0) at (0,0) {$C$};
\node[player] (D0) at (10,0) {$D$};

\node[sourcea] (SA0) at (10/3,20/3) { };
\node[sourceb] (SB0) at (20/3,20/3) { };
\node[sourcec] (SC0) at (10/3,10/3) { };
\node[sourcec] (SC1) at (0,5) { };
\node[sourced] (SD0) at (20/3,10/3) { };
\node[sourced] (SD1) at (10,5) { };
\begin{scope}[on background layer]
\draw[ligne] (A0) -- (SA0) (A0) -- (SB0) (A0) -- (SC1);
\draw[ligne] (B0) -- (SA0) (B0) -- (SB0) (B0) -- (SD1);
\draw[ligne] (C0) -- (SA0) (C0) -- (SD0) (C0) -- (SC0);
\draw[ligne] (D0) -- (SD0) (D0) -- (SB0) (D0) -- (SC0);
\end{scope}
\node at (5,-4) {\Large $\mathcal{I}_2$};
\end{scope}

\begin{scope}[shift={(0,-10)},yscale=-1]
\node[player] (A0) at (0,10) {$A'$};
\node[player] (B0) at (10,10) {$B'$};
\node[player] (C0) at (0,0) {$C$};
\node[player] (D0) at (10,0) {$D$};

\node[sourcea] (SA0) at (10/3,20/3) { };
\node[sourceb] (SB0) at (20/3,20/3) { };
\node[sourcec] (SC0) at (10/3,10/3) { };
\node[sourced] (SD0) at (20/3,10/3) { };
\begin{scope}[on background layer]
\draw[ligne] (A0) -- (SA0) (A0) -- (SB0) (A0) -- (SC0);
\draw[ligne] (B0) -- (SA0) (B0) -- (SB0) (B0) -- (SD0);
\draw[ligne] (C0) -- (SA0) (C0) -- (SD0) (C0) -- (SC0);
\draw[ligne] (D0) -- (SD0) (D0) -- (SB0) (D0) -- (SC0);
\end{scope}
\node at (5,14) {\Large $\mathcal{I}_3$};
\end{scope}

\begin{scope}[shift={(20,-10)},yscale=-1]
\node[player] (A0) at (0,10) {$A'$};
\node[player] (B0) at (10,10) {$B'$};
\node[player] (C0) at (0,0) {$C$};
\node[player] (D0) at (10,0) {$D$};

\node[sourcea] (SA0) at (10/3,20/3) { };
\node[sourcea] (SA1) at (0,5) { };
\node[sourceb] (SB0) at (20/3,20/3) { };
\node[sourcec] (SC0) at (10/3,10/3) { };
\node[sourced] (SD0) at (20/3,10/3) { };
\begin{scope}[on background layer]
\draw[ligne] (A0) -- (SA0) (A0) -- (SB0) (A0) -- (SC0);
\draw[ligne] (B0) -- (SA0) (B0) -- (SB0) (B0) -- (SD0);
\draw[ligne] (C0) -- (SA1) (C0) -- (SD0) (C0) -- (SC0);
\draw[ligne] (D0) -- (SD0) (D0) -- (SB0) (D0) -- (SC0);
\end{scope}
\node at (5,14) {\Large $\mathcal{I}_4$};
\end{scope}

\begin{scope}[shift={(40,-10)},yscale=-1]
\node[player] (A0) at (0,10) {$A'$};
\node[player] (B0) at (10,10) {$B'$};
\node[player] (C0) at (0,0) {$C$};
\node[player] (D0) at (10,0) {$D$};

\node[sourcea] (SA0) at (10/3,20/3) { };
\node[sourcea] (SA1) at (0,5) { };
\node[sourceb] (SB0) at (20/3,20/3) { };
\node[sourceb] (SB1) at (10,5) { };
\node[sourcec] (SC0) at (10/3,10/3) { };
\node[sourced] (SD0) at (20/3,10/3) { };
\begin{scope}[on background layer]
\draw[ligne] (A0) -- (SA0) (A0) -- (SB0) (A0) -- (SC0);
\draw[ligne] (B0) -- (SA0) (B0) -- (SB0) (B0) -- (SD0);
\draw[ligne] (C0) -- (SA1) (C0) -- (SD0) (C0) -- (SC0);
\draw[ligne] (D0) -- (SD0) (D0) -- (SB1) (D0) -- (SC0);
\end{scope}
\node at (5,14) {\Large $\mathcal{I}_5$};
\end{scope}

\begin{scope}[shift={(180/3,-5)}]
\node[player] (A0) at (0,10) {$A$};
\node[player] (B0) at (10,10) {$B$};
\node[player] (C0) at (0,0) {$C$};
\node[player] (D0) at (10,0) {$D$};

\node[sourcea] (SA0) at (10/3,20/3) { };
\node[sourceb] (SB0) at (20/3,20/3) { };
\node[sourcec] (SC1) at (0,5) { };
\node[sourced] (SD1) at (10,5) { };
\begin{scope}[on background layer]
\draw[ligne] (A0) -- (SA0) (A0) -- (SB0) (A0) -- (SC1);
\draw[ligne] (B0) -- (SA0) (B0) -- (SB0) (B0) -- (SD1);
\draw[ligne] (C0) -- (SA0) ;
\draw[ligne] (D0) -- (SB0) ;
\end{scope}
\begin{scope}[yscale=-1]
\node[player] (A0) at (0,10) {$A'$};
\node[player] (B0) at (10,10) {$B'$};

\node[sourcea] (SA0) at (10/3,20/3) { };
\node[sourceb] (SB0) at (20/3,20/3) { };
\node[sourcec] (SC0) at (10/3,10/3) { };
\node[sourced] (SD0) at (20/3,10/3) { };
\begin{scope}[on background layer]
\draw[ligne] (A0) -- (SA0) (A0) -- (SB0) (A0) -- (SC0);
\draw[ligne] (B0) -- (SA0) (B0) -- (SB0) (B0) -- (SD0);
\draw[ligne] (C0) -- (SD0) (C0) -- (SC0);
\draw[ligne] (D0) -- (SD0) (D0) -- (SC0);
\end{scope}
\end{scope}

\node at (5,-14.5) {\Large $\mathcal{J}$};

\end{scope}

\end{tikzpicture}}

FIG.~\ref{fig:intuitionc4} from the main text. It represents the inflation method used here in Appendix~\ref{app: 4-cluster}.
\end{figure}

In this section, we derive a proof of LOSR-$\GMNL$ for an informative example of the $4$-qubit cluster state, in order to show readers unfamiliar with the inflation technique how it works. Let us begin by recalling the monogamy relation from \cite[Theorem 1]{Augusiak_2014},
\begin{align}
I^{\tilde{C}D}+2\langle \tilde{C}_0 E_{0}\rangle \leqslant 4\,,
\end{align}
where $I^{\tilde{C}D}$ is the CHSH inequality \cite{PhysRevLett.23.880}
\begin{equation}
I^{\tilde{C}D} = \langle \tilde{C}_0 D_0 \rangle + \langle \tilde{C}_0 D_1 \rangle + \langle \tilde{C}_1 D_0\rangle -\langle \tilde{C}_1 D_1 \rangle\
\end{equation}
and $\tilde{C}_{0}, \tilde{C}_{1}$ are dichotomic observables with eigenvalues $\pm 1$ representing measurements of the party $\tilde{C}$  (likewise for $D_{0},D_{1},E_{0}$). By substituting $\tilde{C}_{0}=C_{0}$, $\tilde{C}_{1}=A_{0}C_{1}$, $E_{0}=A'_{1}B'_{1}$, where $A_{i}, A'_{i}$ are observables belonging to parties $A$ and $A'$ respectively, we get the following inequality
\begin{equation}\label{eq:monogamy_example}
I^{ACD} + 2 \langle A'_{1} B'_{1} C_{0}\rangle \leqslant 4\,,
\end{equation}
where
\begin{align}
I^{\tilde{C}D}=I^{ACD}= \langle C_0 D_0 \rangle + \langle C_0 D_1 \rangle + \langle A_0 C_1 D_0\rangle -\langle A_0 C_1 D_1 \rangle\,.
\end{align}

Let us now consider the inflations presented in Fig.~\ref{fig:intuitionc4}. From the assumption that parties in a network cannot communicate with each other classically, we can conclude that the monogamy relation \eqref{eq:monogamy_example} has to hold for any correlation originating from the inflation $\mathcal{J}$. The task now is to express the above inequality in terms of correlations originating from the original network, represented in Fig.~\ref{fig:intuitionc4} by $\mathcal{I}_{0}$.

Let us begin by considering the CHSH-like term $I^{ACD}$ calculated for the correlations originating from the inflation $\mathcal{I}_{1}$. Since we assume that each copy of a party performs the same measurements and each copy of a source generates the same correlations, if the value of $I^{ACD}$ is different for two inflations, it has to be due to the different network structures between two inflations. Moreover, notice that if the value $I^{ACD}$ depended on which sources $B$ is connected to, $B$ could manipulate this value by choosing to perform measurements on some local correlations rather than correlations sent by the sources in the network. This way $B$ could send information to the rest of the parties, and since we assume that parties cannot communicate with each other, we can conclude that the value of $I^{ACD}$ can only depend on the structure of the subnetwork of parties $A, C$ and $D$. 

Consequently, since the subnetwork of parties $A, C$ and $D$ in $\mathcal{J}$ is equivalent to the same subnetwork in $\mathcal{I}_{1}$, it follows that the value of $I^{ACD}$ is the same over correlations originating from $\mathcal{J}$ and $\mathcal{I}_{1}$:
\begin{equation}
I^{ACD}_{\mathcal{J}} = I^{ACD}_{\mathcal{I}_{1}}\,.
\end{equation}

Furthermore, looking at each expected value in $I^{ACD}$ individually, we can see that for $\langle C_{0}D_{0}\rangle$ and $\langle C_{0}D_{1}\rangle$ the only relevant subnetwork is actually the subnetwork of parties $C$ and $D$, and since this subnetwork is equivalent in $\mathcal{J}$, $\mathcal{I}_{1}$ and $\mathcal{I}_{0}$ we have
\begin{equation}
\langle C_{0}D_{j}\rangle_{\mathcal{J}}=\langle C_{0}D_{j}\rangle_{\mathcal{I}_{1}}=\langle C_{0}D_{j}\rangle_{\mathcal{I}_{0}}
\end{equation}
for $j\in\{0,1\}$. However, the same cannot be said for the expected values $\langle A_{0}C_{1}D_{0}\rangle$ and $\langle A_{0}C_{1}D_{1}\rangle$, since the subnetwork of parties $A$, $C$ and $D$ in $I_{0}$ is inequivalent to the same subnetwork in $\mathcal{I}_{1}$.  Luckily, we can overcome this problem using an inequality from \cite{Navascu_s_2020}
\begin{align}
\langle A_0 C_1 D_0 \rangle_{\mathcal{I}_{1}} \geqslant \langle A_0 B_0 \rangle_{\mathcal{I}_{1}} + \langle B_0 C_1 D_0 \rangle_{\mathcal{I}_{1}} -1\,,
\end{align}
(see Lemma \ref{lem:ineq} in Appendix \ref{app:C_N} for a general formulation of this inequality).

Now, we can again argue that the subnetwork of parties $A$ and $B$ is equivalent in both $\mathcal{I}_{1}$ and $\mathcal{I}_{0}$, and that the subnetwork of parties $B$, $C$ and $D$ is also equivalent in these two inflations, hence
\begin{equation}
\langle A_0 B_0 \rangle_{\mathcal{I}_{1}} = \langle A_0 B_0 \rangle_{\mathcal{I}_{0}}\,,\qquad \langle B_0 C_1 D_0 \rangle_{\mathcal{I}_{1}} =  \langle B_0 C_1 D_0 \rangle_{\mathcal{I}_{0}}\,.
\end{equation}
We can follow the same argument to get 
\begin{align}
-\langle A_0 C_1 D_1 \rangle_{\mathcal{I}_{1}} \geqslant \langle A_0 B_0 \rangle_{\mathcal{I}_{0}} - \langle B_0 C_1 D_1 \rangle_{\mathcal{I}_{1}} -1\,.
\end{align}

Putting it all together, we conclude that the value of $I^{ACD}_{\mathcal{J}}$ is bounded from below by
\begin{align}
I^{ACD}_{\mathcal{J}} \geqslant I^{BCD}_{\mathcal{I}_0}+2 \langle A_0 B_0 \rangle_{\mathcal{I}_0}-2\,,
\end{align}
where
\begin{equation}\label{ineq:example_chsh_bound}
I^{BCD} = \langle C_{0}D_{0}\rangle + \langle C_{0}D_{1}\rangle + \langle B_0 C_1 D_0 \rangle- \langle B_0 C_1 D_1 \rangle\,.
\end{equation}

Let us now look at the term $\langle C_{0} A'_1 B'_1\rangle$ in \eqref{eq:monogamy_example}. Since the subnetwork of parties $A'$, $B'$ and $C$ is the same in $\mathcal{J}$ as in $\mathcal{I}_{4}$, we have
\begin{equation}
\langle C_0 A'_1 B'_1 \rangle_{\mathcal{J}}=\langle C_0 A'_1 B'_1 \rangle_{\mathcal{I}_{4}}\,.
\end{equation}
Next, we can use the inequality from \cite{Navascu_s_2020} to get the following:
\begin{align}
  \langle C_0 A'_1 B'_1 \rangle_{\mathcal{I}_{4}} \geqslant \langle  C_0 D_2 \rangle_{\mathcal{I}_{4}} + \langle A'_1 B'_1 D_2 \rangle_{\mathcal{I}_{4}} -1\,. 
\end{align}
Since the subnetworks of $C$ and $D$ and of $A'$, $B'$ and $D$ are the same in $\mathcal{I}_{3}$ as in $\mathcal{I}_4$, we have
\begin{equation}
    \langle  C_0 D_2 \rangle_{\mathcal{I}_{4}} = \langle  C_0 D_2 \rangle_{\mathcal{I}_{3}}\,,\qquad \langle A'_1 B'_1 D_2 \rangle_{\mathcal{I}_{4}} = \langle A'_1 B'_1 D_2 \rangle_{\mathcal{I}_{3}}\,.
\end{equation}

Moreover, since $\mathcal{I}_{3}$ can be transformed into $\mathcal{I}_{0}$ by relabeling parties $A'$ and $B'$ to $A$ and $B$, we can conclude that
\begin{equation}
\langle  C_0 D_2 \rangle_{\mathcal{I}_{3}} = \langle  C_0 D_2 \rangle_{\mathcal{I}_{0}}\,,\qquad \langle A'_1 B'_1 D_2 \rangle_{\mathcal{I}_{3}} = \langle A_1 B_1 D_2 \rangle_{\mathcal{I}_{0}}\,,
\end{equation}
therefore
\begin{equation}
    \langle C_0 A'_1 B'_1 \rangle_{\mathcal{J}} \geqslant \langle  C_0 D_2 \rangle_{\mathcal{I}_{0}} + \langle A_1 B_1 D_2 \rangle_{\mathcal{I}_{0}} -1\,. 
\end{equation}
Combining this with Ineq. \eqref{ineq:example_chsh_bound} and \eqref{eq:monogamy_example}, we get the final inequality, that has to be fulfilled for any correlations originating from $\mathcal{I}_{0}$,
\begin{align}\label{ineq:example_final}
    I^{BCD}_{\mathcal{I}_{0}}+ 2\langle A_0 B_0 \rangle_{\mathcal{I}_{0}} + 2 \langle C_0 D_2 \rangle_{\mathcal{I}_{0}} + 2\langle A_1 B_1 D_2 \rangle_{\mathcal{I}_{0}} \leqslant 8\,.
\end{align}

We can now use this inequality to show that the cluster state
\begin{equation}\label{eqn: 4cluster_state}
\ket{C_{4}} = \frac{1}{2}\left(\ket{0000} + \ket{0011}+\ket{1100} - \ket{1111}\right)
\end{equation}
is LOSR-$\GMNL$. To this end, we take the  observables 
\begin{equation}
A_{0}=B_{0}=C_{0}=D_{2}=Z\,,\quad A_{1}=B_{1}=C_{1}=X\,,\quad D_{0}=\frac{1}{\sqrt{2}}(Z+X)\,,\quad D_{1}=\frac{1}{\sqrt{2}}(Z-X)\,.
\end{equation}
Computing the left-hand side of Ineq. \eqref{ineq:example_final} for those observables and a state $\ket{C_{4}}$ gives us the value $2\sqrt{2}+6$ which violates the inequality. This implies that measurement statistics of $\ket{C}_{4}$ cannot be reproduced by the network $\mathcal{I}_{0}$, or, in other words, that $\ket{C_{4}}$ is LOSR-$\GMNL$.

\section{All caterpillar states are LOSR-GMNL}\label{app:C_N}

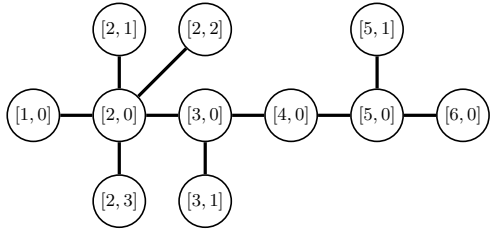
\begin{figure}[b]
    \centering
   \resizebox{0.8\textwidth/2}{!}{\definecolor{myred}{RGB}{204,51,17}
\definecolor{myblue}{RGB}{0,119,187}
\definecolor{mygrey}{RGB}{135,135,135}
\definecolor{myteal}{RGB}{80,175,148}
\definecolor{myorange}{RGB}{238,119,51}
\definecolor{mymagenta}{RGB}{238,51,110}

\tikzset{player/.style={circle,draw=black,inner sep=2, minimum size=15pt,fill=white,thick
 }}

\tikzset{ligne/.style={very thick
}}

\centering
\begin{tikzpicture}[scale=1.5]
\centering

\usetikzlibrary{decorations.pathreplacing}
\tikzset{ligne/.style={ultra thick,myred
}}
\tikzset{ligne2/.style={ultra thick,black
}}

\node[player] (10) at (0,0) {$[1,0]$};
\node[player] (20) at (1,0) {$[2,0]$}; 
\node[player] (30) at (2,0) {$[3,0]$};
\node[player] (40) at (3,0) {$[4,0]$};
\node[player] (50) at (4,0) {$[5,0]$};

\node[player] (21) at (1,1) {$[2,1]$};
\node[player] (21b) at (2,1) {$[2,2]$};
\node[player] (21c) at (1,-1) {$[2,3]$};
\node[player] (31) at (2,-1) {$[3,1]$};
\node[player] (51) at (4,1) {$[5,1]$};
\node[player] (52) at (5,0) {$[6,0]$};

\draw[ligne2] (10) -- (20) -- (30) -- (40) -- (50);
\draw[ligne2] (21b)--(20) -- (21);
\draw[ligne2] (20)--(21c);
\draw[ligne2] (30)--(31);
\draw[ligne2] (52)--(50)--(51);

\end{tikzpicture}}
    \caption{Notation used for the vertices of a caterpillar graph. Note how the extremal vertices $[1,0]$ and $[6,0]$ have degree $1$ by definition.
   }\label{fig:notation_caterpillars}
\end{figure}

In this section, we show that all graph states corresponding to caterpillar graphs (including, as a special case, all cluster states) 
are LOSR-$\GMNL$. 
To this end, we first construct an inequality based on the monogamy relation from \cite[Theorem 1]{Augusiak_2014} that has to be satisfied by any behavior originating from an $N$-partite network with $(N-1)$-partite sources of non-signalling correlations and an $N$-partite source of classical correlations. We then show that there exists a choice of measurements under which caterpillar states violate that inequality.

Let us begin by establishing the notation. An $N$-partite caterpillar state $\ket{\ddagger_{N}}$ is a graph state corresponding to a caterpillar graph with $N$ vertices (see Appendix \ref{app:graph_states} for more details on graph states, including the definition of caterpillar graphs). 

It will be convenient for the purposes of the proof to denote those vertices/parties not with a single number, but with a pair of numbers. To this end, we first need to find the longest induced line subgraph of the caterpillar graph. We denote the vertices belonging to this subgraph by $[i,0]$ for $i\in \{1,\dots, L\}$, where $L$ is the order (the number of vertices) of the line subgraph. We denote vertices that do not belong to said linear graph by $[i,j]$ for $j\geqslant 1$, where $i$ is chosen such that $[i,j]$ is connected by an edge only to the vertex $[i,0]$. Notice that since vertices $\{[1,0],[2,0],\dots,[L,0]\}$ form the longest induced line subgraph, there are no vertices $[i,j]$ such that $i=1,L$ and $j\geqslant 1$. See Fig.~\ref{fig:notation_caterpillars} for a visual example.

Under this notation, the state $\ket{\ddagger_{N}}$ is uniquely defined by the relations $g_{[i,j]}\ket{\ddagger_{N}}=\ket{\ddagger_{N}}$ for
\begin{align}\label{eqn: caterpillar_stabs}
\begin{split}
g_{[1,0]}&=X_{[1,0]}Z_{[2,0]}\,,\\
g_{[i,0]}&=Z_{[i-1,0]}X_{[i,0]}Z_{[i+1,0]}\prod_{j= 1}^{n_{i}}Z_{[i,j]}\quad \textrm{for } i\in\{2,\dots,L-1\}\,,\\
g_{[i,j]}&=Z_{[i,0]}X_{[i,j]}\qquad \textrm{for } j\in \{1,\dots,n_{i}\}\,,\\
g_{[L,0]}&=Z_{[L-1,0]}X_{[L,0]}\,,
\end{split}
\end{align}
where $X_{[i,j]}, Z_{[i,j]}$ are Pauli operators acting on the $[i,j]$'th qubit (see Eq. \eqref{eq:def_xz} for $d=2$), $n_{i}=|\mathcal{N}_{[i,0]}|-2$ and $\mathcal{N}_{[i,0]}$ is the neighbourhood of $[i,0]$ (see Appendix \ref{app:graph_states} for the definition). 

The scenario considered in the proof is the following: an $N$-partite network $\mathcal{O}$ with $N$, $(N-1)$-partite sources of non-signalling correlations and an $N$-partite source of classical correlations. We label each party with the index $[i,j]$, as explained above, and we denote by $\tau_{[i,j]}$ a source distributing correlations to every party except party $[i,j]$. Moreover, we assume that the party $[L-1,0]$ can perform one of three measurements, $A_{[L-1,0];0}$, $A_{[L-1,0];1}$ or $A_{[L-1,0];2},$ while the rest of the parties $[i,j]$ can perform one of two measurements, $A_{[i,j];0}$ or $A_{[i,j];1}$.

The main method we use to construct the inequality capable of detecting LOSR-$\GMNL$ is the inflation technique. Given a network $\mathcal{O}$ described above, an inflation $\mathcal{I}$ of $\mathcal{O}$ is a network consisting of multiple copies of each party, multiple copies of each $(N-1)$-partite source of non-signalling correlations, and one global source of classical correlations. We assume that each copy of a given party is performing the same measurements as the original party in $\mathcal{O}$ and that each copy of a given source is distributing the same correlations as the original source $\tau_{[i,j]}$ in $\mathcal{O}$.

On top of the standard assumptions associated with the inflation technique, we also put additional assumptions on the inflations $\mathcal{I}$. First, we only consider inflations consisting of exactly two copies of each party $[i,j]$, which we denote by $[i,j]$ and $[i',j]$, and exactly two copies of each source $\tau_{[i,j]}$, which we denote by $\tau_{[i,j]}$ and $\tau_{[i',j]}$. Next, we require that party $[i,j]$ receives correlations from either $\tau_{[k,l]}$ or $\tau_{[k',l]}$ for all $[k,l]\neq [i,j]$. Together, these two assumptions imply that if party $[i,j]$ receives correlations from $\tau_{[k,l]}$ ($\tau_{[k',l]}$) then $[i',j]$ receives correlations from $\tau_{[k',l]}$ ($\tau_{[k,l]}$).

Let us consider a set of parties in $\mathcal{I}$ to which the source $\tau_{[i,j]}$ distributes correlations. Notice, that, using the above assumptions about $\mathcal{I}$, we can uniquely identify a source $\tau_{[i,j]}$ just by the set of non-primed parties that receive correlations from this source. Leveraging this fact, we introduce the notation $\tau_{[i,j]}=S$, where $S$ is a set of non-primed parties that receive correlations from $\tau_{[i,j]}$. From the symmetry of $\mathcal{I}$ it follows that $\tau_{[i,j]}\cup \tau_{[i',j]}=P\setminus \{[i,j]\}$, where $P$ is the set of all non-primed parties, which implies that the set $\{\tau_{[i,j]}\}_{[i,j]\in P}$ uniquely identifies $\mathcal{I}$.

To make the description of $\tau_{[i,j]}$ and other sets more concise, let us define the following sets:
\begin{align}\label{eq:set_short}
\begin{split}
(n,m)_{1}&=\{n+i;i\in\{0,1,\dots,m-n\}\}\,,\\
(n,m)_{2}&=\{n+2i;i\in\{0,1,\dots,(m-n)/2\}\}\,,\\
\end{split}
\end{align}
where we take $(n,m)_{1}=\emptyset=(n,m)_{2}$ if $n>m$. 

Lastly, we are going to use the following Lemma, which follows directly as a consequence of the results in \cite{Navascu_s_2020}.
\begin{lemma}\label{lem:ineq}
Let $M_{1}, M_{2}$ and $M_{3}$ be observables with outcomes $\pm 1$. Then the following inequality holds true:
\begin{equation}
\langle M_{1} M_{2} \rangle \geqslant \langle M_{1} M_{3} \rangle + \langle M_{2} M_{3} \rangle -1\,.
\end{equation}
\end{lemma}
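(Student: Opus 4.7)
The plan is to reduce the claim to a pointwise arithmetic inequality on the possible $\pm 1$ outcomes of the three observables, and then take expectations. In the inflation scenarios in which this lemma is applied later in the appendix, the three observables are accessible to parties that share a common no-signalling resource, so a joint probability distribution over the triple of outcomes $(M_1, M_2, M_3)$ is well-defined, and $\langle M_i M_j\rangle$ denotes the expectation of the product $M_i M_j$ with respect to it. It therefore suffices to establish that, for any $m_1, m_2, m_3 \in \{-1, +1\}$,
\begin{equation}
1 + m_1 m_2 - m_1 m_3 - m_2 m_3 \geqslant 0,
\end{equation}
since the lemma then follows by linearity of expectation and rearrangement.

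To verify this pointwise inequality, I would factor the left-hand side as $(1 + m_1 m_2) - m_3(m_1 + m_2)$ and split into two cases. If $m_1 = m_2$, then $1 + m_1 m_2 = 2$ while $m_3(m_1 + m_2) = 2 m_1 m_3 \leqslant 2$, so the expression is non-negative; if $m_1 \neq m_2$, then both $1 + m_1 m_2$ and $m_1 + m_2$ vanish, and the expression equals $0$. Equivalently, one can observe that $(m_1 m_2)(m_1 m_3)(m_2 m_3) = m_1^2 m_2^2 m_3^2 = 1$, so an even number of the three products equals $-1$; a direct inspection of the four admissible sign patterns shows $m_1 m_2 - m_1 m_3 - m_2 m_3 \in \{-1, 3\}$, confirming the bound.

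There is no genuine obstacle here: the result is a standard inequality for correlators of dichotomic observables, closely related to the CHSH and monogamy arguments already invoked in Appendix~\ref{app: 4-cluster}. The only subtlety is ensuring that the three outcomes admit a joint distribution, which is automatic in the inflation setting where $M_1, M_2, M_3$ are measured by parties connected through shared no-signalling sources within a common subnetwork.
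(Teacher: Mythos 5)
Your proof is correct. Note that the paper itself does not prove this lemma at all: it simply asserts that it ``follows directly as a consequence of the results in'' Ref.~[Navascu\'es et al.], so your self-contained argument is a genuine addition rather than a restatement. The pointwise bound $1+m_1m_2-m_1m_3-m_2m_3\geqslant 0$ for $m_i\in\{\pm1\}$ is verified correctly in both of your case analyses (the factorization $(1+m_1m_2)-m_3(m_1+m_2)$ and the parity argument on the three pairwise products), and linearity of expectation then gives the claim. The cited reference effectively proves the same statement via the triangle inequality for the disagreement probabilities, $P(m_1\neq m_2)\leqslant P(m_1\neq m_3)+P(m_2\neq m_3)$, combined with $\langle M_iM_j\rangle=1-2P(m_i\neq m_j)$; this is arithmetically identical to your pointwise inequality, so the two routes buy the same thing, with yours being slightly more explicit.

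You were also right to flag the existence of a joint distribution as the only real hypothesis. In every application in Appendices~\ref{app: 4-cluster} and~\ref{app:C_N}, the three ``observables'' are products of fixed-setting measurements held by disjoint sets of parties within a single inflation network (e.g.\ $M_1=A_0$, $M_2=C_1D_0$, $M_3=B_0$), so all outcomes are realized simultaneously in one run and the joint distribution is automatic. One could sharpen your phrasing slightly: what matters is not that the parties share a no-signalling resource, but that each of $M_1,M_2,M_3$ is a deterministic function of the outcomes of a single, fixed choice of local settings, so that all three take definite values in each run of the experiment.
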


With this, we can formulate the main result of this section.
\begin{theorem}
Let us consider an $N$-partite network with $(N-1)$-partite sources of non-signalling correlations and an $N$-partite source of classical correlations. Given that the party $[L-1,0]$ can perform one of three measurements while the rest of the parties have access to two measurements each and that $L\geqslant 3$, the resulting measurement probabilities are constrained by
\begin{align}\label{ineq:caterpillar}
\begin{split}
I_{L-2,L-1,L} +2 \sum_{k\in (0,L-3)_{1}} \Bigg(\left\langle A_{[k,0];0}A_{[k+1,0];1}A_{[k+2,0];0}\prod_{j=1}^{n_{k+1}}A_{[k+1,j];0}\right\rangle &+ 2\sum_{l=1}^{n_{k+2}}\left\langle A_{[k+2,0];0}A_{[k+2,l];1}\right\rangle \Bigg)\\ 
&+ 2\left\langle A_{[L-1,0];0}A_{[L,0];1}\right\rangle \leqslant 2(2N-L)\,,
\end{split}
\end{align}
where
\begin{align}\label{eq:I_210}
\begin{split}
I_{L-2,L-1,L} &= \left\langle A_{[L-1,0];1}A_{[L,0];1}\right\rangle  +\left\langle A_{[L-2,0];0}A_{[L-1,0];1}A_{[L,0];0}\prod_{j=1}^{n_{L-1}}A_{[L-1,j];0}\right\rangle\\ &-\left\langle A_{[L-1,0];2}A_{[L,0];1}\right\rangle + \left\langle A_{[L-2,0];0}A_{[L-1,0];2}A_{[L,0];0}\prod_{j=1}^{n_{L-1}}A_{[L-1,j];0}\right\rangle\,.
\end{split}
\end{align}
\end{theorem}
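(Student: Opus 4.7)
The plan is to generalize the inflation-based strategy of Appendix~\ref{app: 4-cluster}: build a cascade of inflations of the underlying $N$-partite network $\mathcal{O}$ ending in a ``deep'' inflation $\mathcal{J}$, apply the CHSH monogamy relation of \cite[Theorem 1]{Augusiak_2014} inside $\mathcal{J}$, and then pull the two sides of the monogamy bound back to $\mathcal{O}$ using Lemma~\ref{lem:ineq} together with the subnetwork-isomorphism argument repeatedly invoked in Appendix~\ref{app: 4-cluster}. The CHSH side of the monogamy will reduce to $I_{L-2,L-1,L}$ of Eq.~\eqref{eq:I_210} plus the spine and leg sums in Eq.~\eqref{ineq:caterpillar}; the eavesdropper side will collapse into the single term $2\langle A_{[L-1,0];0}A_{[L,0];1}\rangle$; and the constant $2(2N-L)$ on the right-hand side will emerge from $4$ (the monogamy bound, times an overall factor of $2$) augmented by one extra $+1$ per application of Lemma~\ref{lem:ineq}.

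Concretely, for each interior spine vertex $[i+1,0]$ with $1 \leqslant i \leqslant L-3$, I would introduce an inflation in which a duplicated copy of the source $\tau_{[i+1,0]}$ disconnects the subnetwork downstream of $[i+1,0]$ from the part upstream, in exact analogy with the transitions $\mathcal{I}_0 \to \mathcal{I}_1 \to \mathcal{I}_2$ (and symmetrically $\mathcal{I}_3 \to \mathcal{I}_4 \to \mathcal{I}_5$) of Fig.~\ref{fig:intuitionc4}. Additional duplications of the leg sources $\tau_{[i,j]}$ allow each leg to be extracted independently of its spine neighbours. Iterating until every spine triple $\{[k,0],[k+1,0],[k+2,0]\}$ together with the legs of $[k+1,0]$ has been ``cut away'' gives $\mathcal{J}$, in which a primed pair $[L-1,0]'$, $[L,0]'$ plays the role of the eavesdropper $E$ in the monogamy relation $I^{\tilde C D} + 2\langle \tilde C_0 E_0\rangle \leqslant 4$, with appropriate substitutions identifying $\tilde{C}_0, \tilde{C}_1$ with the two spine observables $A_{[L-1,0];1}, A_{[L-1,0];2}$ and $D$ with a composite observable built from $A_{[L,0];0}, A_{[L,0];1}, A_{[L-2,0];0}$ and $\prod_j A_{[L-1,j];0}$.

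I would then pull both sides of the monogamy back to $\mathcal{O}$. On the CHSH side, repeatedly applying Lemma~\ref{lem:ineq} with $M_3$ equal to the spine-stabilizer observable $A_{[k,0];0} A_{[k+1,0];1} A_{[k+2,0];0} \prod_{j=1}^{n_{k+1}} A_{[k+1,j];0}$ trades a long-range correlator for a shorter one plus this stabilizer mean value, producing after iteration along the spine the sum $\sum_{k \in (0, L-3)_1}$ in Eq.~\eqref{ineq:caterpillar}; a second pass applying the lemma at each leg vertex $[k+2,l]$ with $M_3 = A_{[k+2,0];0} A_{[k+2,l];1}$ yields the leg terms $2\sum_{l=1}^{n_{k+2}} \langle A_{[k+2,0];0} A_{[k+2,l];1}\rangle$. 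An analogous symmetric reduction on the primed side yields $2\langle A_{[L-1,0];0} A_{[L,0];1}\rangle$. Summing all the resulting $-1$ contributions (one per invocation of Lemma~\ref{lem:ineq}) with the monogamy upper bound (times the factor of $2$) reproduces exactly $2(2N-L)$ on the right-hand side.

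For the quantum violation, every bracket on the left-hand side other than $I_{L-2,L-1,L}$ is, by construction, the expectation value of a stabilizer $g_{[i,j]}$ of $\ket{\ddagger_N}$ in Eq.~\eqref{eqn: caterpillar_stabs}, provided $A_{[i,j];0} = Z$ and $A_{[i,j];1} = X$ on the subsystems playing the ``$Z$'' and ``$X$'' roles of each stabilizer, so that every such bracket equals $+1$. With $A_{[L-1,0];1}, A_{[L-1,0];2} = (Z \pm X)/\sqrt{2}$ the term $I_{L-2,L-1,L}$ reaches the Tsirelson value $2\sqrt{2}$, so the left-hand side evaluates to $2\sqrt{2} + 2(L-2) + 4(N-L) + 2 = 2\sqrt{2} + 4N - 2L - 2 > 2(2N - L)$, violating Eq.~\eqref{ineq:caterpillar} by $2\sqrt{2} - 2$. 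The main obstacle will be the bookkeeping of the second and third paragraphs: one must design the cascade of source duplications so that at every application of Lemma~\ref{lem:ineq} the observables $M_1, M_2, M_3$ are measured by parties whose joint subnetwork in some $\mathcal{I}_k$ is isomorphic to a subnetwork of $\mathcal{O}$, and so that each leg can be extracted without being entangled with the wrong copy of a neighbouring source; the parity pattern encoded by $(0, L-3)_1$ in Eq.~\eqref{eq:set_short} dictates the order in which the spine triples can consistently be cut away.
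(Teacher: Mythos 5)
Your overall architecture is the same as the paper's: a cascade of inflations of $\mathcal{O}$, the CHSH monogamy relation of \cite[Theorem 1]{Augusiak_2014} applied in an inflated network, and repeated use of Lemma~\ref{lem:ineq} plus subnetwork isomorphisms to pull everything back to $\mathcal{O}$; your accounting of the constant $2(2N-L)$ and of the quantum value $2\sqrt{2}+4N-2L-2$ is also correct. However, the way you set up the monogamy relation would fail. In $I^{\tilde{C}D}+2\langle\tilde{C}_0E_0\rangle\leqslant 4$ the eavesdropper guesses the outcome of the setting $\tilde{C}_0$, so $\tilde{C}$ must be the party whose outcome is perfectly correlated, through a telescoping product of stabilizers, with the guessers' joint outcome. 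That is the Pauli-measured endpoint $[L,0]$ (with $\tilde{C}_0=A_{[L,0];1}=X$), not $[L-1,0]$ carrying the rotated observables $(Z\pm X)/\sqrt{2}$, whose outcomes cannot be predicted deterministically from any product of $Z$/$X$ measurements on the state. Likewise the eavesdropper cannot be ``a primed pair $[L-1,0]',[L,0]'$'': in the paper's construction it is the coalition of $[1',0]$ together with every second spine party $[2,0],[4,0],\dots$ and their legs, chosen precisely so that the product of their outcomes equals $\prod_{k\in(1,L-1)_2}T_k$, a product of stabilizers, and hence equals the guessed bit with certainty. This choice also explains why the term $2\langle A_{[L-1,0];0}A_{[L,0];1}\rangle$ and roughly half of the spine sum come from the \emph{eavesdropper} side of the monogamy, not from the CHSH side alone as you assert.

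A second missing ingredient is the mechanism that produces the many-body correlators inside $I_{L-2,L-1,L}$ of Eq.~\eqref{eq:I_210}. The paper does not substitute a single composite observable for $D$; it applies the monogamy relation \emph{conditioned} on the product of outcomes $\mathbf{a}_{e}=(-1)^{t}$ of the intermediate (odd-spine and leg) parties, multiplies by $P(\mathbf{a}_{e}=(-1)^{t})$ and sums over $t$, which converts $\langle A_{[L-1,0];j}A_{[L,0];0}\rangle|_{\mathbf{a}_e=(-1)^t}$ into $\langle\mathbf{A}_{e}A_{[L-1,0];j}A_{[L,0];0}\rangle$ (see Eq.~\eqref{eq:It} and the derivation of Eq.~\eqref{eq:pI}). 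Without this conditioning-and-averaging step, or a carefully justified coalition argument replacing it, the terms of $I_{L-2,L-1,L}$ containing $A_{[L-2,0];0}$ and $\prod_j A_{[L-1,j];0}$ do not appear. Finally, the part you defer as ``bookkeeping''--- the explicit parity-structured inflations $\mathcal{I}_0,\dots,\mathcal{I}_3$ and $\mathcal{J}_k^m$, the verification that each pair of subnetworks invoked is actually isomorphic, and the resulting case split between even and odd $L$ --- is where essentially all of the work of the paper's proof resides, so as written the proposal is a plausible plan rather than a proof.
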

We divide the following proof into two parts. First, we give a complete proof for even $L$, then we give a proof for odd $L$ where we skip some steps if the derivation is the same for even and odd $L$.
\begin{proof}

\textit{Part 1: even $L$}

Let us begin by considering a network inflation $\mathcal{I}_{0}$ defined as
\begin{equation}\label{eq:inflation_0}
\mathcal{I}_{0}:\quad \tau_{[i,j]}=\begin{cases}
[(1,i-2)_{2}\cup (i,L)_{1},\cdot ]\setminus \{[i,j]\} &\textrm{for odd }i\textrm{ and all }j\,,\\
[(2,i-2)_{2}\cup (i,L)_{1},\cdot]\setminus \{[i,j]\}&\textrm{for even }i\textrm{ and all }j\,,
\end{cases}
\end{equation}
where for an arbitrary set $S$ we have $[S,\cdot]=\{[l,r]\}_{l\in S,r\in (0,n_{l})_{1}}$. Using \cite[Theorem 1]{Augusiak_2014}, we can write a monogamy relation that holds true in $\mathcal{I}_{0}$
\begin{equation}\label{eq:It}
I_{L-1,L;t}|_{\mathbf{a}_{e}=(-1)^{t}}+ 2\left\langle A_{[1',0];0}\prod_{i\in (2,L)_{2}}A_{[i,0];1}\prod_{j=1}^{n_{i}} A_{[i,j];0}\right\rangle \Big|_{\mathbf{a}_{e}=(-1)^{t}} \leqslant 4\,, 
\end{equation}
where
\begin{equation}
\mathbf{a}_{e}=\left(\prod_{i\in (1,L-3)_{2}}a_{[i,0];1}\prod_{j=1}^{n_{i}}a_{[i,j];0}\right) \prod_{j=1}^{n_{L-1}}a_{[L-1,j];0}\,,
\end{equation}
where $a_{[i,j];k}$ is a measurement result of $A_{[i,j];k}$, and
\begin{equation}\label{eqn: 2_Bell}
I_{L-1,L;t}=\langle A_{[L-1,0];1}A_{[L,0];1}\rangle+ (-1)^{t}\langle A_{[L-1,0];1}A_{[L,0];0}\rangle- \langle A_{[L-1,0];2}A_{[L,0];1}\rangle+(-1)^{t} \langle A_{[L-1,0];2}A_{[L,0];0}\rangle\,.
\end{equation}

We multiply both sides of \eqref{eq:It} by $P(\mathbf{a}_{e}=(-1)^{t})$ and we sum over $t\in\{0,1\}$
\begin{equation}\label{eq:ineq_av}
I_{L-1,L;\textrm{av}}+2 \left\langle A_{[1',0];0}\prod_{i\in (2,L)_{2}}A_{[i,0];1}\prod_{j=1}^{n_{i}} A_{[i,j];0}\right\rangle \leqslant 4\,,
\end{equation}
where
\begin{equation}
I_{L-1,L;\textrm{av}}=\sum_{t=0}^{1} P(\mathbf{a}_{e}=(-1)^{t})I_{L-1,L;t}|_{\mathbf{a}_{e}=(-1)^{t}}\,.
\end{equation}

Our goal is to formulate an inequality similar to Ineq. \eqref{eq:ineq_av} but for the original network $\mathcal{O}$ rather than an inflation $\mathcal{I}_{0}$. To this end, we have to bound the value of both terms on the left-hand side of \eqref{eq:ineq_av} by some other expected values calculated over correlations from $\mathcal{O}$. 

First, let us focus on $I_{L-1,L;\textrm{av}}$. Using the fact that 
\begin{equation}
\sum_{t=0}^{1}P(\mathbf{a}_{e}=(-1)^{t})\left\langle A_{[L-1,0];j}A_{[L,0];1}\right\rangle |_{\mathbf{a}_{e}=(-1)^{t}}=\left\langle A_{[L-1,0];j}A_{[L,0];1}\right\rangle
\end{equation}
for $j\in\{1,2\}$, we can rewrite $I_{L-1,L;\textrm{av}}$ as
\begin{align}\label{eq:I_av_sum}
\begin{split}
I_{L-1,L;\textrm{av}}&= \sum_{t=0}^{1}(-1)^{t}P(\mathbf{a}_{e}=(-1)^{t})\left\langle A_{[L-1,0];1}A_{[L,0];0}\right\rangle |_{\mathbf{a}_{e}=(-1)^{t}} +\sum_{t=0}^{1}(-1)^{t}P(\mathbf{a}_{e}=(-1)^{t})\left\langle A_{[L-1,0];2}A_{[L,0];0}\right\rangle |_{\mathbf{a}_{e}=(-1)^{t}}\\
&+\left\langle A_{[L-1,0];1}A_{[L,0];1}\right\rangle - \left\langle A_{[L-1,0];2}A_{[L,0];1}\right\rangle\,.
\end{split}
\end{align}
Let us focus on the first sum. We can expand the expected value in terms of probabilities which yields
\begin{align}
\begin{split}
\sum_{t=0}^{1}(-1)^{t}P(\mathbf{a}_{e}=(-1)^{t})\left\langle A_{[L-1,0];1}A_{[L,0];0}\right\rangle |_{\mathbf{a}_{e}=(-1)^{t}} &= \sum_{t,k=0}^{1}(-1)^{t+k}P(\mathbf{a}_{e}=(-1)^{t})P(a_{[L-1,0];1}a_{[L,0];0}=(-1)^{k}|\mathbf{a}_{e}=(-1)^{t})\\
&= \sum_{t,k=0}^{1}(-1)^{t+k}P(a_{[L-1,0];1}a_{[L,0];0}=(-1)^{k},\mathbf{a}_{e}=(-1)^{t})\\
&= \sum_{l=0}^{1}(-1)^{l}P(\mathbf{a}_{e}a_{[L-1,0];1}a_{[L,0];0}=(-1)^{l})= \left\langle \mathbf{A}_{e}A_{[L-1,0];1}A_{[L,0];0} \right\rangle\,,
\end{split}
\end{align}
where
\begin{equation}
\mathbf{A}_{e}=\left(\prod_{i\in (1,L-3)_{2}}A_{[i,0];1}\prod_{j=1}^{n_{i}}A_{[i,j];0}\right) \prod_{j=1}^{n_{L-1}}A_{[L-1,j];0}\,.
\end{equation}
Applying the same technique for the second sum in \eqref{eq:I_av_sum} allows us to conclude that
\begin{equation}\label{eq:pI}
I_{L-1,L;\textrm{av}}= \left\langle A_{[L-1,0];1}A_{[L,0];1}\right\rangle  +\left\langle \mathbf{A}_{e}A_{[L-1,0];1}A_{[L,0];0}\right\rangle -\left\langle A_{[L-1,0];2}A_{[L,0];1}\right\rangle + \left\langle \mathbf{A}_{e}A_{[L-1,0];2}A_{[L,0];0}\right\rangle\,. 
\end{equation}

Notice that this expression depends on measurements on parties $[(1, L-1)_{2},\cdot]$ and on a party $[L,0]$. Since a subnetwork consisting of those parties in $\mathcal{I}_{0}$ has the same structure as a subnetwork consisting of the same parties in $\mathcal{I}_{1}$, defined as
\begin{equation}\label{eq:inflation_1}
\mathcal{I}_{1}:\quad \tau_{[i,j]}=\begin{cases}
[(1,L)_{1},\cdot] \setminus \{[i,j]\} &\textrm{for odd }i \textrm{ and all }j\,,\\
[(i,L)_{1},\cdot]\setminus \{[i,j]\} &\textrm{for even }i \textrm{ and all }j\,,
\end{cases}
\end{equation}
we conclude that the value of \eqref{eq:pI} calculated over correlations from $\mathcal{I}_{0}$ is equal to the value of \eqref{eq:pI} calculated over correlations from $\mathcal{I}_{1}$. 

While we cannot relate the value of \eqref{eq:pI} calculated over $\mathcal{I}_{1}$ directly to the value of \eqref{eq:pI} calculated over $\mathcal{O}$, since the structure of a relevant subnetwork is different for $\mathcal{I}_{1}$ and $\mathcal{O}$, we can achieve it by performing an intermediate step. The plan is to bound the value of \eqref{eq:pI} from below by a function of other expected values in $\mathcal{I}_{1}$, which then can be related to the same expected values in $\mathcal{O}$. To this end, we make use of Lemma \ref{lem:ineq} to formulate the following inequality:
\begin{align}\label{eq:pI_bound}
\begin{split}
I_{L-1,L;\textrm{av}} &= \left\langle A_{[L-1,0];1}A_{[L,0];1}\right\rangle  +\left\langle \mathbf{A}_{e}A_{[L-1,0];1}A_{[L,0];0}\right\rangle -\left\langle A_{[L-1,0];2}A_{[L,0];1}\right\rangle + \left\langle \mathbf{A}_{e}A_{[L-1,0];2}A_{[L,0];0}\right\rangle \\
&\geqslant \left\langle A_{[L-1,0];1}A_{[L,0];1}\right\rangle  +\left\langle T_{0}\mathbf{A}_{e}A_{[L-1,0];1}A_{[L,0];0}\right\rangle +\left\langle T_{0}\right\rangle -1 \\
&-\left\langle A_{[L-1,0];2}A_{[L,0];1}\right\rangle + \left\langle T_{0} \mathbf{A}_{e}A_{[L-1,0];2}A_{[L,0];0}\right\rangle+\left\langle T_{0}\right\rangle -1 \\
&\geqslant \left\langle A_{[L-1,0];1}A_{[L,0];1}\right\rangle  +\left\langle A_{[L-2,0];0}A_{[L-1,0];1}A_{[L,0];0}\prod_{j=1}^{n_{L-1}}A_{[L-1,j];0}\right\rangle\\ &-\left\langle A_{[L-1,0];2}A_{[L,0];1}\right\rangle + \left\langle A_{[L-2,0];0}A_{[L-1,0];2}A_{[L,0];0}\prod_{j=1}^{n_{L-1}}A_{[L-1,j];0}\right\rangle +2\sum_{k\in(0,L-4)_{2}} (\left\langle T_{k} \right\rangle -1)\\
&= I_{L-2,L-1,L}+2\sum_{k\in(0,L-4)_{2}}\left\langle T_{k} \right\rangle -L + 2\,,
\end{split}
\end{align}
where $I_{L-2,L-1,L}$ is defined in \eqref{eq:I_210}, 
\begin{equation}\label{eq:T_j}
T_{k}=A_{[k,0];0}A_{[k+1,0];1}A_{[k+2,0];0}\prod_{j=1}^{n_{k+1}}A_{[k+1,j];0}\,,
\end{equation}
for any $k\in (0,L-1)_{1}$ and $A_{[0,0];0}=A_{[L+1,0];0}=\mathbb{1}$, and we used the fact that
\begin{equation}
\mathbf{A}_{e}=A_{[L-2,0];0}\prod_{j=1}^{n_{L-1}}A_{[L-1,j],0}\prod_{k\in(0,L-4)_{2}}T_{k}\,.
\end{equation}

Let us now consider the original network $\mathcal{O}$
\begin{equation}\label{eq:original}
\mathcal{O}:\quad \tau_{[i,j]}=[(1,L)_{1},\cdot]\setminus\{[i,j]\}\,. 
\end{equation}
Notice that the subnetwork consisting of parties $[L-2,0]$, $[L-1,\cdot]$ and $[L,0]$ has the same structure in both $\mathcal{I}_{1}$ and $\mathcal{O}$, which implies that the value of $I_{L-2,L-1,L}$ is the same in both of these networks. As for the terms $\left\langle T_{k}\right\rangle$, they cannot be related to $\mathcal{O}$ directly, as the subnetworks on which the $T_{k}$ terms act nontrivially are different in $\mathcal{I}_{1}$ and in $\mathcal{O}$. However, this problem can be averted by using a set of inflation tailored individually for each $T_{k}$.

Let us consider the following set of inflations
\begin{equation}\label{eq:inf_j}
\mathcal{J}_{k}^{m}:\quad \tau_{[i,j]}=\begin{cases}
[(i,L)_{1},\cdot]\setminus\{[i,j]\} &\textrm{for }i = k+2 \textrm{ and all }j \geqslant m+ 1\,,\\
[(1,L)_{1},\cdot]\setminus\{[i,j]\} &\textrm{else}\,.\\
\end{cases}
\end{equation}
Based on the equivalence of the relevant subnetworks, we have that, for $k\in(0,L-4)_{2}$, expected values $\left\langle T_{k}\right\rangle$ calculated over a state from $\mathcal{I}_{1}$ and from $\mathcal{J}_{k}^{0}$ are equal. We can now use Lemma \ref{lem:ineq} to get the following inequality:
\begin{align}
\begin{split}
\left\langle T_{k}\right\rangle \geqslant \left\langle A_{[k,0];0}A_{[k+1,0];1}A_{[k+2,1];1}\prod_{j=1}^{n_{k+1}}A_{[k+1,j];0} \right\rangle
+ \left\langle A_{[k+2,0];0} A_{[k+2,1];1}\right\rangle -1\,.
\end{split}
\end{align}
Since the subnetworks $\{[k+2,0],[k+2,1]\}$ in $\mathcal{J}_{k}^{0}$ and $\mathcal{O}$ are equivalent, we get that the second expected value on the right-hand side of the inequality calculated over correlations from $\mathcal{J}_{k}^{0}$ and $\mathcal{O}$ yields the same value. As for the first expected value on the right-hand side, we can conclude that the expected value calculated over correlations from $\mathcal{J}_{k}^{0}$ equals the same expected value calculated over correlations from $\mathcal{J}_{k}^{1}$.

We can again use Lemma \ref{lem:ineq} to bound this expected value by
\begin{equation}\label{eq:a_prod_dec}
    \left\langle A_{[k,0];0}A_{[k+1,0];1}A_{[k+2,1];1}\prod_{j=1}^{n_{k+1}}A_{[k+1,j];0} \right\rangle \geqslant \left\langle A_{[k,0];0}A_{[k+1,0];1}A_{[k+2,2];1}\prod_{j=1}^{n_{k+1}}A_{[k+1,j];0} \right\rangle + \left\langle A_{[k+2,1];1} A_{[k+2,2];1}\right\rangle -1\,.
\end{equation}
Again, we can relate the second expected value on the right to the same expected value calculated over correlations from $\mathcal{O}$, while relating the first expected value to the same expected value calculated over correlations from the inflation $\mathcal{J}_{k}^{2}$.

We repeat this procedure until we get to the following inequality,
\begin{align}
\begin{split}
\left\langle A_{[k,0];0}A_{[k+1,0];1}A_{[k+2,n_{k+2}-1];1}\prod_{j=1}^{n_{k+1}}A_{[k+1,j];0} \right\rangle &\geqslant \left\langle A_{[k,0];0}A_{[k+1,0];1}A_{[k+2,n_{k+2}];1}\prod_{j=1}^{n_{k+1}}A_{[k+1,j];0} \right\rangle \\
&+ \left\langle A_{[k+2,n_{k+2}-1];1} A_{[k+2,n_{k+2}];1}\right\rangle -1\,,   
\end{split}
\end{align}
where the expected values are calculated over correlations from $\mathcal{J}_{k}^{n_{k+2}-1}$. We can now use the same technique again to conclude that both expected values on the right side of the inequality are equal to the same expected values but calculated over correlations from $\mathcal{O}$.

In order for the final inequality to be well defined for any value of $n_{k+2}$ we have to use Lemma \ref{lem:ineq} twice more:
\begin{align}
\begin{split}
\left\langle A_{[k,0];0}A_{[k+1,0];1}A_{[k+2,n_{k+2}];1}\prod_{j=1}^{n_{k+1}}A_{[k+1,j];0} \right\rangle &\geqslant \left\langle A_{[k,0];0}A_{[k+1,0];1}A_{[k+2,0];0}\prod_{j=1}^{n_{k+1}}A_{[k+1,j];0} \right\rangle \\
&+ \left\langle A_{[k+2,0];0} A_{[k+2,n_{k+2}];1} \right\rangle -1\,,   
\end{split}
\end{align}
and
\begin{equation}
\left\langle A_{[k+2,l];1}A_{[k+2,l+1];1}\right\rangle \geqslant \left\langle A_{[k+2,l];1}A_{[k+2,0];0}\right\rangle + \left\langle A_{[k+2,l+1];1}A_{[k+2,0];0}\right\rangle -1
\end{equation}
for all $l\in (1,n_{k+2}-1)_{1}$.

The above method can be applied for all $k\in (0,L-4)_{2}$ meaning that every expected value $\left\langle T_{k} \right\rangle$ calculated over correlations from $\mathcal{I}_{1}$ can be bounded from below by expected values calculated over correlations from $\mathcal{O}$. Putting it all together we get
\begin{align}\label{ineq:part_1}
\begin{split}
(I_{L-1,L,\textrm{av}})_{\mathcal{I}_{0}} &\geqslant I_{L-2,L-1,L} +2 \sum_{k\in (0,L-4)_{2}} \Bigg(\left\langle A_{[k,0];0}A_{[k+1,0];1}A_{[k+2,0];0}\prod_{j=1}^{n_{k+1}}A_{[k+1,j];0}\right\rangle \\
&+ 2\sum_{l=1}^{n_{k+2}}\left\langle A_{[k+2,l];1}A_{[k+2,0];0}\right\rangle  - 2 n_{k+2} \Bigg) - L + 2\,,
\end{split}
\end{align}
where by writing $(I_{L-1,L,\textrm{av}})_{\mathcal{I}_{0}}$, we emphasize that the left-hand side of the inequality is calculated on correlations from $\mathcal{I}_{0}$ while the right-hand side is calculated on correlations from $\mathcal{O}$. 

We now come back to Ineq. \eqref{eq:ineq_av} and we focus on the term $\langle A_{[1',0];0}\prod_{i\in (2,L)_{2}}A_{[i,0];1}\prod_{j=1}^{n_{i}} A_{[i,j];0}\rangle$ in the inflation $\mathcal{I}_0$. Let us consider the following inflation:
\begin{equation}\label{eq:inflation_2}
\mathcal{I}_{2}:\quad \tau_{[i,j]}=\begin{cases}
\left(\{[1,0]\}\cup[(i,L)_{1},\cdot] \right) \setminus\{[i,j]\}  &\textrm{for odd }i\textrm{ and all }j\,,\\
[(2,N)_{1},\cdot]\setminus\{[i,j]\}  &\textrm{for even }i\textrm{ and all }j\,.
\end{cases}
\end{equation}
Due to the equivalence of the relevant subnetworks in $\mathcal{I}_{0}$ and $\mathcal{I}_{2}$, we have that the expected value $\langle A_{[1',0];0}\prod_{i\in (2,L)_{2}}A_{[i,0];1}\prod_{j=1}^{n_{i}} A_{[i,j];0}\rangle$ calculated over correlations from $\mathcal{I}_{0}$ is equal to the same expected value calculated for the correlations from $\mathcal{I}_{2}$. 

Let us now consider
\begin{equation}\label{eq:inflation_3}
\mathcal{I}_{3}:\quad \tau_{[i,j]}=\begin{cases}
[(i,L)_{1},\cdot] \setminus\{[i,j]\}  &\textrm{for odd }i\textrm{ and all }j\,,\\
[(1,N)_{1},\cdot]\setminus\{[i,j]\}  &\textrm{for even }i\textrm{ and all }j\,.
\end{cases}
\end{equation}
This inflation is derived from $\mathcal{I}_{2}$ by simply swapping the labels of parties $1$ and $1'$. We therefore have that $\langle A_{[1',0];0}\prod_{i\in (2,L)_{2}}A_{[i,0];1}\prod_{j=1}^{n_{i}} A_{[i,j];0}\rangle$ calculated over correlations from $\mathcal{I}_{2}$ equals $\langle A_{[1,0];0}\prod_{i\in (2,L)_{2}}A_{[i,0];1}\prod_{j=1}^{n_{i}} A_{[i,j];0}\rangle$ (notice the lack of a prime in the first observable) calculated over correlations from $\mathcal{I}_{3}$. 

We can bound this expected value using Lemma \ref{lem:ineq} and the fact that
\begin{equation}
 A_{[1,0];0}\prod_{i\in (2,L)_{2}}A_{[i,0];1}\prod_{j=1}^{n_{i}} A_{[i,j];0} = \prod_{k\in (1,L-1)_{2}} T_{k}\,,
\end{equation}
which gives us
\begin{align}\label{eq:exp_bound}
\begin{split}
\left\langle A_{[1,0];0}\prod_{i\in (2,L)_{2}}A_{[i,0];1}\prod_{j=1}^{n_{i}} A_{[i,j];0}\right\rangle &\geqslant \left\langle A_{[1,0];0}\prod_{i\in (2,L-2)_{2}}A_{[i,0];1}\prod_{j=1}^{n_{i}} A_{[i,j];0} A_{[L-1,0];0}\right\rangle  + \left\langle T_{L-1}\right\rangle - 1\\
&\geqslant \sum_{k\in (1,L-1)_{2}}\left\langle T_{k}\right\rangle - \frac{L}{2} +1\,.
\end{split}
\end{align}

In order to bound the expected values $\left\langle T_{k}\right\rangle$ by correlations from $\mathcal{O}$, we once again make use of the inflations $\mathcal{J}_{k}^{m}$ in the exact same way as before, which results in the following inequality,
\begin{align}\label{ineq:part_2}
\begin{split}
\left\langle A_{[1',0];0}\prod_{i\in (2,L)_{2}}A_{[i,0];1}\prod_{j=1}^{n_{i}} A_{[i,j],0}\right\rangle_{\mathcal{I}_{0}} &\geqslant \sum_{k\in (1,L-1)_{2}}\Bigg(\left\langle A_{[k,0];0}A_{[k+1,0];1}A_{[k+2,0];0}\prod_{j=1}^{n_{k+1}}A_{[k+1,j];0}\right\rangle \\
& + 2\sum_{l=1}^{n_{k+2}}\left\langle A_{[k+2,l];1}A_{[k+2,0];0}\right\rangle - 2n_{k+2}  \Bigg) - \frac{L}{2} +1\,,      
\end{split}
\end{align}
where expected values on the right-hand side of the inequality are calculated over correlations from $\mathcal{O}$.

We can finally combine Ineq.s \eqref{eq:ineq_av}, \eqref{ineq:part_1} and \eqref{ineq:part_2} into a single inequality
\begin{align}
\begin{split}
I_{L-2,L-1,L} +2 \sum_{k\in (0,L-3)_{1}} \Bigg(\left\langle A_{[k,0];0}A_{[k+1,0];1}A_{[k+2,0];0}\prod_{j=1}^{n_{k+1}}A_{[k+1,j];0}\right\rangle &+ 2\sum_{l=1}^{n_{k+2}}\left\langle A_{[k+2,0];0}A_{[k+2,l];1}\right\rangle \Bigg)\\ 
&+ 2\left\langle A_{[L-1,0];0}A_{[L,0];1}\right\rangle \leqslant 2(2N-L)\,.
\end{split}
\end{align}

\textit{Part 2: odd $L$}

Let us begin by considering network inflation $\mathcal{I}_{0}$ defined in Eq. \eqref{eq:inflation_0}. Using \cite[Theorem 1]{Augusiak_2014}, we can write a monogamy relation that holds true in $\mathcal{I}_{0}$,
\begin{equation}\label{eq:It_odd}
I_{L-1,L;t}|_{\mathbf{a}_{o}=(-1)^{t}}+ 2\left\langle \prod_{i\in (1,L)_{2}}A_{[i,0];1}\prod_{j=1}^{n_{i}}A_{[i,j];0}\right\rangle |_{\mathbf{a}_{o}=(-1)^{t}} \leqslant 4\,, 
\end{equation}
where
\begin{equation}
\mathbf{a}_{o}=a_{[1',0];0}\left(\prod_{i\in (2,L-3)_{2}}a_{[i,0];1}\prod_{j=1}^{n_{i}}a_{[i,j];0}\right)\prod_{j=1}^{n_{L-1}}a_{[L-1,j];0}\,.
\end{equation}
We multiply both sides of \eqref{eq:It_odd} by $P(\mathbf{a}_{o}=(-1)^{t})$ and we sum over $t\in\{0,1\}$
\begin{equation}\label{eq:ineq_av_odd}
I_{L-1,L;\textrm{av}}+2 \left\langle \prod_{i\in (1,L)_{2}}A_{[i,0];1}\prod_{j=1}^{n_{i}}A_{[i,j];0}\right\rangle \leqslant 4\,,
\end{equation}
where
\begin{equation}
I_{L-1,L;\textrm{av}}=\sum_{t=0}^{1} P(\mathbf{a}_{o}=(-1)^{t})I_{L-1,L;t}|_{\mathbf{a}_{o}=(-1)^{t}}\,.
\end{equation}
Using the same technique as in the first part of the proof, we get
\begin{equation}
I_{L-1,L;\textrm{av}}= \left\langle A_{[L-1,0];1}A_{[L,0];1}\right\rangle  +\left\langle \mathbf{A}_{o}A_{[L-1,0];1}A_{[L,0];0}\right\rangle -\left\langle A_{[L-1,0];2}A_{[L,0];1}\right\rangle + \left\langle \mathbf{A}_{o}A_{[L-1,0];2}A_{[L,0];0}\right\rangle\,,
\end{equation}
where
\begin{equation}
\mathbf{A}_{o}=A_{[1',0];0}\left(\prod_{i\in (2,L-3)_{2}}A_{[i,0];1}\prod_{j=1}^{n_{i}}A_{[i,j];0}\right)\prod_{j=1}^{n_{L-1}}A_{[L-1,j];0}\,.
\end{equation}
Our goal is to formulate an inequality similar to Ineq. \eqref{eq:ineq_av_odd} but over correlations from the original network $\mathcal{O}$ rather than an inflation $\mathcal{I}_{0}$.

Let us first focus on $I_{L-1,L;\textrm{av}}$. Notice that this expression depends only on measurements on parties $[(2,L-1)_{2},\cdot]$, $[1',0]$, $[L,0]$. Since a subnetwork consisting of those parties in $\mathcal{I}_{0}$ is equivalent to a subnetwork consisting of the same parties in $\mathcal{I}_{2}$, defined by Eq. \eqref{eq:inflation_2}, we conclude that the value of $I_{L-1,L;\textrm{av}}$ calculated over correlations from $\mathcal{I}_{0}$ is equal to the value of $I_{L-1,L;\textrm{av}}$ calculated over correlations from $\mathcal{I}_{2}$. 

Moreover, $I_{L-1,L;\textrm{av}}$ calculated over correlations from $\mathcal{I}_{2}$ is equal to
\begin{equation}
I_{L-1,L;\tilde{\textrm{av}}}= \left\langle A_{[L-1,0];1}A_{[L,0];1}\right\rangle  +\left\langle \tilde{\mathbf{A}}_{o}A_{[L-1,0];1}A_{[L,0];0}\right\rangle -\left\langle A_{[L-1,0];2}A_{[L,0];1}\right\rangle + \left\langle \tilde{\mathbf{A}}_{o}A_{[L-1,0];2}A_{[L,0];0}\right\rangle 
\end{equation}
calculated over correlations from $\mathcal{I}_{3}$, where
\begin{equation}
\tilde{\mathbf{A}}_{o}=A_{[1,0];0}\left(\prod_{i\in (2,L-3)_{2}}A_{[i,0];1}\prod_{j=1}^{n_{i}}A_{[i,j];0}\right)\prod_{j=1}^{n_{L-1}}A_{[L-1,j];0}\,.
\end{equation}

As was the case for even $L$, here we can bound the value of $I_{L-1,L;\tilde{\textrm{av}}}$ from below by
\begin{align}
\begin{split}
 I_{L-1,L;\tilde{\textrm{av}}} \geqslant I_{L-2,L-1,L}+2\sum_{k\in(1,L-4)_{2}}\left\langle T_{k} \right\rangle -L + 3\,,
\end{split}
\end{align}
where $I_{N-2,N-1,N}$ is defined in \eqref{eq:I_210}.

Let us now consider the original network $\mathcal{O}$ \eqref{eq:original}. Since the relevant subnetwork for $I_{L-2,L-1,L}$ is equivalent in $\mathcal{I}_{3}$ and $\mathcal{O}$, the value of 
$I_{L-2,L-1,L}$ calculated over correlations from $\mathcal{I}_{3}$ and $\mathcal{O}$ is the same. As for $\left\langle T_{k} \right\rangle$, these expected values can be bounded from below by utilizing inflations $\mathcal{J}_{k}^{m}$ \eqref{eq:inf_j} for $m\in (0,n_{k+2}-1)$, which at the end gives us
\begin{align}\label{eq:pI_bound_odd}
\begin{split}
(I_{L-1,L,\textrm{av}})_{\mathcal{I}_{0}} &\geqslant I_{L-2,L-1,L} +2 \sum_{k\in (1,L-4)_{2}} \Bigg(\left\langle A_{[k,0];0}A_{[k+1,0];1}A_{[k+2,n_{k+2}];1}\prod_{j=1}^{n_{k+1}}A_{[k+1,j];0}\right\rangle \\
&+ 2\sum_{l=1}^{n_{k+2}}\left\langle A_{[k+2,l];1}A_{[k+2,0];0}\right\rangle - 2n_{k+2}  \Bigg) - L + 3\,,
\end{split}
\end{align}
where the left-hand side is calculated over correlations from $\mathcal{O}$.

We now come back to Ineq. \eqref{eq:ineq_av_odd} and we focus on the term $\langle \prod_{i\in (1,L)_{2}}A_{[i,0];1}\prod_{j=1}^{n_{i}}A_{[i,j];0}\rangle$. Notice that it depends only on parties $[(1,L)_{2},\cdot]$. Since $\mathcal{I}_{0}$ \eqref{eq:inflation_0} and $\mathcal{I}_{1}$ \eqref{eq:inflation_1} differ only by connections to parties $[(2,L-1)_{2},\cdot]$, we can conclude that 
\begin{equation}
\left\langle \prod_{i\in (1,L)_{2}}A_{[i,0];1}\prod_{j=1}^{n_{i}}A_{[i,j];0}\right\rangle_{\mathcal{I}_{0}}=\left\langle \prod_{i\in (1,L)_{2}}A_{[i,0];1}\prod_{j=1}^{n_{i}}A_{[i,j];0}\right\rangle_{\mathcal{I}_{1}}\,.
\end{equation}

We then use Lemma \ref{lem:ineq} to bound this expected value from below 
\begin{equation}
 \left\langle \prod_{i\in (1,L)_{2}}A_{[i,0];1}\prod_{j=1}^{n_{i}}A_{[i,j];0}\right\rangle\geqslant \sum_{k\in(0,L-1)_{2}}\left\langle T_{k}\right\rangle -\frac{L-1}{2}\,.
\end{equation}

Next, each $\left\langle T_{k}\right\rangle$ can itself be bounded from below using the inflations $\mathcal{J}_{k}^{m}$ \eqref{eq:inf_j}, which in the end results in
\begin{align}\label{eq:exp_bound_odd}
\begin{split}
 \left\langle \prod_{i\in (1,L)_{2}}A_{[i,0];1}\prod_{j=1}^{n_{i}}A_{[i,j];0}\right\rangle_{\mathcal{I}_{0}} &\geqslant \sum_{k\in(0,L-1)_{2}}\Bigg(\left\langle A_{[k,0];0}A_{[k+1,0];1}A_{[k+2,n_{k+2}];1}\prod_{j=1}^{n_{k+1}}A_{[k+1,j];0}\right\rangle \\
& + 2\sum_{l=1}^{n_{k+2}}\left\langle A_{[k+2,l];1}A_{[k+2,0];0}\right\rangle - 2n_{k+2} \Bigg) -\frac{L-1}{2}\,,
\end{split}
\end{align}
where the right-hand side is evaluated over correlations from $\mathcal{O}$.

Putting \eqref{eq:pI_bound_odd} and \eqref{eq:exp_bound_odd} together, we get the following inequality fulfilled by correlations from $\mathcal{O}$
\begin{align}
\begin{split}
I_{L-2,L-1,L} +2 \sum_{k\in (0,L-3)_{1}} \Bigg(\left\langle A_{[k,0];0}A_{[k+1,0];1}A_{[k+2,0];0}\prod_{j=1}^{n_{k+1}}A_{[k+1,j];0}\right\rangle &+ 2\sum_{l=1}^{n_{k+2}}\left\langle A_{[k+2,0];0}A_{[k+2,l];1}\right\rangle \Bigg)\\ 
&+ 2\left\langle A_{[L-1,0];0}A_{[L,0];1}\right\rangle \leqslant 2(2N-L)\,.
\end{split}
\end{align}

\end{proof}

After deriving an inequality detecting LOSR-$\GMNL$, we can now show that the caterpillar state $\ket{\ddagger_{N}}$, as well as any state $\rho$ sufficiently close to it, is LOSR-$\GMNL$.
\begin{corollary}\label{cor:noise}
Let us consider a state $\rho_{\ddagger_{N}}(\eta)$ defined as
\begin{equation}
\rho_{\ddagger_{N}}(\eta) = \eta \ket{\ddagger_{N}}\!\bra{\ddagger_{N}} + \frac{1-\eta}{2^{N}}\mathbb{1}\,.
\end{equation}
Given that $L\geqslant 3$, $\rho_{\ddagger_{N}}(\eta)$ is LOSR-$\GMNL$ for
\begin{equation}\label{eq:ineq_eta}
\eta > \frac{2N-L}{2N-L+\sqrt{2}-1}\,.
\end{equation}
\end{corollary}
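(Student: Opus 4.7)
The plan is to evaluate the left-hand side of Ineq.~\eqref{ineq:caterpillar} on $\rho_{\ddagger_N}(\eta)$ using the same measurement settings that yield the quantum violation on the pure caterpillar state, and then to solve the resulting linear inequality in $\eta$. The first step I would take is to exploit the fact that, by linearity, $\langle M\rangle_{\rho_{\ddagger_N}(\eta)} = \eta\,\langle M\rangle_{\ket{\ddagger_N}} + (1-\eta)\mathrm{Tr}\!\bigl[(\mathbb{1}/2^N)M\bigr]$ for every observable $M$. Every observable appearing in Ineq.~\eqref{ineq:caterpillar} acts on its involved parties by $X$, $Z$, or $(Z\pm X)/\sqrt{2}$, and by the identity elsewhere. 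Since each non-identity single-qubit factor is traceless, every such $M$ is traceless on $\mathcal{H}=\mathbb{C}_{2}^{\otimes N}$, so the maximally-mixed contribution vanishes and the whole LHS scales simply as $\eta$ times its value on $\ket{\ddagger_N}$.

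Next I would evaluate the LHS on $\ket{\ddagger_N}$. By construction, and in direct analogy to the $\ket{C_4}$ example of Theorem~\ref{theo: 4-cluster}, every non-CHSH term in the LHS is a product of stabilizer generators $g_{[i,j]}$ from Eq.~\eqref{eqn: caterpillar_stabs}, hence evaluates to $1$ on $\ket{\ddagger_N}$; meanwhile the block $I_{L-2,L-1,L}$ has CHSH structure and saturates the Tsirelson bound at $2\sqrt{2}$. Counting contributions, the outer sum over $k\in (0,L-3)_1$ contributes $L-2$ three-body stabilizer terms and $2\sum_{i=2}^{L-1} n_i$ two-body stabilizer terms. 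Using the identity $\sum_{i=2}^{L-1} n_i = N-L$, valid for any caterpillar since its spine endpoints $[1,0]$ and $[L,0]$ carry no legs by definition, and including the tail term $2\langle A_{[L-1,0];0}A_{[L,0];1}\rangle = 2$, the LHS on the pure state totals
\begin{equation}
2\sqrt{2} + 2(L-2) + 4(N-L) + 2 \;=\; 2(2N-L) + 2(\sqrt{2}-1).
\end{equation}

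Rescaling by $\eta$ and requiring this value to exceed the classical bound $2(2N-L)$ of Ineq.~\eqref{ineq:caterpillar} immediately gives the threshold $\eta > (2N-L)/(2N-L+\sqrt{2}-1)$ of Ineq.~\eqref{eq:ineq_eta}, certifying LOSR-$\GMNL$. The argument is essentially bookkeeping; the only subtlety to watch out for is verifying that the chosen local observables compose into honest (unsigned) stabilizers of $\ket{\ddagger_N}$, and correctly applying the leg-count identity $\sum_{i=2}^{L-1}n_i=N-L$. Both are immediate from the definitions in Eq.~\eqref{eqn: caterpillar_stabs} and the description of the caterpillar's structure.
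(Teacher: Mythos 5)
Your proposal is correct and follows essentially the same route as the paper's proof: choose the observables $Z$, $X$, $(Z\pm X)/\sqrt{2}$ of Eq.~\eqref{eq:optimal_observables}, note that each stabilizer term evaluates to $\eta$ and the CHSH block $I_{L-2,L-1,L}$ to $2\sqrt{2}\eta$ on $\rho_{\ddagger_N}(\eta)$, and substitute into Ineq.~\eqref{ineq:caterpillar}. Your explicit tracelessness argument and the term count $2\sqrt{2}+2(L-2)+4(N-L)+2$ via $\sum_{i=2}^{L-1}n_i=N-L$ just spell out the bookkeeping the paper leaves implicit, and both are correct.
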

\begin{proof}
We take the measurements as follows:
\begin{align}\label{eq:optimal_observables}
\begin{split}
A_{[i,j];0} = Z\,, \quad A_{[i,j];1} =X\qquad \textrm{for } [i,j]\in [(1,L)_{1},\cdot]\setminus\{[L-1,0]\}\,,\\
\quad A_{[L-1,0];0}=Z\,, \quad A_{[L-1,0];1}=\frac{1}{\sqrt{2}}(X+Z)\,,\quad A_{[L-1,0];2}=\frac{1}{\sqrt{2}}(X-Z)\,.
\end{split}
\end{align}
and then, using the stabilizing operators \eqref{eqn: caterpillar_stabs} of a caterpillar graph state, we calculate each term in Ineq. \eqref{ineq:caterpillar} for the state $\rho_{\ddagger_{N}}(\eta)$:
\begin{equation}
I_{L-2,L-1,L}=2\sqrt{2}\eta\,,\quad \langle \cdot \rangle= \eta\,,
\end{equation}
where by $\langle \cdot \rangle$ we mean that it holds true for all expected values in Ineq. \eqref{ineq:caterpillar}.
Substituting this into \eqref{ineq:caterpillar} yields Ineq. \eqref{eq:ineq_eta}. 
\end{proof}

\section{Cluster states are LOSR-GMNL for any prime local dimension}\label{app:C_N^d}
In this section, we prove that a cluster state for any prime local dimension is LOSR-$\GMNL$. We achieve this by utilizing the inflation technique to show that the assumption that the cluster state is not LOSR-$\GMNL$ violates a monogamy relation from \cite{Augusiak_2014}. Most of the details concerning the considered scenario (the inflation technique, etc.) are the same here as in Appendix \ref{app:C_N}, so we will not repeat them here. There are only three substantial changes that need to be highlighted: different party labels, different quantum states under examination, and a generalized monogamy relation suitable for subsystems with local dimension higher than two.

First, here we label all of the parties simply using a set of numbers $\{1,\dots,N\}$. Second, here we consider the cluster state $\ket{C_{N}^{d}}$ a state of prime local dimension $d$ which is stabilized by
\begin{align}\label{eq:generators}
\begin{split}
g_{1}&=X_{1}Z_{2}\,,\\
g_{i}&=Z_{i-1}X_{i}Z_{i+1}\quad \textrm{for } i\in(2,N-1)_{1}\,,\\
g_{N}&=Z_{N-1}X_{N}\,,
\end{split}
\end{align}
where $X$ and $Z$ are defined in \eqref{eq:def_xz} (for an overview of graph states see Appendix \ref{app:graph_states}).

Third, the aforementioned monogamy relation \cite[Theorem 1]{Augusiak_2014} is given by
\begin{equation}\label{eqn: monog_orig}
I_{m,n}+d(1-P(a_{n;i}=a_{l;k}))\geqslant d-1\,,
\end{equation}
where $a_{l;k}\in\{\omega^{j}\}_{j=0}^{d-1}$ denotes the measurement outcome of measurement $k$ of any party $l\neq m,n$ and 
\begin{equation}\label{eq:cglmp_omega}
I_{m,n}= \sum_{k=0}^{d-1}\sum_{i=1}^{2}k\Big( P(a_{m;i}=\omega^{k}a_{n;i}) + P(a_{n;i}=\omega^{k+\delta_{i,1}}a_{m;i+1}) \Big)\geqslant d-1\,
\end{equation}
is the Collins, Gisin, Linden, Massar, Popescu (CGLMP) inequality \cite{PhysRevLett.88.040404},  where $a_{m;3} \equiv a_{m;1}$. Measurements that lead to a violation of CGLMP inequality $I_{m, n}$ by a maximally entangled state $\ket{\phi^{+}}:=\frac{1}{\sqrt{d}}\sum_{j}\ket{jj}$ \cite{PhysRevLett.88.040404} can be presented as the following observables 
\begin{equation}\label{eq:observables_phi}
A_{m;j}=U_{\alpha_{j}}X^{\dagger}U_{\alpha_{j}}^{\dagger} \,,\qquad A_{n;j}=U_{\beta_{j}}XU_{\beta_{j}}^{\dagger}\,,
\end{equation}
where $U_{\chi}=\sum_{j}\omega^{j\chi}\ket{j}\!\bra{j}$ for any $\chi$, and $\alpha_{1}=0$, $\alpha_{2}=1/2$, $\beta_{1}=1/4$, $\beta_{2}=-1/4$.

For the purposes of our proof, we need to slightly adjust these observables. First, we need the first measurement of party $n$ to be equal to $X$. To this end, we make use of the property $V^{*}\otimes V\ket{\phi^{+}}=\ket{\phi^{+}}$ that holds for all unitary matrices $V$. Taking $V=U_{\beta_{1}}$ gives us a set of observables that also violate CGLMP with $\ket{\phi^{+}}$. In terms of Eq. \eqref{eq:observables_phi} the new observables are described by the following set of coefficients $\alpha_{1}=1/4$, $\alpha_{2}=3/4$, $\beta_{1}=0$, $\beta_{2}=-1/2$ . 

The last adjustment that we need to make is that we need a violation of CGLMP \eqref{eq:cglmp_omega} for the state $\ket{C_{2}^{d}}$ rather than $\ket{\phi^{+}}$. We use the relation $\ket{C_{2}^{d}}=F\otimes \mathbb{1}  \ket{\phi^{+}}$, where $F$ is the discrete Fourier transform matrix
\begin{equation}
F=\sum_{i,j=0}^{d-1}\omega^{ij}\ket{i}\!\bra{j}\,,
\end{equation}
to find that the state $\ket{C_{2}^{d}}$ violates CGLMP inequality \eqref{eq:cglmp_omega} with observables
\begin{equation}
A_{m;j}=FU_{\alpha_{j}}X^{\dagger}U_{\alpha_{j}}^{\dagger}F^{\dagger} \,,\qquad A_{n;j}=U_{\beta_{j}}XU_{\beta_{j}}^{\dagger}\,,
\end{equation}
with coefficients $\alpha_{1}=1/4$, $\alpha_{2}=3/4$, $\beta_{1}=0$ and $\beta_{2}=-1/2$ . 

With that, we can proceed to the main result of this section.

\begin{theorem}
A cluster state $\ket{C_{N}^{d}}$ is LOSR-$\GMNL$ for all prime $d$ and $N\geqslant 3$.
\end{theorem}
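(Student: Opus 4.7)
The plan is to lift the qubit proof of Theorem~\ref{theo: 4-cluster} (detailed in Appendix~\ref{app: 4-cluster} and generalized in Appendix~\ref{app:C_N}) to prime local dimension $d$, by replacing three ingredients. First, the CHSH monogamy \cite{Augusiak_2014} used in Eq.~\eqref{eq:monogamy_example} is replaced by its $d$-dimensional counterpart \eqref{eqn: monog_orig}, which states $I_{m,n}+d\bigl(1-P(a_{n;i}=a_{l;k})\bigr)\geqslant d-1$ in any causal model. Second, the dichotomic Lemma~\ref{lem:ineq} is replaced by the $d$-outcome union-bound inequality $P(a=\omega^{k}b)\geqslant P(a=\omega^{k-j}c)+P(c=\omega^{j}b)-1$, which follows directly from Boole's inequality. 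Third, the CHSH-maximizing observables on sites $N-1,N$ are replaced by the CGLMP-maximizing observables $A_{N-1;j}=FU_{\alpha_{j}}X^{\dagger}U_{\alpha_{j}}^{\dagger}F^{\dagger}$ and $A_{N;j}=U_{\beta_{j}}XU_{\beta_{j}}^{\dagger}$ introduced just before the theorem (with $\alpha_{1}=1/4$, $\alpha_{2}=3/4$, $\beta_{1}=0$, $\beta_{2}=-1/2$), dressed by a Fourier transform so that the relevant two-site reduction of $\ket{C_{N}^{d}}$ violates the CGLMP inequality.

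Concretely, assume for contradiction that the correlations obtained by measuring $\ket{C_{N}^{d}}$ with $Z$ on sites $1,\dots,N-2$ and the CGLMP observables above on sites $N-1,N$ arise from an LOSR model with only $(N-1)$-partite no-signalling sources. I would introduce a chain of inflations $\mathcal{I}_{0},\mathcal{I}_{1},\dots,\mathcal{J}$ in direct analogy with Fig.~\ref{fig:intuitionc4}, duplicating each source $\tau_{i}$ into two copies $\tau_{i},\tau_{i'}$ and wiring them so that, in the final inflation $\mathcal{J}$, a duplicate party $1'$ receives the same correlations relative to site $N$ as a hypothetical third eavesdropper $l$ in \eqref{eqn: monog_orig}. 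By repeatedly invoking the isomorphism-of-subnetworks argument used in Appendix~\ref{app: 4-cluster}, the CGLMP value $I_{N-1,N}^{(\mathcal{J})}$ equals the value $I_{N-1,N}^{(\mathcal{O})}$ in the original network, which for $\ket{C_{N}^{d}}$ with the stated observables takes the CGLMP-violating quantum value, strictly smaller than $d-1$.

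To close the argument I would then propagate the guessability of $a_{N;i}$ from $1'$ along the chain $1'\to 2\to 3\to \dots \to N-2$ of intermediate sites. The key observation is that on $\ket{C_{N}^{d}}$, the stabilizer generators $g_{i}=Z_{i-1}X_{i}Z_{i+1}$ of Eq.~\eqref{eq:generators} imply that measuring $X$ on site $i$ and $Z$ on sites $i-1,i+1$ yields deterministically $a_{i-1;0}+a_{i+1;0}=a_{i;1}\pmod{d}$; equivalently, $a_{i+1;0}$ is a deterministic function of $(a_{i-1;0},a_{i;1})$. Chaining the $d$-outcome union bound along the sites, while using the isomorphism-of-subnetworks step at each rung of the inflation chain to translate guessing probabilities in $\mathcal{J}$ into stabilizer expectation values in $\mathcal{O}$, would yield $P(a_{N;i}=a_{1';k})=1$ on the cluster state. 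Plugging this and the CGLMP violation into \eqref{eqn: monog_orig} produces a contradiction, establishing that $\ket{C_{N}^{d}}$ cannot be prepared from $(N-1)$-partite LOSR sources, hence is LOSR-$\GMNL$. A routine noise-tolerance computation as in Corollary~\ref{cor:noise} would then extend the statement to a neighbourhood of $\ket{C_{N}^{d}}$.

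The main obstacle is the multi-outcome bookkeeping. Unlike the dichotomic case where $\langle M_{1}M_{2}\rangle=2P(M_{1}M_{2}=1)-1$ folds probabilities into linear expectation values, here the quantity entering \eqref{eqn: monog_orig} is a \emph{probability of $\omega^{k}$-twisted equality} $P(a_{n;i}=\omega^{k}a_{l;j})$, and the union-bound step must be applied to events of the form $\{a=\omega^{k}b\}$ with matching phases. Care is therefore needed to align, at each rung of the chain, both the source-swap choice (no-signalling step) and the phase $\omega^{k}$ introduced by the union bound, so that the final inequality depends only on stabilizer-type correlators of $\ket{C_{N}^{d}}$, all of which are $+1$-deterministic. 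Handling separately the parities of $N$ (as in the even/odd-$L$ split of Appendix~\ref{app:C_N}) should suffice to close the argument for every $N\geqslant 3$.
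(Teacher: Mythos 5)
Your skeleton matches the paper's Appendix~\ref{app:C_N^d}: the $d$-outcome monogamy relation \eqref{eqn: monog_orig}, CGLMP observables on the last two sites, a chain of inflations with duplicated sources, and propagation of stabilizer constraints along the line. (One methodological difference: the paper never chains a $d$-outcome analogue of Lemma~\ref{lem:ineq}. Because all the stabilizer events occur with probability one, it simply multiplies deterministic relations such as $a_{i;0}a_{i+1;1}a_{i+2;0}=1$ with appropriate powers and derives a direct contradiction; the qudit result is therefore stated without noise robustness, which your union-bound route would in principle supply at the cost of the phase bookkeeping you anticipate.) There are, however, two genuine gaps. The first is your claim that $I_{N-1,N}^{(\mathcal{O})}$ ``takes the CGLMP-violating quantum value.'' The reduced state of $\ket{C_{N}^{d}}$ on sites $N-1,N$ is only classically correlated (for $d=2$ and $N=4$ it is $\tfrac{1}{2}(\ket{00}\!\bra{00}+\ket{11}\!\bra{11})$), so the \emph{unconditioned} CGLMP value respects the local bound. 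The violation used in the paper is $I_{N-1,N}|_{a_{N-2;0}=1}<d-1$, i.e.\ conditioned on the $Z$-outcome of site $N-2$, which collapses sites $N-1,N$ onto $\ket{C_{2}^{d}}$ (Eq.~\eqref{eqn: CGLMP_violation}). A large part of the paper's work then consists of using the three-body stabilizer relations, inside a suitably wired intermediate inflation, to rewrite $a_{N-2;0}$ as a product $\prod_{i}a_{i;1}^{\xi_{i}}$ over odd-numbered sites --- this is precisely where primality of $d$ enters, through the invertibility of $\chi_{N-3}$ modulo $d$ --- so that the conditioning event is supported on a subnetwork preserved in the final inflation. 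Without this conditioning step the monogamy relation is never violated and the contradiction does not close.

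The second gap is the target correlation $P(a_{N;i}=a_{1';k})=1$. This cannot hold for a cluster state: the stabilizers $Z_{i-1}X_{i}Z_{i+1}$ are three-body, so no single party's outcome determines $a_{N;1}$. The eavesdropper in \eqref{eqn: monog_orig} must be the \emph{collective} of $1'$ together with the even-numbered sites, whose joint product $a_{1';0}^{\chi_{1}}a_{2;1}^{\chi_{2}}a_{4;1}^{\chi_{4}}\cdots a_{N-2;1}^{\chi_{N-2}}$ deterministically fixes $a_{N;1}^{\chi_{N}}$; correspondingly, the intermediate sites must be measured in $X$ as well as $Z$, not only in $Z$ as your observable assignment suggests. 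A genuinely single-party guess of the form you write does work for $\ket{GHZ_{N}^{d}}$, whose stabilizers $Z_{j-1}Z_{j}^{-1}$ are two-body --- but that is the proof of Appendix~\ref{app:ghz}, not of this theorem. Both repairs are available, but they constitute the substantive content of the paper's proof rather than routine bookkeeping.
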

As was the case in Appendix \ref{app:C_N}, here we also split the proof into two parts. We first show a complete proof for even $N$, after which we show a proof for odd $N$ in which we omit derivations common for the cases of even and odd $N$.
\begin{proof}

\textit{Part 1: even $N$}

We begin by describing correlations that can be produced by the cluster $\ket{C_{N}^{d}}$, and we then prove by contradiction that these correlations cannot be produced in the network.

From Eq. \eqref{eq:generators}, it follows that, from the projection of $\ket{C_{N}^{d}}$ onto the state $\ket{0}_{N-2}$, we get a state $\ket{\psi}$ fulfilling
\begin{equation}
\ket{\psi}= \eta \ket{0}_{N-2}\!\bra{0}\ \ket{C_{N}^{d}}= \eta \ket{0}_{N-2}\!\bra{0}\ g_{i}\ket{C_{N}^{d}} = \eta g_{i}\ket{0}_{N-2}\!\bra{0}\ \ket{C_{N}^{d}}= g_{i} \ket{\psi}\,
\end{equation}
for all $i\neq N-2$, where $\eta$ is the normalization constant. From this relation it follows that $\ket{\psi}$ is stabilized by 
\begin{align}\label{eq:generators_n-2}
\begin{split}
g_{1}&=X_{1}Z_{2}\,,\qquad g_{i}=Z_{i-1}X_{i}Z_{i+1}\quad \textrm{for } i\in(2,N-4)_{1}\,,\\
g_{N-3}&=Z_{N-4}X_{N-3}\,,\qquad g_{N-2}=Z_{N-2}\,,\\
g_{N-1}&=X_{N-1}Z_{N}\,,\qquad g_{N}=Z_{N-1}X_{N}\,.
\end{split}
\end{align}
Therefore, we have that
\begin{equation}
\ket{\psi}= \ket{C_{N-3}}_{1,\dots,N-3}\otimes\ket{0}_{N-2}\otimes \ket{C_{2}^{d}}\,.
\end{equation}
Let us choose the following observables:
\begin{equation}\label{eq:observables_qudit}
A_{i;j}=X^{j}Z^{1-j}\quad \textrm{for }i\in(1,N-2)_{1}\,,\qquad A_{N-1;j}=FU^{\dagger}_{\beta_{j}}X^{\dagger}U_{\beta_{j}}F^{\dagger}\,,\qquad A_{N-1;0}=Z\,,\qquad A_{N;j}=U_{\alpha_{j}}^{\dagger}XU_{\alpha_{j}}\,. 
\end{equation}
Under this choice of observables, we have
\begin{equation}\label{eqn: CGLMP_violation}
I_{N-1,N}|_{a_{N-2;0}=1}<d-1\,,
\end{equation}
and, from the stabilizing operators \eqref{eq:generators} of $\ket{C_{N}^{d}}$, we can infer the expected values
\begin{align}
    \left\langle A_{i;0}  A_{i+1;1}  A_{i+2;0} \right\rangle =1 \qquad \forall \qquad i\in(0,N-3)_{1}\,,
\end{align}
which imply that
\begin{equation}\label{eqn: all_stabs_qudit}
P(a_{i;0}a_{i+1;1}a_{i+2;0}=1)=1 \qquad \forall \qquad  i\in(0,N-3)_{1}\,,
\end{equation}
where $a_{i;j}$ is the measurement result of measurement $A_{i;j}$ and we take $a_{0;0}:=1$. We also have $\left\langle A_{N-1;0} A_{N;1}\right\rangle =1$, and therefore 
\begin{align}\label{eqn: last_two}
    P(a_{N-1;0} a_{N;1}=1)=1\,.
\end{align}
We will now show that the \eqref{eqn: CGLMP_violation}, \eqref{eqn: all_stabs_qudit} and \eqref{eqn: last_two} cannot be satisfied simultaneously by any correlations originating from a network $\mathcal{O}$ consisting of $N$ parties connected with $(N-1)$-partite sources of non-signalling correlations and an $N$-partite source of shared randomness. We will achieve this by assuming the contrary, i.e. that \eqref{eqn: CGLMP_violation}, \eqref{eqn: all_stabs_qudit} and \eqref{eqn: last_two} hold true for some correlation from $\mathcal{O}$, which will lead us to a contradiction.

More specifically, we consider a network inflation
\begin{equation}\label{eq:inflation_0_d}
\mathcal{I}_{0}:\quad \tau_{i}=\begin{cases}
(1,i-2)_{2}\cup (i,L)_{1}\setminus \{i\} &\textrm{for odd }i\,,\\
(2,i-2)_{2}\cup (i,L)_{1}\setminus \{i\}&\textrm{for even }i\,,
\end{cases}
\end{equation}
in which we will show that the above set of assumptions leads to the violation of the following monogamy relation \eqref{eqn: monog_orig}:
\begin{equation}\label{eq:monogamy_qudit}
I_{N-1,N}|_{a_{1;1}^{\xi_{1}}a_{3;1}^{\xi_{3}}\dots a_{N-3;1}^{\xi_{N-3}}=1}+d(1-P(a_{1';0}^{\chi_{1}}a_{2;1}^{\chi_{2}}a_{4;1}^{\chi_{4}}\dots,a_{N;1}^{\chi_{N}}=1|a_{1;1}^{\xi_{1}}a_{3;1}^{\xi_{3}}\dots a_{N-3;1}^{\xi_{N-3}}=1))\geqslant d-1\,,
\end{equation}
where $I_{N-1,N}$ is the CGLMP inequality defined in \eqref{eq:cglmp_omega},
\begin{equation}\label{eq:chi_def}
\chi_{n}=\begin{cases}
1 &\textrm{for }n=1,2\mod 4\,,\\
d-1 &\textrm{for }n=0,3\mod 4\,,
\end{cases}
\end{equation}
and $\xi_{j}=- \chi_{j}\chi_{N-3}^{-1}\mod d$, with $\chi_{N-3}^{-1}$ the multiplicative inverse of $\chi_{N-3}$ in $\mathbb{Z}_{d}$ (since $d$ is prime, this inverse is guaranteed to exist).

We begin by considering what we can infer about the value of $I_{N-1,N}|_{a_{1;1}^{\xi_{1}}a_{3;1}^{\xi_{3}}\dots a_{N-3;1}^{\xi_{N-3}}=1}$ from the assumptions \eqref{eqn: CGLMP_violation}, \eqref{eqn: all_stabs_qudit} and \eqref{eqn: last_two}. To this end, let us consider the original network $\mathcal{O}$ and its inflation $\mathcal{I}_{1}$
\begin{equation}\label{eq:inflation_1_d}
\mathcal{I}_{1}:\quad \tau_{i}=\begin{cases}
(1,L)_{1} \setminus \{i\} &\textrm{for odd }i\,,\\
(i,L)_{1}\setminus \{i\} &\textrm{for even }i\,.
\end{cases}
\end{equation}
Notice that the subnetworks consisting of three consecutive parties $i,i+1,i+2$ for $i\in(0,N-2)_{2}$  (we ignore party $0$) have the same structure in $\mathcal{O}$ and in $\mathcal{I}_{1}$ \eqref{eq:inflation_1_d}. Therefore, by the assumption that \eqref{eqn: CGLMP_violation}, \eqref{eqn: all_stabs_qudit} and \eqref{eqn: last_two} are true in $\mathcal{O}$, we conclude that conditions
\begin{equation}
I_{N-1,N}|_{a_{N-2;0}=0}<d-1\,,\qquad a_{i;0}a_{i+1;1}a_{i+2;0}=1 \qquad \forall\,\,\, i\in(0,N-2)_{2}
\end{equation}
hold in $\mathcal{I}_{1}$. It follows then, that
\begin{equation}
(a_{1;1}a_{2;0})^{\chi_{1}}(a_{2;0}a_{3;1}a_{4;0})^{\chi_{3}}\dots (a_{N-4;0}a_{N-3;1}a_{N-2;0})^{\chi_{N-3}}=1
\end{equation}
where $\chi_{i}$ are defined in Eq. \eqref{eq:chi_def}, and the equality follows from the fact that each term in parentheses equals $1$. Using Eq. \eqref{eq:chi_def}, we can then conclude that
\begin{equation}
(a_{1;1}a_{2;0})^{\chi_{1}}(a_{2;0}a_{3;1}a_{4;0})^{\chi_{3}}\dots (a_{N-4;0}a_{N-3;1}a_{N-2;0})^{\chi_{N-3}}=a_{1;1}^{\chi_{1}}a_{3;1}^{\chi_{3}}\dots a_{N-3;1}^{\chi_{N-3}}a_{N-2;0}^{\chi_{N-3}}=1\,,
\end{equation}
which in turn implies:
\begin{equation}
a_{N-2;0}=\prod_{i\in (1,N-3)_{2}}a_{i;1}^{\xi_{i}}\,,
\end{equation}
where, as a reminder, $\xi_{i}=-\chi_{i} \chi^{-1}_{N-3} \mod d$. Substituting for $a_{N-2;0}$ in $I_{N-1,N}|_{a_{N-2;0}=1}$ gives us
\begin{equation}\label{eq:cglmp_I1}
I_{N-1,N}|_{a_{1;1}^{\xi_{1}}a_{3;1}^{\xi_{3}}\dots a_{N-3;1}^{\xi_{N-3}}=1}<d-1\,.
\end{equation}
Notice that this expression depends only on the measurement results of odd-numbered parties and the party $N$. The subnetwork consisting of the relevant parties is equivalent in $\mathcal{I}_{1}$ and $\mathcal{I}_{0}$ \eqref{eq:inflation_0_d}, so we conclude that \eqref{eq:cglmp_I1} holds true for $\mathcal{I}_{0}$ if \eqref{eqn: CGLMP_violation}, \eqref{eqn: all_stabs_qudit} and \eqref{eqn: last_two} hold in the original network. 

Coming back to the monogamy relation \eqref{eq:monogamy_qudit}, we can now calculate the value of the term 
\begin{equation}
P(a_{1';0}^{\chi_{1}}a_{2;1}^{\chi_{2}}a_{4;1}^{\chi_{4}}\dots,a_{N;1}^{\chi_{N}}=1|a_{1;1}^{\xi_{1}}a_{3;1}^{\xi_{3}}\dots a_{N-3;1}^{\xi_{N-3}}=1)\,.  
\end{equation}
We start again from $\mathcal{O}$ and use the consequences of our assumption that \eqref{eqn: CGLMP_violation}, \eqref{eqn: all_stabs_qudit} and \eqref{eqn: last_two} can be fulfilled by a correlation originating from $\mathcal{O}$. In particular, this means that
\begin{equation}\label{eq:three_a_odd}
a_{i;0}a_{i+1;1}a_{i+2;0}=1\,, \quad \textrm{for all } i\in(1,\dots,N-3)_{2},\qquad a_{N-1;0}a_{N;1}=1\,
\end{equation}
hold true in $\mathcal{O}$. Since the subnetwork containing any three parties $i,i+1,i+2$ for $i\in(1,\dots,N-1)_{2}$ (we ignore the party $N+1$) has the same structure in $\mathcal{O}$ as in $\mathcal{I}_{3}$, where 
\begin{equation}\label{eq:inflation_3_d}
\mathcal{I}_{3}:\quad \tau_{i}=\begin{cases}
(i,L)_{1} \setminus\{i\}  &\textrm{for odd }i\,,\\
(1,N)_{1}\setminus\{i\}  &\textrm{for even }i\,,
\end{cases}
\end{equation}
we conclude that \eqref{eq:three_a_odd} also holds true in $\mathcal{I}_{3}$. Furthermore, since $\mathcal{I}_{2}$ 
\begin{equation}\label{eq:inflation_2_d}
\mathcal{I}_{2}:\quad \tau_{i}=\begin{cases}
\{1\}\cup (i,L)_{1}  \setminus\{i\}  &\textrm{for odd }i\,,\\
(2,N)_{1}\setminus\{i\}  &\textrm{for even }i\,,
\end{cases}
\end{equation}
and $\mathcal{I}_{3}$ can be transformed into each other by swapping parties $1$ and $1'$, we find that 
\begin{equation}
a_{1';0}a_{2;1}a_{3;0}=1\,, \quad a_{i;0}a_{i+1;1}a_{i+2;0}=1\, \quad \textrm{for all } i\in(3,\dots,N-3)_{2}\,,\qquad a_{N-1;0}a_{N;1}=1\,
\end{equation}
hold true in $\mathcal{I}_{2}$.

Taking a product of all of the relations in $\mathcal{I}_{2}$ with appropriate powers gives us
\begin{equation}\label{eq:a_product_odd}
(a_{1';0}a_{2;1}a_{3;0})^{\chi_{2}} (a_{3;0}a_{4;1}a_{5;0})^{\chi_{4}}\dots (a_{N-1;0}a_{N;1})^{\chi_{N}}    =a_{1';0}^{\chi_{1}}a_{2;1}^{\chi_{2}}a_{4;1}^{\chi_{4}}\dots,a_{N;1}^{\chi_{N}}=1\,,
\end{equation}
where we use the fact that $\chi_{1}=\chi_{2}$. This relation depends only on measurements by even-numbered parties plus the party $1'$. Since the appropriate subnetworks in $\mathcal{I}_{2}$ and $\mathcal{I}_{0}$ are equivalent, this implies that Eq. \eqref{eq:a_product_odd} holds true for $\mathcal{I}_{0}$. Therefore,
\begin{equation}
P(a_{1';0}^{\chi_{1}}a_{2;1}^{\chi_{2}}a_{4;1}^{\chi_{4}}\dots,a_{N;1}^{\chi_{N}}=1)=1\,
\end{equation}
for $\mathcal{I}_{0}$. Since this probability is strictly $1$, it follows that
\begin{equation}
P(a_{1';0}^{\chi_{1}}a_{2;1}^{\chi_{2}}a_{4;1}^{\chi_{4}}\dots,a_{N;1}^{\chi_{N}}=1|a_{1;1}^{\xi_{1}}a_{3;1}^{\xi_{3}}\dots a_{N-3;1}^{\xi_{N-3}}=1)=1\,.   
\end{equation}
Combining this result with \eqref{eq:cglmp_I1} results in the following bound,
\begin{equation}
I_{N-1,N}|_{a_{1;1}^{\xi_{1}}a_{3;1}^{\xi_{3}}\dots a_{N-3;1}^{\xi_{N-3}}=1}+d(1-P(a_{1';1}^{\chi_{1}}a_{2;0}^{\chi_{2}}a_{4;0}^{\chi_{4}}\dots,a_{N;0}^{\chi_{N}}=1|a_{1;1}^{\xi_{1}}a_{3;1}^{\xi_{3}}\dots a_{N-3;1}^{\xi_{N-3}}=1)) < d-1\,,
\end{equation}
which violates monogamy \eqref{eq:monogamy_qudit}.

\textit{Part 2: odd $N$}

Since this part of the proof is very similar to the case of even $N$, we only showcase a general proof technique while skipping details explained in the previous part.

Our goal is to perform a proof by contradiction: we show that the assumption that \eqref{eqn: CGLMP_violation}, \eqref{eqn: all_stabs_qudit} and \eqref{eqn: last_two} is fulfilled by a correlation originating from $\mathcal{O}$ results in a violation of the following monogamy relation:
\begin{equation}\label{eq:bound_qudit_odd}
I_{N-1,N}|_{a_{1';0}^{\xi_{1}}a_{2;1}^{\xi_{2}}a_{4;1}^{\xi_{4}}\dots a_{N-3;1}^{\xi_{N-3}}=1}+d(1-P(a_{1;1}^{\chi_{1}}a_{3;1}^{\chi_{3}}\dots,a_{N;1}^{\chi_{N}}=1|a_{1';0}^{\xi_{1}}a_{2;1}^{\xi_{2}}a_{4;1}^{\xi_{4}}\dots a_{N-3;1}^{\xi_{N-3}}=1))\geqslant d-1\,,
\end{equation}
where $I_{N-1,N}$ is the CGLMP inequality \eqref{eq:cglmp_omega}, $\chi_{j}$ is defined in Eq. \eqref{eq:chi_def}, and $\xi_{j}=-\chi_{j}\chi_{N-3}^{-1}$. 

Looking at the subnetworks consisting of parties $i,i+1,i+2$ for $i\in(1,N-2)_{2}$ in $\mathcal{O}$ and $\mathcal{I}_{3}$ \eqref{eq:inflation_3_d}, we can conclude that, by our assumption, the relations 
\begin{equation}
I_{N-1,N}|_{a_{N-2;0}=1}<d-1\,,\qquad a_{i;0}a_{i+1;1}a_{i+2;0}=1 \quad \textrm{for } i\in(1,N-4)_{2}\,,
\end{equation}
hold true in $\mathcal{I}_{3}$. Swapping parties $1$ and $1'$ gives us the equivalent relations in $\mathcal{I}_{2}$ \eqref{eq:inflation_2_d}, which we can use to derive the following:
\begin{equation}
a_{1';0}^{\chi_{1}} a_{2;1}^{\chi_{2}}a_{4;1}^{\chi_{4}}\dots a_{N-3;1}^{\chi_{N-3}}a_{N-2;0}^{\chi_{N-3}}=1\,.
\end{equation}
We use this to derive a formula for $a_{N-2;0}$, which allows us to show that
\begin{equation}\label{eq:bound_violation_odd}
I_{N-1,N}|_{a_{1';0}^{\xi_{1}} a_{2;1}^{\xi_{2}}a_{4;1}^{\xi_{4}}\dots a_{N-3;1}^{\xi_{N-3}}=1}<d-1\,
\end{equation}
holds true in $\mathcal{I}_{2}$. Next, analysing the appropriate subnetworks of $\mathcal{I}_{2}$ and $\mathcal{I}_{0}$ \eqref{eq:inflation_0_d}, one can conclude that the above relation also holds in $\mathcal{I}_{0}$.

On the other hand, from our assumptions and from the equivalence of the appropriate subnetworks, we can conclude that 
\begin{equation}
a_{i;0}a_{i+1;1}a_{i+2;0}=1 \quad \textrm{for all } i\in(0,N-3)_{2}\,,\qquad a_{N-1;0}a_{N;1}=1\,
\end{equation}
holds true in $\mathcal{I}_{1}$ \eqref{eq:inflation_1_d}. Taking a product of these expressions with appropriate powers gives us
\begin{equation}
a_{1;1}^{\chi_{1}}a_{3;1}^{\chi_{3}}\dots,a_{N;1}^{\chi_{N}}=1\,.
\end{equation}
This can be shown to hold true in $\mathcal{I}_{0}$ by the equivalence of subnetworks containing only odd-numbered, non-primed parties in inflations $\mathcal{I}_{1}$ and $\mathcal{I}_{0}$. We therefore have that
\begin{equation}
P(a_{1;1}^{\chi_{1}}a_{3;1}^{\chi_{3}}\dots,a_{N;1}^{\chi_{N}}=1|a_{1';0}^{\xi_{1}}a_{2;1}^{\xi_{2}}a_{4;1}^{\xi_{4}}\dots a_{N-3;1}^{\xi_{N-3}}=1)=1\,.
\end{equation}

Combining the above with \eqref{eq:bound_violation_odd}, we get 
\begin{equation}
I_{N-1,N}|_{a_{1';0}^{\xi_{1}}a_{2;1}^{\xi_{2}}a_{4;1}^{\xi_{4}}\dots a_{N-3;1}^{\xi_{N-3}}=1}+d(1-P(a_{1;1}^{\chi_{1}}a_{3;1}^{\chi_{3}}\dots,a_{N;1}^{\chi_{N}}=1|a_{1';0}^{\xi_{1}}a_{2;1}^{\xi_{2}}a_{4;1}^{\xi_{4}}\dots a_{N-3;1}^{\xi_{N-3}}=1))< d-1\,,
\end{equation}
which violates Ineq. \eqref{eq:bound_qudit_odd}.
\end{proof}

\section{The GHZ state is LOSR-GMNL for any local dimension}\label{app:ghz}
In this section, we generalize the results from \cite{PhysRevA.104.052207} to show that a generalized Greenberger-Horne-Zeilinger (GHZ) \cite{Greenberger1989} state
\begin{equation}
\ket{GHZ_{N}^{d}}=\sum_{j=0}^{d-1} \ket{j}^{\otimes N}
\end{equation}
is LOSR-$\GMNL$ for any $d$. Crucially, this state is stabilized by the following operators
\begin{align}\label{eq:ghz_stab}
\begin{split}
g_{1} &= X_{1}X_{2}\dots X_{N}\,,\\
g_{j} &= Z_{j-1}Z_{j}^{-1} \quad \textrm{for } j\in(2,N)_{1}\,,
\end{split}
\end{align}
where $X$ and $Z$ are defined in Eq. \eqref{eq:def_xz}.

As was the case in Appendix \ref{app:C_N^d}, here we also make use of the CGLMP inequality \cite{PhysRevLett.88.040404} defined in \eqref{eq:cglmp_omega}. For more details on this inequality, see Appendix \ref{app:C_N^d}. We also use the same assumption and notation for inflations as established in Appendix \ref{app:C_N}.

\begin{theorem}
A GHZ state $\ket{GHZ_{N}^{d}}$ is LOSR-$\GMNL$ for any $d$ and $N\geqslant 3$.
\end{theorem}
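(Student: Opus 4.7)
The plan is to adapt the inflation-based argument of Appendix~\ref{app:C_N^d} to $\ket{GHZ_N^d}$, exploiting the stabilizers \eqref{eq:ghz_stab}. The key observation is that projecting parties $1,\ldots,N-2$ onto specific Fourier-basis outcomes leaves a maximally entangled pair on parties $N-1, N$, on which CGLMP-optimal observables give $I_{N-1,N}<d-1$; meanwhile, the stabilizer $g_1 = X_1\cdots X_N$ forces the sum of all $N$ $X$-outcomes to vanish modulo $d$, producing a global deterministic relation that couples the CGLMP pair to the remaining parties.

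Concretely, I would assign $A_{j;0}=X$ for every party $j\in\{1,\ldots,N\}$, with Fourier eigenvalues interpreted as elements of $\mathbb{Z}_d$, and take $A_{N-1;1}, A_{N-1;2}, A_{N;1}, A_{N;2}$ to be the CGLMP observables recalled in Appendix~\ref{app:C_N^d}, dressed with local Clifford operations so that the conditional state after the Fourier projection matches $\ket{\phi^+}$ exactly. Parties $1,\ldots,N-2$ would also carry an auxiliary second setting $A_{j;1}=Z$, yielding a full two-setting scenario. Two facts then hold on $\ket{GHZ_N^d}$: (a)~the conditional CGLMP quantity satisfies $I_{N-1,N}\big|_{a_{1;0}+\cdots+a_{N-2;0}=0}<d-1$, and (b)~the deterministic relation $a_{1;0}+a_{2;0}+\cdots+a_{N;0}=0\pmod{d}$ holds with probability one.

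Assuming for contradiction that these correlations are reproducible in an LOSR network $\mathcal{O}$ with $(N-1)$-partite non-signalling sources plus global shared randomness, I would introduce inflations $\mathcal{I}_0, \mathcal{I}_1, \mathcal{I}_2, \mathcal{I}_3$ analogous to \eqref{eq:inflation_0_d}--\eqref{eq:inflation_3_d}. Subnetwork isomorphisms on the CGLMP pair $\{N-1,N\}$ and on parties $1,\ldots,N-2$ would transfer (a) into $\mathcal{I}_0$; a second, symmetric rearrangement obtained by swapping a chosen party with its primed copy would transfer (b) into $\mathcal{I}_0$ and reveal that the conditioning event in (a) is deterministically predicted by a disjoint group of primed parties lying outside the CGLMP pair. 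This combination contradicts the CGLMP monogamy relation \eqref{eqn: monog_orig}, closing the argument. Treating even and odd $N$ separately, as in Appendix~\ref{app:C_N^d}, yields two symmetric inflation chains of essentially the same shape.

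The main obstacle is that $g_1$ couples all $N$ parties simultaneously, whereas the deterministic relations used for cluster states in Appendix~\ref{app:C_N^d} only involve three adjacent parties at a time; consequently the inflation topology must be chosen so that the $N$-body $X$-sum relation~(b) respects the splitting into primed and unprimed halves imposed by the subnetwork-isomorphism arguments, and so that the event $\sum_{j=1}^{N-2} a_{j;0}=0$ in one inflation becomes a deterministic prediction by primed parties disjoint from $\{N-1,N\}$ in another. Secondary obstacles are matching the Fourier phases arising from the projection with the exact CGLMP observable convention of Appendix~\ref{app:C_N^d}, so that the conditional state truly is $\ket{\phi^+}$ and not a twisted maximally entangled state on which the chosen observables would fail to saturate the quantum CGLMP bound.
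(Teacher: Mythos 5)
Your step (a) is essentially the paper's: conditioning on the $X$-outcomes of $N-2$ parties collapses the remaining pair to $\ket{\phi^{+}}$ (the paper conditions on parties $3,\dots,N$ and puts the CGLMP pair on $\{1,2\}$, but that is only a relabelling). The gap is in step (b) and in how you intend to feed it into the monogamy relation \eqref{eqn: monog_orig}. That relation constrains the probability that a party (or group) \emph{disjoint from the CGLMP pair} guesses a \emph{CGLMP outcome of one member of the pair}. The relation you extract from $g_{1}$, namely $a_{1;0}+\dots+a_{N;0}=0$, cannot be brought into that form: every element of the GHZ stabilizer group with $X$-support on party $N$ also has $X$-support on party $N-1$ (and on all other parties), so there is no deterministic relation in which a group excluding $N-1$ predicts an $X$-type outcome of party $N$. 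At best your relation lets $\{1,\dots,N-2\}$ predict the \emph{sum} $a_{N-1;0}+a_{N;0}$, or lets party $N-1$ (a member of the pair) predict $a_{N;0}$; neither contradicts monogamy. Likewise, having primed parties predict the \emph{conditioning event} $\sum_{j\le N-2}a_{j;0}=0$, as you propose, is not something the monogamy inequality penalises at all.

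The paper's proof supplies exactly the ingredient you left unused: the auxiliary $Z$ settings. The two-body stabilizers $g_{j}=Z_{j-1}Z_{j}^{-1}$ give a chain of pairwise perfect correlations $a_{1;1}=a_{2;0}=\dots=a_{N;0}$, where $A_{1;1}=F^{\dagger}X^{\dagger}F=Z$ is one of the two CGLMP observables of party $1$. Because each link of the chain involves only two adjacent parties, it survives in the ``staircase'' inflation $\mathcal{I}_{1}$ ($\tau_{i}=(i+1,N)_{1}$), in which parties $1$ and $N$ share no source; hence the perfect correlation $a_{1;1}=a_{N;0}$ is attributable to shared randomness alone, and in $\mathcal{I}_{0}$ (two disconnected copies of $\mathcal{O}$) the primed party $N'$ predicts $a_{1;1}$ with certainty. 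That is a prediction of a CGLMP outcome of party $1$ by a party disjoint from $\{1,2\}$, which is what contradicts \eqref{eqn: monog_orig}. Note also that this argument needs only two inflations and no even/odd case split: the parity bookkeeping of Appendix~\ref{app:C_N^d} is an artefact of the three-body nearest-neighbour stabilizers of cluster states and does not carry over to the GHZ state. To repair your proposal you would have to replace (b) by the $Z$-chain (or some equivalent family of two-body deterministic relations) and arrange one CGLMP observable of a pair member to coincide with the observable appearing at the end of that chain.
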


\begin{proof}
We prove the theorem by contradiction. To this end, let us take the observables as follows: 
\begin{equation}\label{eqn: gen_GHZ_meas}
A_{1;k}=F^{\dagger}U_{\alpha_{k}}X^{\dagger}U_{\alpha_{k}}^{\dagger}F \,,\qquad A_{2;k}=FU_{\beta_{k}}XU_{\beta_{k}}^{\dagger}F^{\dagger}\,, \qquad A_{2;0}=Z\,, \qquad A_{i;j}= Z^{1-j}X^{j}\quad\,,
\end{equation}
where $k\in\{1,2\}$, $j\in\{0,1\}$, $i\in (3,N)_{1}$, $F$ is the discrete Fourier transform matrix and $U_{\chi}=\sum_{j}\omega^{j\chi}\ket{j}\!\bra{j}$ for any $\chi$, with $\alpha_{1}=0$, $\alpha_{2}=1/2$, $\beta_{1}=1/4$, $\beta_{2}=-1/4$. Note that $A_{1;1}=F^{\dagger}X^{\dagger}F=Z$. Let us denote by $\ket{\phi}_{3,\dots, N}$ the state on parties $3,\dots,N$ after the measurement that results in $a_{3;1}a_{4;1}\dots a_{N;1}=1$. From \eqref{eq:ghz_stab}, it follows that the global state $\ket{\psi}$ after the measurement fulfills
\begin{equation}
\ket{\psi} = \eta \ket{\phi}_{3,\dots,N}\bra{\phi} \!\ket{GHZ_{N}^{d}}=\eta \ket{\phi}_{3,\dots,N}\bra{\phi} g_{1}\ket{GHZ_{N}^{d}}=\eta X_{1}X_{2}\ket{\phi}_{3,\dots,N}\bra{\phi} \!\ket{GHZ_{N}^{d}}=X_{1}X_{2}\ket{\psi}\,,
\end{equation}
where $\eta$ is a normalization constant. Similarly, using $g_2$, one can show that $\ket{\psi}=Z_{1}Z_{2}^{-1}\ket{\psi}$, therefore
\begin{equation}
\ket{\psi}=\ket{\phi^{+}}_{1,2}\ket{\phi}_{3,\dots,N}\,,
\end{equation}
where $\ket{\phi^{+}}_{1,2}=1/\sqrt{d}\sum_{j=0}^{d-1}\ket{jj}$. We can then use the fact that $F\otimes F^{\dagger}\ket{\phi^{+}}=\ket{\phi^{+}}$ to conclude that, using observables $A_{1;1},A_{1;2},A_{2;1},A_{2;2}$ from Eq. \eqref{eqn: gen_GHZ_meas}, we get
\begin{equation}\label{eq:I_21}
I_{1,2}|_{a_{3;1}a_{4;1}\dots a_{N;1}=1}<d-1\,.
\end{equation}
Moreover, from \eqref{eq:ghz_stab} we have 
\begin{align}\label{eqn: ghz_stab_ob}
    \langle A_{1;1}A_{2;0}^{-1} \rangle =1\,, \quad \langle A_{i;0} A_{i+1;0}^{-1} \rangle=1 \quad \textrm{for } i\in(2,N-1)_{1}\,.
\end{align}
Note that measuring the observable $Z^{-1}$ is equivalent to measuring $Z$ but assigning to each outcome the multiplicative inverse of the corresponding outcome of $Z$, so from \eqref{eqn: ghz_stab_ob} we have  
\begin{align}
a_{1;1}a_{2;0}^{-1}=1\,, \quad a_{i;0}a_{i+1;0}^{-1} =1 \quad \textrm{for } i\in(2,N-1)_{1}\,, 
\end{align}
and therefore 
\begin{align}\label{eqn: same_GHZ}
a_{1;1}=a_{2;0}\,, \quad a_{i;0} =a_{i+1;0} \quad \textrm{for } i\in(2,N-1)_{1}\,. 
\end{align}
We will now show that the correlations \eqref{eq:I_21} and \eqref{eqn: same_GHZ} cannot be produced simultaneously by a network $\mathcal{O}$ consisting of $N$ parties connected by $(N-1)$-partite sources of non-signalling correlations and an $N$-partite source of shared randomness.

Let us consider the following inflation of $\mathcal{O}$
\begin{equation}
\mathcal{I}_{0}:\quad \tau_{i}= (1,N)_{1}\setminus\{i\}\,.
\end{equation}

First, we use the results of \cite{Augusiak_2014} to establish the following monogamy relation, 
\begin{align}\label{eq:ghz_monogamy}
&I_{1,2}|_{a_{3;1}a_{4;1}\dots a_{N;1}=1}+d(1-P(a_{1;1}=a_{N';0}|a_{3;1}a_{4;1}\dots a_{N;1}=1))\geqslant d-1\,,
\end{align}
which holds in $\mathcal{I}_0$. The CGLMP term does not depend on any primed parties, and, since $\mathcal{I}_{0}$ consists of two disconnected copies of $\mathcal{O}$, we have that its value is exactly the same in $\mathcal{I}_{0}$ as in $\mathcal{O}$. If we assume then that \eqref{eq:I_21} can be achieved in $\mathcal{O}$, it must also be achieved in $\mathcal{I}_{0}$. Consider now the intermediary inflation
\begin{equation}
\mathcal{I}_{1}:\quad \tau_{i}= (i+1,N)_{1}\,.
\end{equation}
Notice that any subnetwork consisting of two parties $j, j+1$ for all $j\in(1,N-1)_{1}$ has the same structure in $\mathcal{O}$ and in $\mathcal{I}_{1}$. Hence, if the correlations \eqref{eqn: same_GHZ} hold in $\mathcal{O}$, they will also hold in $\mathcal{I}_1$. What's more, one can combine all of the equalities in \eqref{eqn: same_GHZ} to find 
\begin{equation}
a_{1;1}=a_{N;0}\,.
\end{equation}
In $\mathcal{I}_{1}$, parties $1$ and $N$ do not share any common source, which is also true for parties $1$ and $N'$ in $\mathcal{I}_{0}$. Therefore, in $\mathcal{I}_{0}$ we have
\begin{equation}
a_{1;1}=a_{N';0}\,,
\end{equation}
so, trivially,
\begin{align}
P(a_{1;1}=a_{N';0}|a_{3;1}a_{4;1}\dots a_{N;1}=1)=P(a_{1;1}=a_{N';0})=1\,.
\end{align}
Combined with \eqref{eq:I_21}, this implies that
\begin{align}
I_{1,2}|_{a_{3;1}a_{4;1}\dots a_{N;1}=1}+d(1-P(a_{1;1}=a_{N';0}|a_{3;1}a_{4;1}\dots a_{N;1}=1))< d-1\,.
\end{align}
This contradicts the monogamy relation \eqref{eq:ghz_monogamy} and proves that $\ket{GHZ_{N}^{d}}$ is LOSR-$\GMNL$.   
\end{proof}

\section{LONC-GMNL setting}\label{app:c-gmnl}
\subsection{$\ket{{\rm GHZ}_n}$ is maximally LONC-GMNL on a directed path}\label{sectionGHZonaline}
We prove Theorem~\ref{GHZline} which appears in the main text.
\begin{theorem}[Full statement of Theorem~\ref{GHZline}]\label{GHZlinefull}
In the LONC-$\GMNL$ model, the following inequality holds if we allow $t<n-1$ rounds of synchronous communication along an oriented path, where the $n$ parties are named ${A^{(1)}\rightarrow \dots\rightarrow A^{(n)}}$.
\begin{align}
{\langle A_1^{(1)} \cdots A_1^{(n-1)}  A_{1}^{(n)} \rangle}+{\langle A_1^{(1)}\cdots A_1^{(n-1)}  A_{2}^{(n)} \rangle}+{\langle A_0^{(1)} A_{1}^{(n)} \rangle}-{\langle A_0^{(1)} A_{2}^{(n)} \rangle} +2{\langle A_0^{(1)} A_0^{(n)} \rangle} \le 4 \,.\label{eqghzline}
\end{align}
It is violated by the quantum state $\ket{{\rm GHZ}_n}$ when measuring the observables $Z$ for ${A_0^{(1)},A_0^{(n)}}$; $X$ for  $A_1^{(1)},\dots,A_1^{(n-1)}$; $\frac{X+Z}{\sqrt{2}}$ for $A_1^{(n)}$; and $\frac{X-Z}{\sqrt{2}}$ for $A_2^{(n)}$.
\end{theorem}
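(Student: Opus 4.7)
The proof splits into a quantum-violation computation and a LONC upper bound. For the quantum part, I would expand each correlator using the stabilizers $X^{\otimes n}$ and $Z_i Z_j$ of $\ket{\mathrm{GHZ}_n}$, together with the fact that any Pauli string mixing a single $X$ with $Z$'s has zero GHZ-expectation. Writing $(X\pm Z)/\sqrt 2$ as a sum on site $n$, each of the first four correlators evaluates to $\pm 1/\sqrt 2$, while $\langle Z_1 Z_n\rangle = 1$, yielding the quantum value $2\sqrt 2 + 2 > 4$.

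For the LONC upper bound, the key causal fact is that $t<n-1$ rounds of one-way communication on a directed path of length $n-1$ cannot transmit information from $A^{(1)}$'s input to $A^{(n)}$'s output, and that $x^{(n)}$ cannot influence any upstream party. Conditioning on the shared classical randomness $\lambda$, I would establish two consequences: (a) the mean $\bar M := \mathbb{E}[a^{(1)}a^{(2)}\cdots a^{(n-1)} \mid \lambda, x^{(1)}=1, x^{(\mathrm{mid})}=1]$ does not depend on $x^{(n)}$, so it is common to the two experiments appearing in $T_1$ and $T_2$; and (b) $a^{(1)}$ and $a^{(n)}$ are conditionally independent given $\lambda$ and all inputs, because $a^{(1)}$ only accesses the no-signaling box $B_1$ shared with $A^{(2)}$, while $a^{(n)}$'s distribution depends only on the boxes along the last $t$ hops of the path, which exclude $B_1$ when $t<n-1$, and distinct no-signaling boxes carry independent randomness.

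Applying (b), the three terms $T_3, T_4, T_5$ factorize as $\bar\alpha \bar b_1$, $\bar\alpha \bar b_2$, and $2\bar\alpha \bar b_0$, where $\bar\alpha := \mathbb{E}[a^{(1)} \mid \lambda, x^{(1)}=0]$ and $\bar b_k := \mathbb{E}[a^{(n)} \mid \lambda, x^{(n)}=k, x^{(\mathrm{mid})}=1]$. For $T_1, T_2$, since $M$ and $a^{(n)}$ are $\pm 1$-valued with respective means $\bar M$ and $\bar b_k$, the elementary bound $\mathbb{E}[M \cdot a^{(n)}] \le 1 - |\bar M - \bar b_k|$ for bivariate $\pm 1$ distributions with fixed marginals, combined with (a) and the triangle inequality $|\bar M - \bar b_1| + |\bar M - \bar b_2| \ge |\bar b_1 - \bar b_2|$, gives $T_1 + T_2 \le 2 - |\bar b_1 - \bar b_2|$. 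Summing all contributions at fixed $\lambda$,
\[
\mathrm{LHS}|_\lambda \;\le\; 2 - (1-|\bar\alpha|)\,|\bar b_1 - \bar b_2| + 2|\bar\alpha||\bar b_0| \;\le\; 4,
\]
using $|\bar\alpha|, |\bar b_0| \le 1$. Averaging over $\lambda$ by convexity preserves the bound.

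The main obstacle I anticipate is establishing consequence (b) rigorously when the LONC model admits super-quantum no-signaling communication: one must carefully track how no-signaling box outputs propagate through the $t$-round synchronous protocol and verify that the boxes influencing $a^{(1)}$ are disjoint from those influencing $a^{(n)}$. Once this disjointness, and hence the independence of the two corresponding random variables, is in hand, the remainder of the argument is an elementary chain of marginal and coupling inequalities.
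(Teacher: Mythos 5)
Your proposal is correct, and the quantum value $2\sqrt{2}+2$ matches the paper's. The causal analysis is essentially the one the appendix uses: from $t<n-1$ on the oriented path you extract (i) mutual no-signalling between the groups $(A^{(1)}\cdots A^{(n-1)})$ and $A^{(n)}$, and (ii) that $a^{(1)}$ and $a^{(n)}$ share no causal resource beyond the initial randomness $\lambda$ — the paper phrases (ii) as the statement that an external party holding only $\lambda$ can reproduce the correlation $\langle A_0^{(1)}A_0^{(n)}\rangle$, while you phrase it as conditional independence of $a^{(1)}$ and $a^{(n)}$ given $\lambda$ and the inputs; these are the same causal fact, so your worry about verifying it against arbitrary no-signalling resources is no extra burden relative to the paper. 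Where you genuinely diverge is the final step: the paper regroups the first four terms into a bipartite CHSH expression and invokes the monogamy relation of \cite{Augusiak_2014} as a black box, whereas you re-derive the needed special case from scratch using the coupling bound $\mathbb{E}[UV]\le 1-\left|\mathbb{E}U-\mathbb{E}V\right|$ for $\pm 1$ variables, the triangle inequality $|\bar M-\bar b_1|+|\bar M-\bar b_2|\ge|\bar b_1-\bar b_2|$, and the factorization of the $A^{(1)}$--$A^{(n)}$ correlators at fixed $\lambda$. Your route is more elementary and self-contained, and it even yields the stronger intermediate statement that the CHSH part alone is bounded by its local value $2$ once $a^{(1)}$ is conditionally independent of the $A^{(n)}$ side; the paper's route is shorter and makes the connection to the known monogamy inequality explicit. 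Both are valid proofs of the stated bound.
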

\begin{proof}
The quantum violation is straightforward. We prove the inequality by invoking the monogamy of bipartite CHSH non-signaling correlations~\cite{Augusiak_2014}. We first remark that $t<n-1$ implies that no causal resources, except for the starting classical shared randomness, can be shared by the extremal parties $A^{(1)}$ and $A^{(n)}$ in an oriented path. An external party $\tilde{A}^{(n)}$ having access only to the shared randomness can thus achieve the same correlation $2\langle A_0^{(1)} A_0^{(n)} \rangle=2\langle A_0^{(1)} \tilde{A}_0^{(n)} \rangle$. Then, we observe that the first four terms of the left-hand side of Eq.~\eqref{eqghzline} are reducible to the CHSH inequality by regrouping the parties into two groups: $(A^{(1)}\dots A^{(n-1)})$ and $A^{(n)}$. We note that the oriented path structure prevents signalling between those two groups. More precisely, the inputs $a\in\{0,1\}$ given to $A^{(1)}$ starts too far to reach $A^{(n)}$ in $t<n-1$ steps and $a$ thus cannot effect the marginal probability distribution $P(A_b^{(n)}|a)=P(A_b^{(n)})$; and the other input, let it be $b\in\{0,1,2\}$, given to $A^{(n)}$ cannot effect the joint marginal probability distribution $P(A_a^{(1)}\dots A_1^{(n-1)}|b)=P(A_a^{(1)}\dots A_1^{(n-1)})$ because the signalling is only left to right. We conclude by applying the monogamy inequality for bipartite non-signaling correlations that are given in~\cite[Equation 2]{Augusiak_2014}.
\end{proof}

\subsection{Cluster states are (only) LONC-$\GMNL_2$ on a directed path}
Here we show a way of generating a cluster state in 2 rounds of communication on an $N$-partite line network in which communication is allowed only in one direction. For simplicity, we take that the first party can send information to the second, the second to the third, etc. Each party $i\neq 1,N$ has access to two qubits, which we denote $(i,1)$ and $(i,2)$, each of them initially in a state $\ket{+}$. As for the first and last party, the first one starts with three qubits in a state $\ket{+}$, which we denote by $(1,1)$, $(1,2)$ and $(0,1)$, while the party $N$ starts with no qubits. We thus start with a graph state corresponding to an edgeless graph with $2N-1$ vertices. In what follows, we give an algorithm to generate a cluster state in one-way line networks; however, we will describe the operations in terms of graph transformations. Afterward, we comment on how these transformations relate to the operations performed on qubits.

\begin{enumerate}
    \item Each party $i$ connects vertices $(i,1)$ and $(i,2)$ with an edge. The first party additionally connects $(0,1)$ and $(1,1)$.
\item Each party $i\neq N$ sends vertex (qubit) $(i,2)$ to the party $i+1$.
\item Each party $i\neq 1,N$ connects $(i,1)$ and $(i-1,2)$ with an edge.
\item We perform local complementation on vertices $(i,2)$ which in turn connects vertices $(i,1)$ and $(i+1,1)$ for all $i\neq 0$.
\item Each party $i\neq 1,N$ disconnects vertices $(i,1)$ and $(i-1,2)$.
\item Party $i$ sends vertex $(i,1)$ to the party $i+1$. 
\item Each party $i\neq 1$ disconnects vertices $(i-1,1)$ and $(i-1,2)$. The remaining graph is a linear graph of qubits (i,1) for all $i\in\{0,\dots,N-1 \}$ and isolated vertices $(i,2)$ for all $i\in \{1,\dots, N\}$.
\end{enumerate}

The procedure is illustrated for $N=4$ in Fig.~\ref{fig:qconstruction}.
The operation of connecting and disconnecting vertices corresponds to the action of $C_{Z}=\operatorname{diag}(1,1,1,-1)$ on the corresponding qubits. In the above algorithm, this operation is always performed locally, i.e. on qubits that a given party has access to. As for the local complementation, it is known that this operation corresponds to action with local unitaries on individual qubits (see Appendix \ref{app:graph_states} for more details).

~

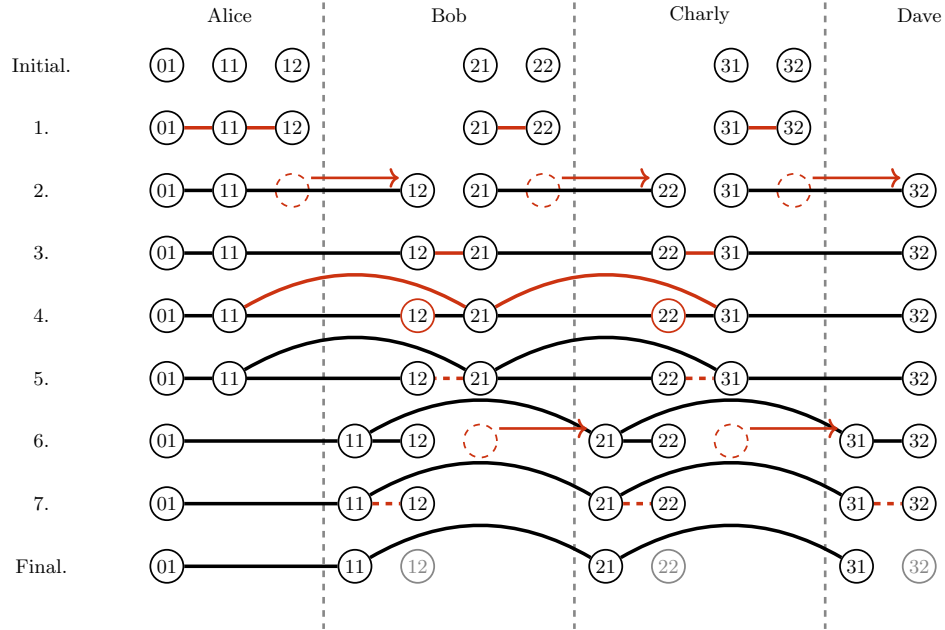
\begin{figure}[hbt]
    \centering
   \resizebox{0.75\textwidth}{!}{\definecolor{myred}{RGB}{204,51,17}
\definecolor{myblue}{RGB}{0,119,187}
\definecolor{mygrey}{RGB}{135,135,135}
\definecolor{myteal}{RGB}{80,175,148}
\definecolor{myorange}{RGB}{238,119,51}
\definecolor{mymagenta}{RGB}{238,51,110}

\tikzset{player/.style={circle,draw=black,inner sep=0, minimum size=15pt,fill=white,thick
 }}

 \tikzset{playerremoved/.style={circle,draw=myred,dashed,inner sep=0, minimum size=15pt,fill=white,thick
 }}

\tikzset{ligne/.style={very thick
}}

\centering
\begin{tikzpicture}[scale=1]
\centering

\usetikzlibrary{decorations.pathreplacing}
\tikzset{ligne/.style={ultra thick,myred
}}
\tikzset{ligne2/.style={ultra thick,black
}}

\begin{scope}[shift={(0,0)}]
\node (step1) at (-2,0) {Initial.};
\node[player] (01) at (0,0) {$01$};
\node[player] (11) at (1,0) {$11$}; 
\node[player] (12) at (2,0) {$12$}; 

\node[player] (21) at (5,0) {$21$};
\node[player] (22) at (6,0) {$22$}; 
\node[player] (31) at (9,0) {$31$}; 
\node[player] (32) at (10,0) {$32$};
\end{scope}

\begin{scope}[shift={(0,-1)}]
\node (step2) at (-2,0) {1.};
\node[player] (01) at (0,0) {$01$};
\node[player] (11) at (1,0) {$11$}; 
\node[player] (12) at (2,0) {$12$}; 

\node[player] (21) at (5,0) {$21$};
\node[player] (22) at (6,0) {$22$}; 
\node[player] (31) at (9,0) {$31$}; 
\node[player] (32) at (10,0) {$32$};

\draw[ligne] (01) -- (11);
\draw[ligne] (11) -- (12);
\draw[ligne] (21) -- (22);
\draw[ligne] (31) -- (32);
\end{scope}

\begin{scope}[shift={(0,-2)}]
\node (step2) at (-2,0) {2.};
\node[player] (01) at (0,0) {$01$};
\node[player] (11) at (1,0) {$11$}; 
\node[playerremoved] (12o) at (2,0) {}; 
\node[player] (12) at (4,0) {$12$}; 

\node[player] (21) at (5,0) {$21$};
\node[playerremoved] (21r) at (6,0) {};
\node[player] (22) at (8,0) {$22$}; 
\node[player] (31) at (9,0) {$31$}; 
\node[playerremoved] (31r) at (10,0) {}; 
\node[player] (32) at (12,0) {$32$};

\draw[ligne2] (01) -- (11);
\draw[ligne2] (11) -- (12);
\draw[ligne2] (21) -- (22);
\draw[ligne2] (31) -- (32);
\draw[ligne,very thick,->] (2.3,0.2) -- (3.7,0.2);
\draw[ligne,very thick,->] (6.3,0.2) -- (7.7,0.2);
\draw[ligne,very thick,->] (10.3,0.2) -- (11.7,0.2);
\end{scope}

\begin{scope}[shift={(0,-3)}]
\node (step2) at (-2,0) {3.};
\node[player] (01) at (0,0) {$01$};
\node[player] (11) at (1,0) {$11$}; 
\node[player] (12) at (4,0) {$12$}; 

\node[player] (21) at (5,0) {$21$};
\node[player] (22) at (8,0) {$22$}; 
\node[player] (31) at (9,0) {$31$}; 
\node[player] (32) at (12,0) {$32$};

\draw[ligne2] (01) -- (11);
\draw[ligne2] (11) -- (12);
\draw[ligne2] (21) -- (22);
\draw[ligne2] (31) -- (32);
\draw[ligne] (21) -- (12);
\draw[ligne] (22) -- (31);
\end{scope}

\begin{scope}[shift={(0,-4)}]
\node (step2) at (-2,0) {4.};
\node[player] (01) at (0,0) {$01$};
\node[player] (11) at (1,0) {$11$}; 
\node[player,draw=myred] (12) at (4,0) {$12$}; 

\node[player] (21) at (5,0) {$21$};
\node[player,draw=myred] (22) at (8,0) {$22$}; 
\node[player] (31) at (9,0) {$31$}; 
\node[player] (32) at (12,0) {$32$};

\draw[ligne2] (01) -- (11);
\draw[ligne2] (11) -- (12);
\draw[ligne2] (21) -- (22);
\draw[ligne2] (31) -- (32);
\draw[ligne2] (21) -- (12);
\draw[ligne2] (22) -- (31);
\draw[ligne] (11) to[bend left] (21);
\draw[ligne] (21) to[bend left] (31);
\end{scope}

\begin{scope}[shift={(0,-5)}]
\node (step2) at (-2,0) {5.};
\node[player] (01) at (0,0) {$01$};
\node[player] (11) at (1,0) {$11$}; 
\node[player] (12) at (4,0) {$12$}; 

\node[player] (21) at (5,0) {$21$};
\node[player] (22) at (8,0) {$22$}; 
\node[player] (31) at (9,0) {$31$}; 
\node[player] (32) at (12,0) {$32$};

\draw[ligne2] (01) -- (11);
\draw[ligne2] (11) -- (12);
\draw[ligne2] (21) -- (22);
\draw[ligne2] (31) -- (32);
\draw[ligne,dashed] (21) -- (12);
\draw[ligne,dashed] (22) -- (31);
\draw[ligne2] (11) to[bend left] (21);
\draw[ligne2] (21) to[bend left] (31);
\end{scope}

\begin{scope}[shift={(0,-6)}]
\node (step2) at (-2,0) {6.};
\node[player] (01) at (0,0) {$01$};
\node[player] (11) at (3,0) {$11$}; 
\node[player] (12) at (4,0) {$12$}; 
\node[playerremoved] (21r) at (5,0) {$$};
\node[player] (21) at (7,0) {$21$};
\node[player] (22) at (8,0) {$22$}; 
\node[playerremoved] (31r) at (9,0) {};
\node[player] (31) at (11,0) {$31$}; 
\node[player] (32) at (12,0) {$32$};

\draw[ligne2] (01) -- (11);
\draw[ligne2] (11) -- (12);
\draw[ligne2] (21) -- (22);
\draw[ligne2] (31) -- (32);
\draw[ligne2] (11) to[bend left] (21);
\draw[ligne2] (21) to[bend left] (31);
\draw[ligne,very thick,->] (5.3,0.2) -- (6.7,0.2);
\draw[ligne,very thick,->] (9.3,0.2) -- (10.7,0.2);
\end{scope}

\begin{scope}[shift={(0,-7)}]
\node (step2) at (-2,0) {7.};
\node[player] (01) at (0,0) {$01$};
\node[player] (11) at (3,0) {$11$}; 
\node[player] (12) at (4,0) {$12$}; 

\node[player] (21) at (7,0) {$21$};
\node[player] (22) at (8,0) {$22$}; 
\node[player] (31) at (11,0) {$31$}; 
\node[player] (32) at (12,0) {$32$};

\draw[ligne2] (01) -- (11);
\draw[ligne,dashed] (11) -- (12);
\draw[ligne,dashed] (21) -- (22);
\draw[ligne,dashed] (31) -- (32);
\draw[ligne2] (11) to[bend left] (21);
\draw[ligne2] (21) to[bend left] (31);
\end{scope}

\begin{scope}[shift={(0,-8)}]
\node (step2) at (-2,0) {Final.};
\node[player] (01) at (0,0) {$01$};
\node[player] (11) at (3,0) {$11$}; 
\node[player,draw=mygrey] (12) at (4,0) {\textcolor{mygrey}{$12$}}; 

\node[player] (21) at (7,0) {$21$};
\node[player,draw=mygrey] (22) at (8,0) {\textcolor{mygrey}{$22$}}; 
\node[player] (31) at (11,0) {$31$}; 
\node[player,draw=mygrey] (32) at (12,0) {\textcolor{mygrey}{$32$}};

\draw[ligne2] (01) -- (11);
\draw[ligne2] (11) to[bend left] (21);
\draw[ligne2] (21) to[bend left] (31);
\end{scope}

\begin{scope}[on background layer]

\node (A) at (1,0.8) { {Alice}};
\node (B) at (4.5,0.8) { {Bob}};
\node (C) at (8.5,0.8) { {Charly}};
\node (D) at (12,0.8) { {Dave}};

\draw[very thick,mygrey,dashed] (2.5,1) -- (2.5,-9);
\draw[very thick,mygrey,dashed] (6.5,1) -- (6.5,-9);
\draw[very thick,mygrey,dashed] (10.5,1) -- (10.5,-9);
\end{scope}
\end{tikzpicture}}
    \caption{An illustration for $N=4$ of our quantum algorithm to prepare a cluster state with local operations and two rounds of synchronous communication on a directed path.
   }\label{fig:qconstruction}
\end{figure}

\end{document}